\documentclass[a4paper,11pt]{article}
\linespread{1.3}
\usepackage{amsmath,a4wide}
\usepackage{amssymb}
\usepackage{amscd}
\usepackage{bm}
\usepackage{bbm}
\usepackage{epsfig}
\usepackage{enumitem}
\usepackage{graphics}
\usepackage{graphicx}
\usepackage{float}
\usepackage{epsfig}
\usepackage{amsmath}
\usepackage{graphics}
\usepackage{graphicx}
\usepackage{color}
\usepackage{epstopdf}
\usepackage{color}
\usepackage{booktabs}
\usepackage{subfigure}
\usepackage{ifpdf}
\usepackage{amssymb}
\usepackage{cite}
\usepackage{amssymb}
\usepackage{amsthm}
\usepackage{amsmath}
\usepackage[justification=centering]{caption}
\usepackage[]{natbib}
\usepackage[titletoc]{appendix}
\usepackage{verbatim}
\newtheorem{definition}{Definition}
\newtheorem{theorem}{Theorem}
\newtheorem{lemma}{Lemma}
\newtheorem{pro}{Proposition}
\newtheorem{cor}{Corollary}
\newtheorem{remark}{Remark}
\newtheorem{assumption}{Assumption}
\begin{document}
	
	\title{Long-range dependent mortality modeling with cointegration}
	
	\author{Mei Choi Chiu\thanks{Department of Mathematics \& Information Technology, The Education University of Hong Kong, Tai Po, N.T., Hong Kong. \newline({\tt mcchiu@eduhk.hk})}	
         \and Ling Wang\thanks{Corresponding author. China Institute for Actuarial Science, Central University of Finance and Economics, Beijing 102206, China.\newline({\tt lingwang@cufe.edu.cn})}
		\and Hoi Ying Wong\thanks{Department of Statistics, The Chinese University of Hong Kong, Shatin, N.T., Hong Kong. \newline({\tt hywong@cuhk.edu.hk})}
	}
	\date{\today}
	
	\maketitle \pagestyle{plain} \pagenumbering{arabic}
	\begin{abstract}
    Empirical studies with publicly available life tables identify long-range dependence (LRD) in national mortality data.  Although the longevity market is supposed to benchmark against the national force of mortality, insurers are more concerned about the forces of mortality associated with their own portfolios than the national ones. Recent advances on mortality modeling make use of fractional Brownian motion (fBm) to capture LRD. A theoretically flexible approach even considers mixed fBm (mfBm). Using Volterra processes, we prove that the direct use of mfBm encounters the identification problem so that insurers hardly detect the LRD effect from their portfolios. Cointegration techniques can effectively bring the LRD information within the national force of mortality to the mortality models for insurers' experienced portfolios.
    Under the open-loop equilibrium control framework, the explicit and unique equilibrium longevity hedging strategy is derived for cointegrated   {forces of mortality} with LRD. Using the derived hedging strategy, our numerical examples show that  the accuracy of estimating cointegration is crucial for hedging against the longevity exposure of insurers with LRD national   {force of mortality}.
	\end{abstract}
	\vspace{2mm} {\em Keywords}: Cointegration; Mortality model; Long-range dependence; Mean-variance hedging; Time-consistency. \\

	\section{Introduction}
	Increasing life expectancy makes longevity a key risk factor for insurers and pension funds. The introduction of longevity bonds \citep{BCD} and other longevity securitizations \citep{Dowd2006,Dawson2010,BB2010} is an attempt to create a life market for insurers to hedge against longevity risk. Many studies on longevity risk management are inspired by the assumption of a liquid longevity market \citep{Cairns2014, Biagini2017, Li2017}.
	
	Accurate mortality modeling is the building block for measuring and managing longevity risk. Although continuous-time Markovian mortality models \citep{Biffis} are widely adopted in the literature \citep{WoCW}, recent empirical analysis suggests that long-range dependence (LRD) exists in the mortality data across genders, ages \citep{Yaya2019}, and countries \citep{Yan2021, Yan2020}. \cite{Yan2021} reveal that ignoring the LRD property leads to   {underestimation of life expectancy  and therefore influences the measurement of longevity risk.}  \cite{WaCW} propose the Volterra mortality model (VMM) as an analytically tractable tool to capture and analyze the LRD property in national   {forces of mortality}.  When the fractional kernel is adopted in the VMM, the model has the Riemann-Liousville (RL) fractional Brownian motion (fBm) as the noise term. Alternatively, \cite{Zhou2022} propose a mortality model with a mixed fBm (mfBm) to capture LRD, where the noise term is a linear combination of a fBm and the standard Brownian motion.
 
	
    In reality, the insurer's experienced   {force of mortality}  is different from the national   {force of mortality}. However, there should be some dependence between the two   {forces of mortality}. Specifically, longevity securities that serve as hedging instruments, such as longevity swaps, are linked to a specific mortality index, which is different from the   {force of mortality}  experienced by the insurers and pension funds. The difference between the   {forces of mortality}  is recognized as basis risk in the literature. \cite{LH2011} and \cite{Coughlan2011} empirically study the influence of basis risk on hedging effectiveness. A simple means of overcoming basis risk is to set a fixed ratio between the two   {forces of mortality}. However, this relies on the ratio between the two rates being stable over time.
    
     The notion of cointegration, including correlation, is widely adopted to describe the joint movement of   {forces of mortality}. The concept of cointegration, as originated by  \cite{EG1987}, asserts that a linear combination of two or more non-stationary economic series sharing an equilibrium relation could either be stationary or have a lower degree of integration than the original series. Cointegration techniques have many applications in economics  \citep{MK2010}, and led to a Nobel Memorial Prize in Economic Sciences for Clive Granger in 2003. \cite{Alexander1999} points out that the concept of cointegration is essential for hedging. Furthermore, empirical evidence for the existence of cointegration in mortality modeling and multi-country longevity risks is found by \cite{NS2011} and \cite{YW2013}.

    Although the effects of cointegration between   {forces of mortality} on longevity hedging strategies has been found to be essential in the literature  \citep{WoCW,WoCW2017}, the longevity hedging strategies have not been investigated when the cointegration  is considered under the LRD mortality environment. Given the fact that the LRD feature exists in national mortality data and recent advances in LRD mortality modeling, this paper investigates the longevity hedging strategy by considering LRD and cointegration simultaneously. 

   The first contribution of this paper is proposing a tractable mortality model which is able to reflect the cointegration between   {forces of mortality} and LRD property simultaneously. A two-dimensional affine Volterra process is adopted for the modeling. The general dynamic of the mortality model embraces the Markovian mortality model with cointegration in \cite{WoCW,WoCW2017} as a special case. As the insurer's client pool should be a sample of the national population, the insurer's experienced   {force of mortality}  tends to move around the national   {force of mortality}, which is a situation of cointegration. {In addition, our LRD mortality model with cointegration can also be viewed as a combination of \cite{WaCW} and \cite{Zhou2022},   {as} the national   {force of mortality}  $\hat{\mu}_1$ in our model in \eqref{mu1} follows the dynamics suggested by the former while the   {force of mortality}  of the insurer's experienced portfolio $\hat{\mu}_2$ in \eqref{mu2} is consistent with the latter   {when one of the Hurst parameters in our model is 1/2.}
   
   {As the LRD models suggested by \cite{WaCW} and \cite{Zhou2022} have their own advantages, it is natural to ask which one is preferred in different scenarios. The second contribution is the development of a new theory (Proposition \ref{pro:Hurst}) on LRD identification problem with mfBm. As Brownian motion has the Hurst regularity of $H=1/2$ and fBm for LRD has $H>1/2$, we prove that a mixture of them is identified to have $H=1/2$. In other words, the LRD component can hardly be detected for the   {force of mortality}  driven by mfBm. 
   Our proposed model becomes a remedy for the identification problem because the cointegration framework brings empirically identified LRD information in the national   {force of mortality}  \citep{Yan2020, Yan2021, Yaya2019} to the mortality of an insurer's experienced portfolio. Without cointegration, it is highly uncertain if an insurer is able to precisely measure the LRD feature solely based on their in-house data set.}
   
   {Using longevity hedging as a pertinent example for the application of our model, our third contribution is the derivation of an analytical hedging strategy under the LRD mortality model with cointegration, where the $\mu_2$ is driven by mfBm. To the best of our knowledge, we are the first one proposing a tractable mortality model with mfBm in the literature. This analytical result enables us to conduct numerical analysis which shows advantages brought by cointegration for longevity hedging with LRD.}
   
   
  



   We also overcome some mathematical difficulties in the derivation of the longevity hedging strategy. When we adopt mean-variance hedging for its wide application in risk management, the variance operator within the objective function induces a time-inconsistency issue \citep{S}. The time-inconsistency issue means that the optimal hedging rule may depend on the initial value of the state process and become suboptimal at a later time point. To enforce time-consistency, \cite{BM} and \cite{BKM2017} introduce an equilibrium feedback strategy by formulating a game-theoretic framework.  This approach leads to an extended Hamilton--Jacobi--Bellman (HJB) equation {and a feedback control. However, the extended HJB framework and the feedback control mainly work for a Markovian system. Clearly, our problem encounters long-range dependence which is non-Markovian in nature so that the feedback control and closed-loop control frameworks cannot be presumed in advance}. In addition, the uniqueness of the equilibrium hedging strategy is difficult to study within the extended HJB framework \citep{BKM2017}.

In this paper, we adopt an open-loop equilibrium control framework \citep{HJZ2012, HJZ2017} based on the theory of backward stochastic differential equations (BSDE). The key advantage of the open-loop equilibrium approach is that it allows for non-Markovian systems. Furthermore, the open-loop framework allows the studies of the uniqueness of equilibrium controls \citep{HJZ2017, WCW2022}. 
This paper generalizes the open-loop control framework to the first time-consistent longevity hedging problem with simultaneous cointegration and LRD. Although a two-dimensional Volterra process is encountered in our problem, the explicit solutions of equilibrium strategies are provided under both deterministic and volatility-driven market prices of risks. Furthermore, we prove that the derived hedging strategies are unique under certain technical conditions. {We stress that our consideration is very different from \cite{WW2021}, who study the hedging of risk induced by $\mu_1$ using longevity bonds contingent on the same $\mu_1$. Whereas, we consider here that an insurer wants to hedge against her experienced mortality risk induced by $\mu_2$ using longevity bonds contingent on the national   {force of mortality}  $\mu_1$. This subtle difference requires a rather different and non-trivial mathematical treatment.} In the numerical studies, this paper further examine the effect of cointegration and correlation between   {forces of mortality} and LRD property.  {Our numerical result shows that the additional benefit brought by cointegration is significant for longevity hedging. }
    
The remainder of the paper is organized as follows. Section \ref{Sec:Problem} introduces the mortality model, which can reflect both the LRD property and the cointegration feature. { To motivate the model development, we also provide a new theory to show the LRD identification problem with mfBm in this section.} Section \ref{Sec:explicit} {defines the open-loop equilibrium hedging problem and} provides the explicit longevity hedging strategies under both deterministic and volatility-driven market prices of risk by solving the relevant BSDEs. In addition, we prove the uniqueness of the equilibrium hedging strategy. Section \ref{Sec:numerical} is devoted to numerical studies, with the aim of examining the impact of ignoring cointegration when the national   {force of mortality}  exhibits LRD. Section \ref{Sec:conclude} concludes the paper.

	\section{Problem formulation}
	\label{Sec:Problem}
	In a filtered complete probability space $(\Omega, \mathcal{F},\{\mathcal{F}\}_{t\in[0,T]},\mathbb{P})$, for any positive constant $q$ and positive integer $d$, we define the following:
	\begin{itemize}
	    \item[] $L^q_\mathcal{F}(\Omega; \mathbb{R}^d)$: the set of random variables $X: (\Omega, \mathcal{F})\rightarrow (\mathbb{R}^d, \mathcal{B}(\mathbb{R}^d))$ with $\mathbb{E}[|X|^q] < \infty$; 
		\item[]$S_\mathcal{F}^q(t,T;\mathbb{R}^d, \mathbb{P})$: the set of all $\{\mathcal{F}_s\}_{s\in[t,T]}$-adapted $\mathbb{R}^d$-valued stochastic processes $X(\cdot)=\{X(s): t\leq s\leq T\}$ with $\mathbb{E}[{\rm sup}_{t\leq s\leq T}|X(s)|^q]< \infty$;
		\item[]$L_\mathcal{F}^\infty(t,T;\mathbb{R}^d, \mathbb{P})$: the set of all essentially bounded
		$\{\mathcal{F}_s\}_{s\in[t,T]}$-adapted $\mathbb{R}^d$-valued stochastic processes;
		\item[] $L_\mathcal{F}^q(t,T;\mathbb{R}^d, \mathbb{P})$: the set of all
		$\{\mathcal{F}_s\}_{s\in[t,T]}$-adapted $\mathbb{R}^d$-valued stochastic processes with $\mathbb{E}\left[\int_{t}^{T}|X(s)|^qds\right]< \infty$;
		\item[] $H_\mathcal{F}^q(t,T;\mathbb{R}^d, \mathbb{P})$: the set of all $\{\mathcal{F}_s\}_{s\in[t,T]}$-adapted $\mathbb{R}^d$-valued stochastic processes $X(\cdot)=\{X(s): t\leq s\leq T\}$ with $\mathbb{E}\left[\left(\int_{t}^{T}|X(s)|^2ds\right)^{q/2}\right]<\infty$.
	\end{itemize}
	Throughout the paper, we regard the elements of $\mathbb{R}^d$ as column vectors, whereas the elements of the spaces $(\mathbb{R}^d)^*$ are regarded as row vectors.
	\subsection{The model}
	Let $(\Omega, \mathcal{F},\{\mathcal{F}_t\}_{t\in[0,T]},\mathbb{P})$ be a filtered probability space. Following the classical doubly stochastic mortality modeling literature, we write $\mathcal{F}_t=\mathcal{G}_t\vee\mathcal{H}_t$, where $\mathcal{H}_t$ represents the flow of information available over time, including the historical processes and the current states, and $\mathcal{G}_t$ contains information regarding whether an individual has died.
	
 Let $\mu(t) = (\mu_1(t), \mu_2(t))^\top \in {\mathcal H}_t$ be a pair of   {forces of mortality}, where we interpret $\mu_1$ as the national   {force of mortality}  and $\mu_2$ as the insurer's experienced   {force of mortality}. Therefore, $N_1(t+\Delta t) - N_1(t)$ is the number of death from the national population over the period of $[t, t+\Delta t)$ so that $N_1(t) \in {\mathcal G}_t$ follows the Poisson process with intensity $\mu_1$. A similar description applies to the insurer's client pool, see Section \ref{state-section}.
 
 As $\mu_1$ describes the whole population and $\mu_2$ a sample of the population, \cite{WoCW, WoCW2017} postulate that the former is the cointegrating factor of the pair. In other words, the process $\mu_1$ follows a mean reversion itself whereas the trend of $\mu_2$ depends on $\mu_1$. Consider a simple example from their papers. The pair of mortality processes is postulated as follows.
	\begin{eqnarray}\label{Markov}
	\mu_1(t)&=& \mu_1(0) + \int_0^t (b_1  -\theta_{11}\mu_1(s))ds + \int_0^t \sigma_1dW_1(s),\\
	\mu_2(t)&=& \mu_2(0) + \int_0^t (b_2  -\theta_{21}\mu_1(s)- \theta_{22}\mu_2(s))ds + \int_0^t \sigma_2(\rho dW_1(s) + \sqrt{1-\rho^2}dW_2(s)), \nonumber
	\end{eqnarray}
	where the parameters $b_1, b_2, \theta_{11}, \theta_{21}, \theta_{22}, \rho, \sigma_1$, and $\sigma_2$ are all constants, and $W_1$ and $W_2$ are independent Wiener processes.     {
    It is clear that the vector $(\mu_1, \mu_2)^\top$ is non-stationary.
    However, the linear combination  $\theta_{21}\mu_1(s) + (\theta_{22} - \theta_{11} )\mu_2(s) $ is a stationary process other than  $\mu_1(t)$, implying that $\mu_1(t)$  and $\mu_2(t)$ constitute a cointegration system \citep{C1999}. The use of a stochastic integral equation in \eqref{Markov} aims to facilitate comparison, as the non-Markovian nature of LRD often uses stochastic integral equations to describe the model. {Such a model setting was inspired by the discrete-time models in the literature \citep{Zhou2014, Dowd2011}. To facilitate the use of the longevity market, insurers must scale their experienced portfolio's mortality to match the national figures. For instance, the gravity model of \cite{Dowd2011} suggests regressing the insurer's experienced force of mortality on the national figures, constituting the form of \eqref{Markov}.}

    The empirical studies show strong evidence that national mortality data exhibits the LRD feature \citep{Yan2021, Yaya2019}.  To incorporate LRD into the national   {force of mortality},   { the VMM is established  as $\hat{\mu}_1(t) = m_1(t) + \mu_1(t)$, where $\hat{\mu}_1$ is the  force of mortality, $m_1(t)$ is  a deterministic function  capturing the main part of the force of mortality, and  $\mu_1(t)$ satisfies the  stochastic Volterra integral equation (SVIE):}
		\begin{equation}\label{Wang-mu1}
	\mu_1(t)= \mu_1(0) + \int_0^t K_{11}(t-s)(b_1(s) -\Theta_{11}\mu_1(s))ds + \int_0^t K_{11}(t-s)\sigma_1(\mu_1(s))dW_1(s),
	\end{equation}
where $K_{11}$ is the kernel function for capturing LRD in various different sense. {To fix idea, we use the fractional kernel: $K_{11}(t) = \frac{t^{H-\frac{1}{2}}}{\Gamma(H+\frac{1}{2})}$, where $H\in (0, 1)$ is the Hurst parameter. Once $H = \frac{1}{2}$, we have $K_{11} = 1$ and, hence, $\mu_1$ in \eqref{Wang-mu1} reduces to a classic Markovian mortality model \citep{Biffis}. When $H>\frac{1}{2}$, $\mu_1$ in \eqref{Wang-mu1} reflects the LRD feature. In fact, the noise term in \eqref{Wang-mu1} is an RL-fBm if $\sigma_1(\mu_1(s)) \equiv \sigma_1$ is a constant.}
\begin{remark}
      {LRD, which is  also called long memory or long-range persistence, refers to the strength of statistical dependence between lagged observations  in a time series.  Given a stationary time series process $Y_{1:T} :=(Y_1, Y_2, \cdots, Y_T)$ with $Y_{1:T} \in (\mathbb{N} \cup  \{0\})$, the condition for a long-range dependence process provided by \cite{Beran1994} is 
    \begin{align*}
    \lim_{n \rightarrow \infty} \sum_{j = -n}^n |\rho(j)| \rightarrow \infty, \text{ where } \rho(j) = \frac{{\rm Cov}(Y_t, Y_{t+j})}{\sqrt{{\rm Var}(Y_t) {\rm Var}(Y_{t+j})}}.
    \end{align*}}
    
   { For continuous-time processes,  fractional Brownian motion (fBM) is a typical process that exhibits  LRD and is an extension of the standard Brownian motion.  Fractional Brownian motion  $\left\{W_t^H ; t \in \mathbb{R}\right\}$ is the unique Gaussian process with mean zero and autocovariance function
$$
\mathbb{E}\left[W_t^H W_s^H\right]=\frac{1}{2}\left\{|t|^{2 H}+|s|^{2 H}-|t-s|^{2 H}\right\}
$$
where $H \in(0,1)$ is the Hurst parameter.  We can see that  the fractional Brownian motion  reduces to a standard Brownian motion  when $H = \frac{1}{2}$.    If $H \in (\frac{1}{2}, 1)$, increments are positively correlated, and $\sum_{n=1}^\infty\mathbb{E}[W^H_1 (W^H_{n+1} -  W^H_{n} )] = \infty$, i.e., the process  exhibits LRD. In addition, almost-all trajectories of fBM are locally H\"older's continuous of any order strictly less than $H$.} 

  {LRD has been found in mortality data by numerous empirical studies \citep{Yan2020, Yan2021}. In order to incorporate LRD into mortality modeling and maintain tractability at the same time, \cite{WaCW} first propose to model  the forces of mortality by affine Volterra processes, which  have the Riemann-Liousville (RL)  fBm as the noise term.  }
\end{remark}

While cointegration and LRD have been widely investigated in the literature separately, no existing mortality model is able to capture them simultaneously.  To do so, we extend \eqref{Wang-mu1} to a two-dimensional VMM with cointegration as follows.   {The force of mortality $\hat{\mu}_i(t) = m_i(t) + \mu_i(t)$, where $m_i(t), i = 1, 2$, are bounded deterministic functions and}
\begin{align}
\mu_1(t) & = \mu_1(0) + \int_{0}^{t}K_1(t-s)(b_1 - \theta_1 \mu_1(s))ds + \int_{0}^{t}K_1(t-s)\sigma_1(\mu_1(s))dW_1(s),\label{mu1}\\
\mu_2(t) & = \mu_2(0) + \int_{0}^{t}\beta_1 K_1(t-s)(b_1 - \theta_1 \mu_1(s))ds  + \int_0^tK_2(t-s)(b_2 - \beta_2 \mu_1(s) - \theta_2 \mu_2(s))ds \nonumber \\
& + \int_{0}^{t}\beta_1 K_1(t-s)\sigma_1(\mu_1(s))dW_1(s) + \int_0^t K_2(t-s)\sigma_2(\mu_2(s)) dW_2(s), \label{mu2}
\end{align}
where $K_1(t) = \frac{t^{H_1 -\frac{1}{2}}}{\Gamma(H_1+\frac{1}{2})}$ and $K_2(t) = \frac{t^{H_2 -\frac{1}{2}}}{\Gamma(H_2+\frac{1}{2})}$  with Hurst parameters $H_1, H_2 \in (0, 1 )$. In addition, $b_1, b_2, \beta_1, \beta_2, \theta_1$ and $\theta_2$ are positive constants.  The national   {force of mortality}  $\mu_1$ is an affine Volterra process that can reflect the LRD property. $\mu_2$ disappears from the dynamic of $\mu_1$ under our setting, which indicates that the index mortality for the longevity bond is not affected for a certain portfolio of the insurer. The dynamic of $\mu_2$ depends on $\mu_1$ with the values $\beta_1$ and $\beta_2$ determining the correlation and cointegration between $\mu_1$ and $\mu_2$.  When $K_1 = K_2 \equiv 1$, our model is equivalent to those in \cite{WoCW,WoCW2017}. Certainly, our framework is more general than theirs, as it captures both LRD and cointegration within   {forces of mortality} simultaneously. {Hence, our model can be served as the LRD version of the gravity model of \cite{Dowd2011} and is useful for insurers who would like to manage their longevity risk with the longevity market.} {In addition, when $H_2$ is set to 1/2 in \eqref{mu2}, then the noise term in the process of $\mu_2$ appears as a linear combination of a fBm and Brownian motion, and hence an mfBm, consistent to the one considered by \cite{Zhou2022}.}

Under the model setting \eqref{mu1} and \eqref{mu2}, $\mu = (\mu_1, \mu_2)^\top$ satisfies the following SVIE:
	\begin{equation}\label{mu0}
	\mu(t)= \mu(0) + \int_0^t K(t-s)(b(s) -\Theta\mu(s))ds + \int_0^t K(t-s)\sigma(\mu(s))dW(s),
	\end{equation}
 with $\mu(0) = (\mu_1(0), \mu_2(0))^\top$, $b(s) = (b_1, b_2)^\top$,  
 \begin{align*}\label{K}
K = \left( \begin{array}{cc}
\frac{t^{H_1 -\frac{1}{2}}}{\Gamma(H_1+\frac{1}{2})} & 0\\
\beta_1 \frac{t^{H_1 -\frac{1}{2}}}{\Gamma(H_1+\frac{1}{2})} & \frac{t^{H_2 -\frac{1}{2}}}{\Gamma(H_2+\frac{1}{2})}
\end{array} \right), ~ \Theta = \left(\begin{array}{cc}
\theta_1 & 0\\
\beta_2 & \theta_2
\end{array}\right), ~ 
\sigma(\mu(s)) = \left(\begin{array}{cc}
    \sigma_1(\mu_1(s)) & 0 \\
     0 & \sigma_2(\mu_2(s))
\end{array} \right)
\end{align*}
and $W = (W_1, W_2)^\top$  a two-dimensional standard Brownian motion under the physical measure $\mathbb{P}$. We assume that the $(\sigma_1(\mu_1))^2$ and $(\sigma_2(\mu_2))^2$ are linear in $\mu_1$ and $\mu_2$ respectively.  It follows that the covariance matrix $a(\mu) = \sigma(\mu)\sigma(\mu)^\top$ has an affine dependence on $\mu$, i.e., 
	\begin{align*}
	a(\mu) =  A^0 + A^1\mu_1 + A^2 \mu_2
	\end{align*}
	for some two-dimensional constant matrices $A^0$, $A^1$, and $A^2 \in \mathbb{R}^{2\times 2}$. For any row vector $u \in \mathbb{R}^2$, we define the row vector
	\[ A(u) = (uA^1u^\top, uA^2u^\top).\]
 In contrast with \cite{WaCW}, our mortality model \eqref{mu0} is a two-dimensional affine Volterra process, which enables us to incorporate cointegration between the two morality rates and the LRD property simultaneously.  The flexible model structure allows for different mortality models, such as the mfBm noise, as shown in Section \ref{Sec:explicit}. 

\subsection{Motivation and the identification issue of mfBm}
{To further motivate the use of cointegration in the LRD environment,  we develop a novel important Proposition \ref{pro:Hurst} to show the identification problem of mfBm modelling. We then show how cointegration can resolve the difficulty.

\begin{pro}\label{pro:Hurst}
For a fixed constant $T>0$, the $\mu_2$ in \eqref{mu2} is H\"older continuous on $[0, T]$ of any order strictly smaller than $\min\{H_1, H_2\}$. 
\end{pro}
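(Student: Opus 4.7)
The plan is to decompose $\mu_2(t)-\mu_2(u)$ into the four pieces corresponding to the two drift integrals and the two stochastic integrals in \eqref{mu2}, and to bound the $L^p$ norm of each piece in terms of $|t-u|$. As a preliminary step I would collect a priori moment bounds on the one-dimensional affine Volterra process $\mu_1$ solving \eqref{mu1}, namely $\sup_{0\le s\le T}\mathbb{E}[|\mu_1(s)|^p]<\infty$ for every $p\ge 2$, and, using the affine structure of $\sigma_2(\mu_2)^2$ together with an iteration against the weakly singular kernel $K_2$, the analogous bound for $\mu_2$. Such uniform moment estimates for affine Volterra processes with fractional kernel are standard and can be imported from \cite{WaCW}.

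The next ingredient is the classical kernel estimate
\begin{equation*}
\int_0^u \bigl(K_i(t-s)-K_i(u-s)\bigr)^{2}\,ds+\int_u^t K_i(t-s)^{2}\,ds\;\le\; C\,|t-u|^{2H_i},\qquad 0\le u\le t\le T,
\end{equation*}
for $i=1,2$, with $C$ depending only on $H_i$ and $T$. Combining this with the Burkholder--Davis--Gundy inequality and the affine bound on $\sigma_i(\mu_i)^2$ yields, for every $p\ge 2$,
\begin{equation*}
\mathbb{E}\Bigl[\Bigl|\int_0^t K_i(t-s)\sigma_i(\mu_i(s))dW_i(s)-\int_0^u K_i(u-s)\sigma_i(\mu_i(s))dW_i(s)\Bigr|^{p}\Bigr]\le C_p\,|t-u|^{pH_i}.
\end{equation*}
For the two drift integrals I would use Minkowski and H\"older in place of BDG with the sharper estimate $\int_0^u |K_i(t-s)-K_i(u-s)|\,ds+\int_u^t K_i(t-s)\,ds\le C|t-u|^{H_i+1/2}$, producing an exponent $H_i+1/2>H_i$ which is therefore dominated by the stochastic contribution.

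Summing the four bounds gives $\mathbb{E}[|\mu_2(t)-\mu_2(u)|^p]\le C_p\,|t-u|^{p\min\{H_1,H_2\}}$ for every $p\ge 2$, after which the Kolmogorov--Chentsov continuity criterion yields a version of $\mu_2$ whose sample paths are $\alpha$-H\"older on $[0,T]$ for every $\alpha<\min\{H_1,H_2\}-1/p$; letting $p\to\infty$ delivers the proposition. The main obstacle is the preliminary uniform moment estimate on $\mu_2$, since $\sigma_2$ enters through a square root that is only locally H\"older and the kernel $K_2$ is weakly singular when $H_2<1/2$; I would overcome this by iterating a generalised (singular) Gronwall inequality of the type available in the affine Volterra process literature, which converts pointwise bounds on $\mathbb{E}[|\mu_2(s)|^p]$ into a global one on $[0,T]$.
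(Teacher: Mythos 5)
Your proposal is correct and follows essentially the same route as the paper's proof: the same decomposition of $\mu_2(t)-\mu_2(u)$ into drift and stochastic pieces over $[u,t]$ and kernel-difference pieces over $[0,u]$, the same $L^2$ kernel estimates of order $|t-u|^{2H_i}$, BDG plus the affine structure of $\sigma_i(\mu_i)^2$, and the Kolmogorov continuity criterion with $p\to\infty$. Two minor points of divergence, neither affecting validity: the paper localizes with stopping times $\tau_n$ and passes to the limit instead of invoking the global moment bound on $\mu_2$ up front (though that bound is available, cf.\ Lemma \ref{lemma:E_mu}), and your claimed drift estimate $\int_0^u|K_i(t-s)-K_i(u-s)|\,ds\le C|t-u|^{H_i+1/2}$ should read $C|t-u|^{\min\{1,H_i+1/2\}}$ when $H_i>1/2$, which still dominates the stochastic contribution of order $|t-u|^{H_i}$.
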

\begin{proof}
    See Appendix \ref{appendix:Hurst}.
\end{proof}

  As $m_2$ is a deterministic function, Proposition \ref{pro:Hurst} asserts that the regularity of the insurer's experienced   {force of mortality $\hat{\mu}_2$} resembles the regularity of a process with the smallest Hurst parameter value between $H_1$ and $H_2$.  In other words, if we model $\mu_2$ by mfBm, then it is equivalent to setting $H_1>1/2$ for a RL-fBm and $H_2 = 1/2$ for the Brownian motion.  In such a situation, the insurer may fail to detect the LRD feature and has an impression that $H=1/2 = \min(H_1, H_2)$ or models driven by Brownian motion are sufficient for application purposes. Certainly, if  the insurer has the prior knowledge to model the noise term as mfBm, then parameter estimation is still possible as demonstrated empirically by \cite{Zhou2022}.

Using the publicly available national mortality tables, the LRD feature has been successfully identified in the literature \citep{Yan2020, Yan2021}  using statistical techniques without assuming mfBm. Combining this with Proposition \ref{pro:Hurst}, we are motivated to propose \eqref{mu0} for integrating the LRD feature estimated from the national mortality tables into $\mu_2$ through cointegration.

 \subsection{Useful mathematical property of the model}
	By Theorem 3.4 in \cite{AJ}, the SVIE \eqref{mu0} admits a continuous weak solution for any initial value $\mu(0) \in \mathbb{R}^2$. The following lemma\footnote{For two functions $F$ and $K$, the convolution $F*K$ is defined by
  $F*K(t)=\int_0^t F(s)K(t-s)ds.$ For a kernel $K$, the resolvent, or resolvent of the second kind of $K$, is the kernel $R$ such that $K*R = R*K = K-R$. For a fractional kernel $K(t) = c \frac{t^{\alpha-1}}{\Gamma(\alpha)}$ where $c$ and $\alpha \in (\frac{1}{2}, \frac{3}{2})$ are constants, its resolvent is given by $R(t) = ct^{\alpha -1}E_{\alpha, \alpha}(-ct^\alpha)$. Here, $E_{\alpha, \beta}(z) =  \sum_{n = 0}^\infty \frac{z^n}{\Gamma(\alpha n + \beta )}$ denotes the Mittag–Leffler function.} provides the Laplace--Fourier transform of $\mu$, which ensures the analytical tractability of the proposed mortality model \eqref{mu0}. 
	\begin{lemma}\label{lemma:expmu}
		\citep{AJ} Suppose that $\mu$ follows the SVIE \eqref{mu0} and a constant row vector $f \in (\mathbb{R}^2)^*$. For any $0\leq t \leq T$, assume that $\psi\in L^2([0, T],(\mathbb{R}^2)^*)$ is a solution to the Volterra--Riccati equation $\psi = (f - \psi\Theta + \frac{1}{2}A(\psi))*K$. Then:
		\begin{equation}\label{expmu}
		\mathbb{E}\left[\left.e^{\int_{0}^{T}f \mu(s)ds}\right|\mathcal{F}_t\right] = \exp(Y_t(T)), 
		\end{equation}
		where 
		\begin{align}\label{Y}
		\begin{split}
		Y_t(T) &= Y_0(T) + \int_{0}^{t}\psi(T-s)\sigma(\mu(s))dW(s) - \frac{1}{2}\int_{0}^{t}\psi(T-s)a(\mu(s))\psi(T-s)^\top ds,\\
		Y_0(T) &= \int_{0}^{T}\left[f\mu(0) + \psi(s)(b(s) - \Theta\mu(0)) + \frac{1}{2}\psi(s)a(\mu(0))\psi(s)^\top\right]ds.
		\end{split}
		\end{align}
		In addition, $Y$ has an alternative expression: 
		\begin{equation}\label{Y2}
		Y_t(T) = \int_{0}^{T}f \mathbb{E}\left[\mu(s)|\mathcal{F}_t\right]ds + \frac{1}{2}\int_{t}^{T}\psi(T-s)a(\mathbb{E}[\mu_s|\mathcal{F}_t])\psi(T- s)^\top ds, 
		\end{equation}
		where 
		\begin{equation}\label{expectation:mu}
		\mathbb{E}[\mu(T)|\mathcal{F}_t] = \left(I - \int_{0}^{T}R_\Theta(s)ds\right)\mu_0 + \int_{0}^{T}E_\Theta(T-s)b(s)ds + \int_{0}^{t}E_\Theta(T- s)\sigma(\mu)dW(s),
		\end{equation}
		where $R_\Theta$ is the resolvent of $K\Theta$ and $E_\Theta = K- R_\Theta*K$.
	\end{lemma}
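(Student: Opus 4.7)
The plan is to verify the exponential-affine identity by constructing a martingale with the desired terminal value. Specifically, I will set $M_t:=\exp(Y_t(T))$ and show two things: $M_T=\exp(\int_0^T f\mu(s)\,ds)$, and $M$ is a true $\mathbb{P}$-martingale. Taking conditional expectations at time $t$ then yields \eqref{expmu}.

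First I would verify the terminal value via the alternative representation \eqref{Y2}. At $t=T$ the stochastic integral range in \eqref{Y2} becomes trivial, and $\mathbb{E}[\mu(s)\mid \mathcal{F}_T]=\mu(s)$ for $s\le T$, so $Y_T(T)=\int_0^T f\mu(s)\,ds$. Next I would derive the semimartingale dynamics of $Y_t(T)$ directly from \eqref{Y2}. Using the martingale representation \eqref{expectation:mu} (where for fixed $s$, $t\mapsto \mathbb{E}[\mu(s)\mid \mathcal{F}_t]$ is a $\mathbb{P}$-martingale with explicit Brownian kernel $E_\Theta(s-\cdot)\sigma(\mu(\cdot))$), I would apply stochastic Fubini to the two time integrals in \eqref{Y2}. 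The key algebraic manipulation will invoke the Riccati--Volterra equation $\psi=(f-\psi\Theta+\tfrac12 A(\psi))\ast K$ to collapse the resulting convolutions: the resolvent identity $E_\Theta=K-R_\Theta\ast K$ combined with the Riccati equation lets one recognize that the drift emerging from the differentiation equals $-\tfrac12\psi(T-t)a(\mu(t))\psi(T-t)^\top\,dt$, and the diffusion equals $\psi(T-t)\sigma(\mu(t))\,dW(t)$. This recovers the dynamics in \eqref{Y} and simultaneously establishes the equivalence of \eqref{Y} and \eqref{Y2}.

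Third, Itô's formula applied to $M_t=\exp(Y_t(T))$ gives
\begin{equation*}
dM_t \;=\; M_t\,\psi(T-t)\sigma(\mu(t))\,dW(t)\;+\;M_t\Bigl[-\tfrac12\psi(T-t)a(\mu(t))\psi(T-t)^\top + \tfrac12\psi(T-t)a(\mu(t))\psi(T-t)^\top\Bigr]dt,
\end{equation*}
so the drift vanishes and $M$ is a local martingale. Finally I would promote $M$ to a true martingale by a uniform integrability / moment argument. Because the coefficients $\sigma_i^2(\mu_i)$ are affine in $\mu_i$, one gets polynomial moment bounds on $\mu$ over $[0,T]$ via Grönwall-type estimates for the SVIE \eqref{mu0}; combined with $\psi\in L^2([0,T],(\mathbb{R}^2)^\ast)$ this yields an $L^p$ bound on $\sup_{t\le T}M_t$ for some $p>1$, which suffices.

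The main obstacle is the algebraic step in paragraph two: unwinding the nested convolutions and showing that the Riccati--Volterra equation for $\psi$ exactly produces the drift $-\tfrac12\psi(T-s)a(\mu(s))\psi(T-s)^\top$ after stochastic Fubini and the resolvent identity $E_\Theta=K-R_\Theta\ast K$. This is delicate because the Volterra kernels $K_1,K_2$ are singular at zero (Hurst parameters possibly below $1/2$), so all Fubini interchanges must be justified in $L^2$ via the regularity conditions on $K$, and the cancellation relies on the quadratic term $\tfrac12 A(\psi)$ in the Riccati equation matching exactly the quadratic covariation contribution from $\langle Y(T)\rangle$.
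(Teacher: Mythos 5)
The paper does not prove this lemma at all: it is quoted verbatim from Abi Jaber, Larsson and Pulido (2019) (their Lemma 4.2 and Theorem 4.3), so there is no in-paper proof to compare against. Your sketch does reconstruct the strategy of the cited reference correctly in outline: check the terminal value $Y_T(T)=\int_0^T f\mu(s)\,ds$ from \eqref{Y2}, use the martingale representation \eqref{expectation:mu} of $t\mapsto\mathbb{E}[\mu(s)\mid\mathcal{F}_t]$ together with stochastic Fubini and the affine dependence of $a(\cdot)$ on $\mu$ to show that the $t$-dynamics of \eqref{Y2} coincide with \eqref{Y}, and then apply It\^o's formula so that the quadratic-variation term cancels the compensator and $\exp(Y_t(T))$ is a positive local martingale. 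That much is sound and is exactly how the reference proceeds.

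The genuine gap is your final step. You claim that affine variance coefficients give polynomial moment bounds on $\mu$, and that these together with $\psi\in L^2$ yield an $L^p$ bound on $\sup_{t\le T}\exp(Y_t(T))$ for some $p>1$. This does not follow: $\exp(Y_t(T))$ contains the exponential of the stochastic integral $\int_0^t\psi(T-s)\sigma(\mu(s))\,dW(s)$, and controlling its $L^p$ norm requires \emph{exponential} moments of $\int_0^T|\psi(T-s)\sigma(\mu(s))|^2\,ds$ (a Novikov-type condition), not polynomial moments of $\mu$. This is precisely why the cited theorem is stated conditionally ("if $\exp(Y)$ is a true martingale, then the transform formula holds") and why the present paper later imposes Assumption \ref{assume:expmu}, which demands $\mathbb{E}[\exp(C\int_0^{T_0}\mu_1(s)\,ds)]<\infty$, before using the lemma in the CIR setting. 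In the constant-volatility (Vasicek) case your argument can be repaired trivially, since the integrand $\psi(T-s)\sigma_\mu$ is deterministic and $Y$ is Gaussian; in the square-root case the true-martingale property is a substantive additional verification (handled in the reference by comparison and stopping arguments specific to the Volterra square-root process), and your proposal as written would fail there.
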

	
	\subsection{Longevity bond price}

     To mitigate the longevity risk associated with the experienced mortality portfolio, the insurance company  purchases the longevity bonds as the  hedging instrument, which is referenced to the national mortality $\mu_1$. And also, investment in a risk-free bond is allowed.  In this part, we present the setting of the financial market and provide the closed-form expressions for zero-coupon risk-free bond and longevity bond.

      {We assume  that  financial	 market	 and	 mortality rates	 are	 independent.}  Although empirical studies in literature \citep{CT2007} also document the LRD property of  the interest rate, we assume that the  interest rate $r(\cdot)$ a Markov affine process to focus on the effect of longevity risk. Specifically, $r(\cdot)$ follows the dynamic:
	\begin{equation}\label{interest}
	dr(t)=(b_r -  \theta_r r(t))dt +\sigma_r(r(t))dW_r(t),
	\end{equation}
	where $b_r$ and $\theta_r$ are two positive constants, and $W_r$ is a standard Brownian motion independent of $W$. Furthermore, the variance of the interest rate $\left(\sigma_r(r(t))\right)^2$ linearly depends on $r(t)$ with constant coefficients. Note that LRD of interest rate can be easily incorporated in our framework by modelling $r$ with an affine Volterra process.

	Let $\mathbb{Q}$ be the pricing measure equivalent to $\mathbb{P}$. The zero-coupon risk-free bond price is given by
    \begin{equation}\label{Bond-expectation}
    \mathcal{B}(t,T) = \mathbb{E}^{\mathbb{Q}}\left[\left.e^{-\int_{t}^{T}r(s)ds}\right|\mathcal{F}_t\right].
    \end{equation}
     The longevity bond is based on the national index.   {Suppose that $l(t)$ is  the  number of survivors from the national population at time $t$. Zero-coupon longevity bond $\mathcal{B}_L(t,T)$ pays $l(T)/l(t)$ at time $T$. As $\hat{\mu}_1$ is the force of mortality for the national population, we have $\frac{l(T)}{l(t)} = \exp\left( -\int_t^T\hat{\mu}_1(s)ds\right)$}.  As the financial market is independent of the   {force of mortality},  the zero-coupon longevity bond admits the following representation:
    \begin{equation}\label{Long-B-expectation}
     \mathcal{B}_L(t,T) =\mathbb{E}^{\mathbb{Q}}\left[\left.e^{-\int_{t}^{T}r(s)+   {\hat{\mu}_1(s)} ds}\right|\mathcal{F}_t\right] 
    =\mathcal{B}(t,T)\mathbb{E}^{\mathbb{Q}}\left[\left.e^{-\int_{t}^{T}  {\hat{\mu}_1(s)} 
 ds}\right|\mathcal{F}_t\right], 
     \end{equation}
	
	To derive the stochastic dynamics of both $\mathcal{B}(t,T)$ and $\mathcal{B}_L(t,T)$, we specify the pricing measure $\mathbb{Q}$ by using the measure change that is modeled by the Girsanov theorem. Let $\boldsymbol{W} \triangleq (W^\top, W_r)^\top$. The pricing measure $\mathbb{Q}$ is given by
	\begin{equation}
	\frac{d\mathbb{Q}}{d\mathbb{P}}= e^{\int_{0}^{t}\zeta(s)^\top d\boldsymbol{W}(s)-\frac{1}{2}|\zeta(s)|^2ds},\nonumber
	\end{equation} 
	where $\zeta(s) = (\varphi(s)^\top, \vartheta(s))^\top$ with $\varphi:[0, T] \rightarrow \mathbb{R}^2$, and $\vartheta: [0, T] \rightarrow \mathbb{R}$ satisfies Novikov's condition:
	\begin{equation}\label{Novikov}
	\mathbb{E}\left[e^{\frac{1}{2}\int_{0}^{T}\zeta(s)^\top \zeta(s)ds}\right] < \infty. 
	\end{equation}
    Under the pricing measure $\mathbb{Q}$, we have $$dW^{\mathbb{Q}}(t) = dW(t) - \varphi(t)dt, ~ dW_r^{\mathbb{Q}}(t) = dW_r(t) -\vartheta(t)dt.$$  
    Inspired by \cite{WaCW}, we apply an affine retaining transform so that the mortality and interest rates maintain an affine structure under the measure $\mathbb{Q}$, as in \eqref{mu0} and \eqref{interest}. Then, the zero-coupon bond price in \eqref{Bond-expectation} admits a closed-form expression by classical results from transforms of affine diffusion (see Appendix \ref{appendix:affine}). By It\^o's formula, the $\mathbb{P}$-dynamic of the bond price is given by
	\[d\mathcal{B}(t,T) = \mathcal{B}(t,T)(r(t)+\nu_{\mathcal{B}}(t))dt + \mathcal{B}(t,T)\sigma_b(t)dW_r(t),\]
	where  $\nu_{\mathcal{B}}(t) = \vartheta(t)\sigma_b(t)$ for some $\mathcal{F}_t$-adapted volatility rate $\sigma_b(t)$. Examples of $\sigma_b$ under specific settings are presented in Propositions \ref{pro:Vas} and \ref{Pro:CIR} later in this paper. With the affine structure maintained under measure $\mathbb{Q}$, Lemma \ref{lemma:expmu} shows that the zero-coupon longevity bond price in \eqref{Long-B-expectation} also admits a closed-form expression. By using Lemma \ref{lemma:expmu} and It\^o's formula, the longevity bond price $\mathcal{B}_L(t, T)$ satisfies the stochastic differential equation (SDE):
	\[d\mathcal{B}_L(t,T) = \mathcal{B}_L(t,T)(r(t) +   {\hat{\mu}_1(t)} 
+ \nu_{\mathcal{L}}(t))dt + \mathcal{B}_L(t,T)\sigma_l(t)^\top dW(t) +\mathcal{B}_L(t,T)\sigma_bdW_r(t),\]
	where $\nu_{\mathcal{L}}(t) = \nu_{\mathcal{B}}(t) + \varphi(t)^\top\sigma_l(t)$ for a certain $\mathcal{F}_t$-adapted process $\sigma_l(t)$ in $\mathbb{R}^2$.  Under concrete settings, the forms of $\sigma_l$ are derived in Proposition \ref{pro:Vas} and \ref{Pro:CIR} later in this paper. As an amount $\mathcal{B}_L(t,T)$ invested in the longevity bond at time $t$ becomes $e^{-\int_{t}^{\tau}\hat{\mu}_1(s)ds}\mathcal{B}_L(\tau,T)$ at $\tau >t$, the value of holding one unit of the zero-coupon longevity bond $\mathcal{B}_L(t)$ satisfies
	\begin{equation}\label{longevity}
	d\mathcal{B}_L(t,T) = \mathcal{B}_L(t,T)(r(t) + \nu_{\mathcal{L}}(t))dt + \mathcal{B}_L(t,T)\sigma_l(t)^\top dW(t) +\mathcal{B}_L(t,T)\sigma_bdW_r(t).
	\end{equation}
	The vector $\varphi(t)$ is the market price of mortality risk and $\vartheta(t)$ is the market price of the interest rate risk.  In fact, because the longevity bond pricing only involves the national   {force of mortality}  and is not related to the insurer's experienced mortality portfolio, we have $\sigma_l \triangleq (\sigma_{l1}, \sigma_{l2})^\top = (\sigma_{l1}, 0)^\top$.
	
\begin{remark}
   { The assumption that the financial market and mortality rate  are independent is often used in the literature (see, e.g., \cite{Biffis}) to simplify the modeling framework.  Actually, it is possible to relax this assumption in this paper.  If we consider that the interest rate and force of mortality are correlated, for instance, by considering that $W_r$ is  correlated with $W$, the valuation for the longevity bond would be complicated. In \eqref{Long-B-expectation}, we cannot separate the expectations; however, it is also possible to derive the  closed-form solution by forming the vector $(r, \mu_1)^\top$ as an affine Volterra process and applying  Lemma \ref{lemma:expmu}. }
\end{remark}

    \section{The longevity hedging problem}\label{state-section}
    Consider an insurance company that provides annuities/pension scheme and life insurances to the insurance market. While the insurance company aims to manage the risk generated from $  {\hat{\mu}_2} $, the available hedging vehicle (i.e. the longevity bond) is based on $  {\hat{\mu}_1}$. This further complicates the mathematical treatment of the hedging problem. We use a Poisson process $N_2(t)$ with a stochastic intensity $c_1  {\hat{\mu}_1(s)}$ where $c_1 \in \mathbb{R}_+$  to denote the frequency of life insurance claims experienced by the company. 
    Suppose that $c_2\in \mathbb{R}_+$ is the amount of the pension/annuity   {payment} that the insurer must pay at the initial time. At time $t$, the survival proportion of the population becomes $c_2e^{-\int_{0}^{t}  {\hat{\mu}_2(s)} ds}$. Therefore, the pension annuity net cash outflow is represented by $\pi(t) =  c_2e^{-\int_{0}^{t}  {\hat{\mu}_2(s)} ds}$ in this paper. The insurer's hedging horizon is $T_0$, and the company uses the longevity bond with a maturity time $T> T_0$ as the hedging instrument.
    
	Let $u_0(t)$, $u_1(t)$, and $u_2(t)$ denote the amounts of investment in a bank account, zero-coupon longevity bond, and zero-coupon risk-free bond, respectively. The wealth process of the insurer is
	\begin{equation}
	\bar{X}(t)=u_0(t)+ u_1(t)+ u_2(t)- \sum_{i=1}^{N_2(t)}z_i-\Pi(t),~ t\in[0,T_0],
	\end{equation}
	where $\Pi(t) = \int_{0}^{t}\pi(s)ds$,  and $\{z_i\}_{i=1}^{\infty}$ denotes the positive independent and identically distributed (iid) insurance claims. Note that the insurer's pension annuity net cash outflow and insurance claims depend on $  {\hat{\mu}_2}$, the insurer's experienced   {force of mortality}.   {In this paper, we consider that the insurer is only looking at one specific cohort  of individuals. In other words,  the individuals are all the same age at time $t$.} 
	
	We further postulate that the insurance claims $z$ have the  probability density function $g(z)$ on $\mathbb{R}_+$ with finite moments $\mathbb{E}[z], \mathbb{E}[z^2]$ (i.e., $\mathbb{E}[z^2] = \int_{\mathbb{R}_+}z^2g(z)dz< \infty$). For convenience, we use a Poisson random measure to denote the compound Poisson process. Specifically,  $\sum_{i=1}^{N_2(t)}z_i = \int_{0}^{t}\int_{\mathbb{R}_+}zN_2(ds,dz)$, where $N_2(dt , dz)$ is a Poisson random measure on $\Omega\times[0,T_0]\times\mathbb{R}_+$ with the compensator $c_1  {\hat{\mu}_2(t)} g(z)dzdt \triangleq \delta(dz)dt$. We use $F^2(t, T; \mathbb{R})$ to denote the set of all $\{\mathcal{F}_s\}_{s \in[t, T]}$-predictable processes $X(\cdot, \cdot): \Omega\times[t, T]\times\mathbb{R}_+\rightarrow \mathbb{R}$ such that $\mathbb{E}\left[\int_{[t, T]\times\mathbb{R}_+}\|X(s, \cdot)\|^2_{L^2}ds\right] < \infty$, where $\|X(s, z)\|^2_{L^2}$ := $\int_{\mathbb{R}_+}|X(s, z)|^2\delta(dz)$. 
	By setting $\widetilde{N}(dt,dz) = N_2(dt,dz)-\delta(dz)dt$, the insurer's wealth $\bar{X}(t)$ satisfies
	\begin{equation}\label{wealth}
	d\bar{X}(t)= (\bar{X}(t)r(t)+\nu(t)^\top u(t)-\pi(t) - \mathbb{E}[z]c_1  {\hat{\mu}_2(t)})dt + u(t)^\top\sigma_S(t)^\top d\boldsymbol{W}(t)-\int_{\mathbb{R}_+} z\widetilde{N}(dt, dz),
	\end{equation}
	where   {$u(t) = (u_1(t), u_2(t))^\top$}, $\nu(t)=(\nu_{\mathcal{L}}, \nu_{\mathcal{B}})^\top$, and
\begin{equation*}
    \sigma_S(t)^\top = \left(\begin{array}{ccc}
 \sigma_{l1}(t)  & 0 &  \sigma_b(t) \\
    0 & 0 & \sigma_b(t)
\end{array}  \right).
\end{equation*}
	
	In this paper, we study the problem with respect to the discounted wealth process $X \triangleq e^{-\int_{0}^{t}r(s)ds}\bar{X}(t)$. The objective of an insurer is to minimize
	\begin{align}\label{objective}
	\begin{split}
	J(t, X; u(\cdot)) & = \frac{1}{2}{\rm Var}_t(X(T_0)) - \lambda\mathbb{E}_t[X(T_0)] \\
	& = \frac{1}{2}\mathbb{E}_t[X(T_0)^2] - \frac{1}{2}\mathbb{E}_t^2[X(T_0)] - \lambda\mathbb{E}_t[X(T_0)]
	\end{split}
	\end{align}
	by an admissible control $u$ for $X(t)$ at time $t$, where $\mathbb{E}_t[\cdot] = \mathbb{E}[\cdot|\mathcal{F}_t]$ and the parameter $\lambda$ reflects the insurer's risk aversion level. By \eqref{wealth}, 
	\begin{align}
	\begin{split}\label{Xtilde}
	dX(t) = e^{-\int_{0}^{t}r(s)ds}\left(\nu(t)^\top u(t)-\pi(t) - \mathbb{E}[z]c_1  {\hat{\mu}_2(t)} \right)dt
       + e^{-\int_{0}^{t}r(s)ds}u(t)^\top\sigma_S(t)^\top d\boldsymbol{W}(t)\\ -\int_{\mathbb{R}_+}ze^{-\int_{0}^{t}r(s)ds}\widetilde{N}(dt, dz). 
	\end{split}
	\end{align}

\begin{definition}\label{def1}
$u(\cdot) $ is called an admissible control if $u(\cdot) \in H_\mathcal{F}^2(0,T_0;\mathbb{R}^2, \mathbb{P})$, and if Equation \eqref{Xtilde} admits a unique strong solution $X \in S_\mathcal{F}^2(0,T_0;\mathbb{R}, \mathbb{P})$. 
\end{definition}
\begin{lemma}\label{lemma:E_mu}
Let $\mu$ be the continuous solution to \eqref{mu0}. For any constant $p \geq 2$, there exists a constant $C$ such that
\[\sup_{t\leq T}\mathbb{E}[|\mu_t|^p] \leq C.  \]
\end{lemma}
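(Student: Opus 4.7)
The plan is to establish the uniform $L^p$ bound by closing a generalized Gronwall-type inequality adapted to the fractional kernel $K$, in the spirit of the moment estimate (Lemma~3.1) in \cite{AJ} extended to our two-dimensional cointegrated setting. Set $\phi(t) := \mathbb{E}|\mu(t)|^p$; existence of a continuous weak solution from Theorem 3.4 of \cite{AJ} ensures $\phi(t)$ is finite for each $t$, so only a uniform bound on $[0,T]$ is in question. The decomposition I would use is $\mu(t) = \mu(0) + D(t) + M(t)$ where $D(t) = \int_0^t K(t-s)(b(s)-\Theta\mu(s))\,ds$ and $M(t) = \int_0^t K(t-s)\sigma(\mu(s))\,dW(s)$, so that $|\mu(t)|^p \leq 3^{p-1}(|\mu(0)|^p + |D(t)|^p + |M(t)|^p)$.

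For the drift term, I would apply Cauchy--Schwarz in $s$ and then Jensen's inequality (valid since $p\geq 2$), using that each component of $K(t-\cdot)$ lies in $L^2([0,T])$ whenever $H_1,H_2\in(0,1)$. This produces $\mathbb{E}|D(t)|^p \leq C_T\bigl(1 + \int_0^t \phi(s)\,ds\bigr)$ with $C_T$ independent of $t$. For the stochastic term, I would fix $t$ and observe that $u\mapsto \int_0^u K(t-s)\sigma(\mu(s))\,dW(s)$ is a genuine martingale in $u\in[0,t]$, so the Burkholder--Davis--Gundy inequality applies at $u=t$ to yield $\mathbb{E}|M(t)|^p \leq C_p\,\mathbb{E}\bigl(\int_0^t K(t-s)^2|\sigma(\mu(s))|^2\,ds\bigr)^{p/2}$. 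The affine covariance structure $a(\mu) = A^0 + A^1\mu_1 + A^2\mu_2$ gives $|\sigma(\mu)|^2 \leq C(1+|\mu|)$, and then Jensen's inequality relative to the probability measure $K^2(t-s)\,ds/\|K^2\|_{L^1([0,t])}$ separates kernel from moments and produces $\mathbb{E}|M(t)|^p \leq C + C\int_0^t K(t-s)^2\,\mathbb{E}|\mu(s)|^{p/2}\,ds$.

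Combining these bounds and using the trivial estimate $\mathbb{E}|\mu(s)|^{p/2}\leq 1+\mathbb{E}|\mu(s)|^p$, I obtain an inequality of the Volterra form $\phi(t) \leq C_1 + C_2\int_0^t \bigl(1+K(t-s)^2\bigr)\phi(s)\,ds$. Since $H_1,H_2>0$ implies $K^2 \in L^1_{\mathrm{loc}}(\mathbb{R}_+)$, the generalized (fractional) Gronwall lemma applies and yields $\sup_{t\leq T}\phi(t)\leq C$, as required.

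The main subtlety is the treatment of the diffusion term: one must keep the singular kernel $K^2$ inside the time integral throughout the BDG and Jensen manipulations, so that the resulting inequality remains a fractional Volterra integral inequality compatible with generalized Gronwall. A crude application of Minkowski's or Hölder's inequality pulling $K$ out up front would either blow up the constant or eliminate the linearity in $\phi(s)$ needed to close the argument. Apart from this careful bookkeeping, the estimate is routine; indeed, the statement may be viewed as a direct specialization of \cite{AJ}'s moment bound to the kernel and affine covariance specified in \eqref{mu0}.
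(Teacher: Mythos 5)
Your argument is essentially a reconstruction of the proof of the result the paper simply cites: the paper's entire proof of this lemma is the one-line appeal to Lemma~3.1 of \cite{AJ} applied to \eqref{mu0}, and your drift/martingale decomposition, the BDG--Jensen treatment of the stochastic convolution keeping $K^2$ inside the time integral, and the closure via the generalized Gronwall lemma for Volterra inequalities with an $L^1_{\mathrm{loc}}$ kernel is exactly how that cited lemma is proved. Your bookkeeping of those estimates is correct, including the observation that one must not pull the singular kernel out of the integral before applying Gronwall.

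There is, however, one genuine gap at the very first step: you assert that existence of a continuous weak solution ``ensures $\phi(t)=\mathbb{E}|\mu(t)|^p$ is finite for each $t$.'' Path continuity gives almost-sure finiteness of $|\mu(t)|$, not finiteness (let alone local boundedness) of its $p$-th moment, and the Volterra--Gronwall lemma you invoke requires $\phi$ to be finite and locally bounded a priori; applied to a possibly infinite $\phi$ the inequality $\phi(t)\leq C_1+C_2\int_0^t(1+K(t-s)^2)\phi(s)\,ds$ yields nothing. The same issue makes the stochastic convolution only a local martingale rather than a true one. The standard repair, which is what \cite{AJ} do and what this paper itself does in its proof of Proposition~\ref{pro:Hurst} in Appendix~\ref{appendix:Hurst}, is to localize: introduce stopping times $\tau_n=\inf\{t:|\mu(t)|>n\}$ and work with the process whose coefficients carry the indicator $\mathbbm{1}_{\{s<\tau_n\}}$ inside the convolution integrals (one cannot simply stop $\mu$ itself, because $\mu(t\wedge\tau_n)$ does not solve a Volterra equation). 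For the truncated process all moments are finite, your estimates go through with constants independent of $n$, Gronwall closes, and Fatou's lemma as $n\to\infty$ delivers the stated bound. With that localization inserted, your proof is complete and coincides with the argument behind the paper's citation.
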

\begin{proof}
This is a direct consequence of Lemma 3.1 in \cite{AJ} and \eqref{mu0}.
\end{proof}
Under the interest rate model setting in \eqref{interest}, we prove the following Lemma \ref{Lemma:interest} concerning the boundedness property of $r(\cdot)$ in Appendix \ref{appendix:affine}. The general result for the admissibility of the equilibrium control is presented in Theorem \ref{Thm:admissible}.
\begin{lemma}\label{Lemma:interest}
$\sup_{0\leq s \leq T_0}\mathbb{E}\left[e^{-\int_0^s 4r(v)dv}\right] < \infty$ and $\sup_{0\leq s \leq T_0}\mathbb{E}\left[e^{\int_0^s qr(v)dv}\right] < \infty$ for any $q\geq 0$. 
\end{lemma}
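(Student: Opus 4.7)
The plan is to exploit the affine structure of the interest rate process $r$ and apply the exponential–affine transform formula, reducing both inequalities to checking that the associated scalar Riccati ODEs have solutions that are bounded on the finite horizon $[0,T_0]$. Write $\sigma_r(r)^2=\sigma_{r,0}+\sigma_{r,1}r$ with non-negative constants. Then for any constant $\gamma\in\mathbb{R}$ for which the relevant Riccati ODE is globally defined on $[0,T_0]$, the standard affine transform (a scalar version of Lemma \ref{lemma:expmu}, derived independently in Appendix \ref{appendix:affine}) gives
\begin{equation*}
\mathbb{E}\!\left[e^{\gamma\int_{0}^{s}r(v)dv}\right]=\exp\!\bigl(A_\gamma(s)+B_\gamma(s)r(0)\bigr),\quad 0\le s\le T_0,
\end{equation*}
where $B_\gamma$ solves $B'_\gamma(s)=\gamma-\theta_r B_\gamma(s)+\tfrac12\sigma_{r,1}B_\gamma(s)^2$ with $B_\gamma(0)=0$, and $A_\gamma(s)=\int_{0}^{s}\!\bigl(b_r B_\gamma(u)+\tfrac12\sigma_{r,0}B_\gamma(u)^2\bigr)du$.

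For the first bound I would take $\gamma=-4$. Because the forcing term is negative, the Riccati ODE drives $B_{-4}$ into the negative half-line and confines it between $0$ and the (finite, negative) equilibrium root of the right-hand side; consequently $B_{-4}$ and $A_{-4}$ are continuous and bounded on the compact interval $[0,T_0]$, which yields a uniform bound on the exponential. (As a sanity check, in the pure CIR specification $\sigma_{r,0}=0$ the non-negativity of $r$ alone dominates the exponential by $1$.)

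For the second bound with $\gamma=q\ge 0$ I would separate the two standard specifications explicitly used in Propositions \ref{pro:Vas} and \ref{Pro:CIR}. In the Vasicek case ($\sigma_{r,1}=0$) the ODE is linear, yielding $B_q(s)=(q/\theta_r)(1-e^{-\theta_r s})$, and both $B_q,A_q$ are bounded on $[0,T_0]$ for every $q\ge 0$ — equivalently, $\int_{0}^{s}r(v)dv$ is Gaussian with mean and variance continuous in $s$, so every positive exponential moment is finite uniformly on $[0,T_0]$. In the CIR case ($\sigma_{r,0}=0$), the quadratic Riccati equation admits a globally bounded solution on $[0,T_0]$ whenever $q$ lies below the explosion threshold dictated by the discriminant $\theta_r^2-2\sigma_{r,1}q$, and in this regime the same exponential–affine bound applies. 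The final supremum over $s\in[0,T_0]$ is then immediate from continuity of $A_\gamma,B_\gamma$.

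The main obstacle I anticipate is the CIR regime, where for large $q$ the Riccati ODE can blow up in finite time; handling it requires tracking the explosion time carefully in terms of $(\theta_r,\sigma_{r,1},q,T_0)$ and using the Feller boundary behaviour of $r$, rather than naive Gaussian-style moment bounds. The first bound is easy in comparison, since negativity of the forcing precludes any explosion and in the strictly CIR sub-case the trivial pathwise bound $e^{-4\int_0^s r(v)dv}\le 1$ already suffices.
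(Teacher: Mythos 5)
Your proposal takes essentially the same route as the paper: the paper's entire proof of this lemma is the exponential--affine transform of Appendix \ref{appendix:affine} (the Duffie--Pan--Singleton formula), which reduces both bounds to the boundedness of the Riccati solutions $d_0,d_1$ on the compact horizon $[0,T_0]$, exactly as you set it up. The one point worth recording is that the obstacle you flag in the CIR regime is genuine and is not resolved by the paper either: for $q$ above the threshold $\theta_r^2/(2\sigma_{r,1})$ the Riccati solution explodes before a fixed $T_0$ and the positive exponential moment is infinite, so the second claim cannot hold for \emph{all} $q\ge 0$ under a square-root volatility; your Gaussian argument settles every $q\ge 0$ only in the Vasicek specification, and your pathwise bound $e^{-4\int_0^s r(v)dv}\le 1$ handles the first claim when $r\ge 0$. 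In short, your write-up is at least as careful as the paper's own treatment, and the caveat you raise is a defect of the statement's generality rather than of your argument.
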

\begin{theorem}\label{Thm:admissible}
Suppose that $\pi \in L_\mathcal{F}^4(0, T_0; \mathbb{R}, \mathbb{P})$. 
If $u \in H^2_\mathcal{F}(0, T_0; \mathbb{R}^2, \mathbb{P})$ such that $\nu^\top u\in L_\mathcal{F}^4(0, T_0;  \mathbb{R}, \mathbb{P})$, and $\sigma_S u \in L_\mathcal{F}^4(0, T_0;  \mathbb{R}^3, \mathbb{P})$, then $u$ is an admissible control.
\end{theorem}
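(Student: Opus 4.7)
The plan is to reduce admissibility to showing $X \in S^2_\mathcal{F}(0,T_0;\mathbb{R},\mathbb{P})$, since the coefficients in \eqref{Xtilde} do not depend on $X$ itself. Consequently, $X$ is obtained explicitly by integrating the right-hand side from $0$ to $t$, so existence and pathwise uniqueness are automatic; the entire content of the theorem is the $L^2$ bound on $\sup_{t\le T_0}|X(t)|$. I would split $X(t)-X(0)$ into three pieces $I_1(t), I_2(t), I_3(t)$ corresponding to the drift, the Brownian integral, and the compensated Poisson integral, and bound each separately.

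For the drift term $I_1(t)=\int_0^t e^{-\int_0^s r(v)dv}\bigl(\nu(s)^\top u(s)-\pi(s)-\mathbb{E}[z]c_1\hat\mu_2(s)\bigr)ds$, I would first apply Cauchy--Schwarz in $s$ to get
\[
\sup_{t\le T_0}|I_1(t)|^2\le T_0\int_0^{T_0} e^{-2\int_0^s r(v)dv}\bigl|\nu^\top u-\pi-\mathbb{E}[z]c_1\hat\mu_2\bigr|^2\,ds,
\]
then take expectations, apply Fubini, and use Cauchy--Schwarz in the probability space to dominate each term by $(\mathbb{E}[e^{-4\int_0^s r\,dv}])^{1/2}$ times $(\mathbb{E}[|\cdot|^4])^{1/2}$. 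The first factor is uniformly bounded by Lemma~\ref{Lemma:interest}, while the second factor is integrable in $s$ by the $L^4$ hypothesis on $\nu^\top u$ and $\pi$ together with Lemma~\ref{lemma:E_mu} applied to $\hat\mu_2=m_2+\mu_2$ (recall $m_2$ is bounded deterministic). For the Brownian integral $I_2$, I would invoke the Burkholder--Davis--Gundy inequality to replace $\mathbb{E}[\sup_{t\le T_0}|I_2(t)|^2]$ by $C\,\mathbb{E}\!\int_0^{T_0} e^{-2\int_0^s r\,dv}|\sigma_S(s)u(s)|^2ds$ and then apply the same Cauchy--Schwarz splitting together with the $L^4$ hypothesis on $\sigma_S u$ and Lemma~\ref{Lemma:interest}.

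For the jump term $I_3$, I would use the BDG/Kunita inequality for compensated Poisson random measures:
\[
\mathbb{E}\!\left[\sup_{t\le T_0}|I_3(t)|^2\right]\le C\,\mathbb{E}\!\int_0^{T_0}\!\!\int_{\mathbb{R}_+} z^2 e^{-2\int_0^s r\,dv}c_1\hat\mu_2(s)g(z)\,dz\,ds=C c_1 \mathbb{E}[z^2]\int_0^{T_0}\mathbb{E}\!\left[e^{-2\int_0^s r\,dv}\hat\mu_2(s)\right]ds,
\]
which is finite by $\mathbb{E}[z^2]<\infty$, Cauchy--Schwarz, Lemma~\ref{Lemma:interest} and Lemma~\ref{lemma:E_mu}. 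Combining the three bounds with the triangle inequality in $L^2(\Omega;C[0,T_0])$ yields $\mathbb{E}[\sup_{t\le T_0}|X(t)|^2]<\infty$, so $X\in S^2_\mathcal{F}$, and admissibility follows from Definition~\ref{def1}.

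The only mildly delicate point, and the step I would flag as the main obstacle, is keeping track of the discount factor $e^{-\int_0^s r(v)dv}$ throughout: because $r$ is not assumed bounded below, one cannot simply dominate this factor by a constant, which is precisely why Lemma~\ref{Lemma:interest} provides an $L^4$ moment rather than an $L^\infty$ bound, and why the hypotheses on $\nu^\top u$, $\pi$ and $\sigma_S u$ are in $L^4$ rather than $L^2$. Everything else amounts to standard BDG estimates and two applications of Cauchy--Schwarz in time and in $\Omega$.
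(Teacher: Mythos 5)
Your proposal is correct and follows essentially the same route as the paper's own proof: the same three-way decomposition of $X$, the same H\"older/Cauchy--Schwarz pairing of the $L^4$ moment of the discount factor (Lemma~\ref{Lemma:interest}) with the $L^4$ hypotheses on $\nu^\top u$, $\pi$, $\sigma_S u$ and the moment bound on $\hat\mu_2$ from Lemma~\ref{lemma:E_mu}, and the same BDG estimates for the Brownian and compensated Poisson integrals. The only cosmetic difference is that the paper factors $\mathbb{E}[e^{-2\int_0^s r\,dv}\hat\mu_2(s)]$ as a product of expectations using the independence of $r$ and $\mu$, where you use Cauchy--Schwarz; both are valid here.
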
 
\begin{proof}
    See Appendix \ref{Appendix:Thm1}.
\end{proof}
The variance term of the objective in \eqref{objective} causes a time-inconsistency issue, which means that the optimal hedging strategy derived at the initial time  is dependent  on the initial wealth level and would be sub-optimal at a later time point,  as widely documented in the literature. To overcome the time-inconsistency issue, we adopt the well-received open-loop framework \cite{HJZ2012, HJZ2017} to investigate equilibrium controls. 
\begin{definition}\label{def2}
Let $u^*(\cdot) \in H_\mathcal{F}^2(0,T_0;\mathbb{R}^2, \mathbb{P})$ be a given control and $X^*(\cdot)$ be the corresponding state process. Then, the control $u^*$ is called an equilibrium strategy if for any $t\in [0,T_0]$ and $\epsilon>0$, $\eta \in H_{\mathcal{F}_t}^2(\Omega;\mathbb{R}^2)$, 
\begin{equation}\label{eq:def}
\mathop{ \liminf}\limits_{\epsilon \downarrow 0} \frac{J(t, X^*(t), u^{t,\epsilon, \eta}(\cdot)) - J(t, X^*(t), u^*(\cdot))}{\epsilon} \geq 0,
\end{equation}
where $ u^{t,\epsilon, \eta}(\cdot) = u^*(\cdot) + \eta\boldsymbol{1}_{[t, t+\epsilon)}(\cdot)$.
\end{definition}

The control defined in Definition \ref{def2} is known as an open-loop control form because the control $u^{t,\epsilon, \eta}(\cdot)$ is not presumed to depend on the state process $X$. Instead, it depends on any adapted process $\eta\in H_{\mathcal{F}_t}^2(\Omega;\mathbb{R}^2)$ in the sense of spike variation, embracing the possibility of feedback and closed-loop control forms. This open-loop control definition is necessary for our problem because the state process has LRD (non-Markovian) coefficients. The equilibrium control may depend on the entire historical paths of the coefficients and the state process.

\subsection{Preliminary results}
To derive the equilibrium strategy under Definition \ref{def2}, we adopt the BSDEs approach in \cite{HJZ2012}. 
For any $ t \in [0,T_0]$, we define the adjoint process $ (p^*(s; t), k^*_1(s; t), k^*_2(s, z; t)) \in S_\mathcal{F}^2(t,T_0;\mathbb{R}, \mathbb{P}) \times L_\mathcal{F}^2(t,T_0;\mathbb{R}^3, \mathbb{P}) \times F^2(t,T_0;\mathbb{R})$ in the interval $[t, T_0]$ to be the solution of the following BSDE, with the discount wealth process $X^* \in S_\mathcal{F}^2(0,T_0;\mathbb{R}, \mathbb{P})$ under $u^*$: 
\begin{align}\label{BSDE:p}
\left\{
\begin{array}{lr}
dp^*(s; t) = k_1^*(s;t)^\top d\boldsymbol{W}(s) + \int_{\mathbb{R}_+} k_2^*(s,z;t)\widetilde{N}(ds,dz),\\
p^*(T_0; t) = X^*(T_0) - \mathbb{E}_t[X^*(T_0)] - \lambda. 	\end{array}
\right.
\end{align}
With the adjoint process $ (p^*(s; t), k^*_1(s; t), k^*_2(s, z; t))$ defined above, we have the following theorem to estimation the term $J(t, X^*(t), u^{t, \epsilon,\eta}) -  J(t, X^*(t), u^*)$ in \eqref{eq:def} in Definition \ref{def2}. 
\begin{theorem}\label{Thm:spike}
	For any $t \in [0,T_0]$, $\epsilon>0$ and $\eta \in H_\mathcal{F}^2(\Omega;\mathbb{R}^2)$, let $u^*(\cdot) \in H_\mathcal{F}^2(0,T_0;\mathbb{R}^2, \mathbb{P})$ be a given control and $u^{t,\epsilon, \eta}(\cdot)$ be as defined in Definition \ref{def2}. If $\nu(\cdot) \in H_\mathcal{F}^2(0,T_0;\mathbb{R}^2, \mathbb{P})$ and each component of $\sigma_S^\top\sigma_S$ belongs to $H_\mathcal{F}^2(0, T_0; \mathbb{R}, \mathbb{P})$, then
	\[J(t, X^*(t), u^{t, \epsilon,\eta}) -  J(t, X^*(t), u^*) = \mathbb{E}_t\int_{t}^{t + \epsilon} \left[\langle \Lambda(s; t), \eta_s \rangle +\frac{1}{2} \langle \Phi(s) \eta_s, \eta_s \rangle\right] ds + o(\epsilon),\]
	where
	\begin{equation}\label{Lamtilde}
     \Lambda(s; t) = e^{-\int_{0}^{s}r(v)dv}\nu(s)p^*(s;t) + e^{-\int_{0}^{s}r(v)dv}\sigma_S(s)^\top k_1^*(s; t),
	\end{equation}
	and $\Phi(s) =  e^{-\int_{0}^{s}2r(v)dv}$.
\end{theorem}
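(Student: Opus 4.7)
The plan is a spike-variation argument combined with a BSDE duality. Define the variation process $Y^\epsilon(s) := X^{t,\epsilon,\eta}(s) - X^*(s)$. Because the dynamics \eqref{Xtilde} are affine in $u$ with no $X$-feedback and the compensated Poisson integral does not depend on $u$, $Y^\epsilon$ solves a linear SDE with no jumps, zero initial value at $s=t$, and nonzero coefficients only on $[t,t+\epsilon)$:
\begin{equation*}
dY^\epsilon(s) = e^{-\int_0^s r(v)dv}\bigl(\nu(s)^\top \eta_s\,ds + \eta_s^\top \sigma_S(s)^\top d\boldsymbol{W}(s)\bigr)\boldsymbol{1}_{[t,t+\epsilon)}(s),
\end{equation*}
so that $Y^\epsilon(T_0) = Y^\epsilon(t+\epsilon)$ is supported entirely on the spike interval.

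Expanding the quadratic functional yields
\begin{equation*}
J(t,X^*(t),u^{t,\epsilon,\eta}) - J(t,X^*(t),u^*) = \mathbb{E}_t\bigl[(X^*(T_0) - \mathbb{E}_t[X^*(T_0)] - \lambda)\,Y^\epsilon(T_0)\bigr] + \tfrac{1}{2}\mathrm{Var}_t(Y^\epsilon(T_0)).
\end{equation*}
The terminal condition of \eqref{BSDE:p} matches the first factor exactly, so I would apply It\^o's product rule to $p^*(\cdot;t)Y^\epsilon(\cdot)$ on $[t,T_0]$; the continuity of $Y^\epsilon$ annihilates the cross-variation with the pure-jump martingale part of $p^*$, leaving only the Brownian contribution
\begin{equation*}
d\langle Y^\epsilon, p^*\rangle_s = e^{-\int_0^s r(v)dv}\,\eta_s^\top \sigma_S(s)^\top k_1^*(s;t)\,\boldsymbol{1}_{[t,t+\epsilon)}(s)\,ds.
\end{equation*}
Conditional expectation kills the remaining martingale integrals under the stated integrability, and the first-order piece collapses exactly to $\mathbb{E}_t\int_t^{t+\epsilon}\langle\Lambda(s;t),\eta_s\rangle\,ds$ with $\Lambda$ as in \eqref{Lamtilde}.

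For the variance term, split $Y^\epsilon(T_0) = A + B$ into its absolutely continuous and Brownian parts. Cauchy-Schwarz gives $\mathbb{E}_t[A^2] = O(\epsilon^2)$ and $|\mathbb{E}_t[AB]| = O(\epsilon^{3/2})$, while It\^o's isometry produces
\begin{equation*}
\mathbb{E}_t[B^2] = \mathbb{E}_t\int_t^{t+\epsilon} e^{-\int_0^s 2r(v)dv}\,\eta_s^\top \sigma_S(s)^\top\sigma_S(s)\eta_s\,ds,
\end{equation*}
which is of order $\epsilon$ and matches the leading quadratic form $\tfrac{1}{2}\langle \Phi(s)\eta_s,\eta_s\rangle$ with the stated discount factor. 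Since $(\mathbb{E}_t[Y^\epsilon(T_0)])^2 = O(\epsilon^2)$ as well, all cross- and higher-order terms contribute only to the $o(\epsilon)$ remainder.

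The main obstacle is justifying these $o(\epsilon)$ remainders uniformly in $\epsilon$ in the presence of non-Markovian Volterra dynamics, since $\nu$, $\sigma_S$, $r$, and $\mu$ are not uniformly bounded. I would combine the hypotheses $\nu, \sigma_S^\top\sigma_S \in H_\mathcal{F}^2$ with the moment bounds in Lemmas \ref{lemma:E_mu} and \ref{Lemma:interest} and the $(p^*, k_1^*) \in S_\mathcal{F}^2 \times L_\mathcal{F}^2$ regularity of the adjoint, then invoke Burkholder-Davis-Gundy, H\"older, and Lebesgue-differentiation-type arguments to show that each residual term, after integration against $\eta$, is genuinely $o(\epsilon)$ as $\epsilon \downarrow 0$.
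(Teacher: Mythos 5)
Your proposal is correct and takes essentially the same route as the paper: the paper's proof writes $X^\epsilon = X^* + Y^\epsilon + Z^\epsilon$, establishes $\mathbb{E}_t[\sup_s (Y^\epsilon_s)^2]=O(\epsilon)$ and $\mathbb{E}_t[\sup_s (Z^\epsilon_s)^2]=O(\epsilon^2)$ via the BDG and H\"older inequalities, and then defers to the method of \cite{HJZ2012}, which is exactly the quadratic expansion plus adjoint-BSDE duality (It\^o's product rule applied to $p^*(\cdot;t)Y^\epsilon(\cdot)$, vanishing covariation with the pure-jump part, It\^o isometry for the variance term) that you spell out explicitly. One minor observation: your isometry computation correctly produces $e^{-2\int_0^s r(v)dv}\,\eta_s^\top\sigma_S(s)^\top\sigma_S(s)\eta_s$ as the leading quadratic term, so the factor $\sigma_S^\top\sigma_S$ ought to appear in $\Phi$; its absence from the theorem's stated $\Phi(s)$ appears to be a slip in the paper and is immaterial, since only the first-order condition $\Lambda(t;t)=0$ is used in the sequel.
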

\begin{proof}
    See Appendix \ref{Appendix:Thm2}. 
\end{proof}
The following Proposition \ref{Pro:Lam} presents a special form for the process $\Lambda$. By Proposition \ref{Pro:Lam} and Theorem \ref{Thm:spike}, the sufficient and necessary condition for an equilibrium hedging strategy is provided in Proposition \ref{Pro:Lam0}. The proofs mimic those of Propositions 5 and 6 in \cite{WW2021} and hence we omit them here. Although Propositions \ref{Pro:Lam} and \ref{Pro:Lam0} are not the key results of this paper, they are useful in the proofs for our main contributions in the next section.
\begin{pro}\label{Pro:Lam}
	For a given control $u^*$ and the corresponding state process $X^*$, the solution to the BSDE \eqref{BSDE:p} satisfies $k_1^*(s; t_1) = k_1^*(s; t_2)$ and $k_2^*(s,z; t_1) = k_2^*(s,z; t_2)$ for $a.e.~s \geq \max(t_1, t_2)$. Moreover, there exist processes $\Lambda_0$ in $\mathbb{R}^2$ and $\omega \in S^2_\mathcal{F}(t, T_0; \mathbb{R}, \mathbb{P})$  such that
	\[ \Lambda(s; t) =  \Lambda_0(s) + e^{-\int_{0}^{s}r(v)dv}\nu(s)\omega_t. \]
\end{pro}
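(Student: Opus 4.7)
The plan is to exploit two structural features of the BSDE \eqref{BSDE:p}: its driver vanishes, so $s\mapsto p^*(s;t)$ is a martingale on $[t,T_0]$, and the $t$-dependence in its terminal condition enters only through $\mathbb{E}_t[X^*(T_0)]$. The natural tool is the predictable representation theorem for the filtration generated by $\boldsymbol{W}$ and the compensated Poisson measure $\widetilde{N}$.

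First I would use the admissibility of $u^*$ in Definition \ref{def1} to conclude $X^*(T_0)\in L^2_{\mathcal F_{T_0}}$ and invoke the predictable representation theorem to obtain \emph{unique} processes $K_1\in L^2_\mathcal{F}(0,T_0;\mathbb{R}^3,\mathbb{P})$ and $K_2\in F^2(0,T_0;\mathbb{R})$, intrinsic to $X^*(T_0)$ and thus independent of $t$, such that
\[
X^*(T_0) = \mathbb{E}[X^*(T_0)] + \int_0^{T_0}K_1(v)^\top d\boldsymbol{W}(v) + \int_0^{T_0}\!\!\int_{\mathbb{R}_+} K_2(v,z)\,\widetilde{N}(dv,dz).
\]
Taking the $\mathcal{F}_t$-conditional expectation of both sides and subtracting yields
\[
p^*(T_0;t) = -\lambda + \int_t^{T_0} K_1(v)^\top d\boldsymbol{W}(v) + \int_t^{T_0}\!\!\int_{\mathbb{R}_+} K_2(v,z)\,\widetilde{N}(dv,dz).
\]
Since \eqref{BSDE:p} has zero driver, its solution on $[t,T_0]$ is the martingale $p^*(s;t)=\mathbb{E}_s[p^*(T_0;t)]$, which here truncates the integration to $[t,s]$. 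Matching integrands against the BSDE dynamics and appealing to the uniqueness of the martingale representation forces $k_1^*(s;t)=K_1(s)$ and $k_2^*(s,z;t)=K_2(s,z)$ for a.e.\ $s\geq t$. In particular these integrands are $t$-free on $\{s\geq\max(t_1,t_2)\}$, which establishes the first assertion.

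For the second assertion I would split $\int_t^s=\int_0^s-\int_0^t$ to decouple the $s$- and $t$-dependence in $p^*(s;t)$, writing $p^*(s;t)=P(s)+\omega_t$ with
\[
P(s):=-\lambda+\int_0^s K_1^\top d\boldsymbol{W}+\int_0^s\!\!\int_{\mathbb{R}_+}K_2\,\widetilde{N},\qquad
\omega_t:=-\int_0^t K_1^\top d\boldsymbol{W}-\int_0^t\!\!\int_{\mathbb{R}_+}K_2\,\widetilde{N}.
\]
Then $\omega$ is a scalar $\mathcal{F}_t$-adapted process that belongs to $S^2_\mathcal{F}(0,T_0;\mathbb{R},\mathbb{P})$ by Doob/Burkholder--Davis--Gundy applied to the $L^2$-integrands $K_1,K_2$. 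Substituting this decomposition together with $k_1^*(s;t)=K_1(s)$ into \eqref{Lamtilde} and collecting the $t$-free pieces into
\[
\Lambda_0(s):=e^{-\int_0^s r(v)dv}\bigl[\nu(s)P(s)+\sigma_S(s)^\top K_1(s)\bigr]
\]
produces exactly $\Lambda(s;t)=\Lambda_0(s)+e^{-\int_0^s r(v)dv}\nu(s)\omega_t$, as claimed.

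The main obstacle is purely bookkeeping of function-space memberships: one has to verify that the $K_1,K_2$ produced by the predictable representation satisfy the integrability demanded by \eqref{BSDE:p}, and that $P$, $\omega$ and $\Lambda_0$ lie in the spaces stipulated in the statement. These checks follow from admissibility of $u^*$, Doob/BDG, and the moment bounds of Lemmas \ref{lemma:E_mu} and \ref{Lemma:interest} combined with the standing hypothesis $\nu\in H^2_\mathcal{F}$. No genuinely new argument is needed beyond what was used in Propositions 5--6 of \cite{WW2021}, which is why the authors omit the details.
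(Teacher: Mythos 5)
Your argument is correct and is exactly the standard one: the paper omits this proof (deferring to Propositions 5--6 of \cite{WW2021}), and the argument there is precisely your observation that the zero-driver BSDE \eqref{BSDE:p} reduces to the martingale representation of $X^*(T_0)$, whose integrands are intrinsic to $X^*(T_0)$ and hence $t$-free, with the $t$-dependence isolated in the $\mathcal{F}_t$-measurable scalar $\omega_t=\mathbb{E}[X^*(T_0)]-\mathbb{E}_t[X^*(T_0)]$. Nothing further is needed.
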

\begin{pro}\label{Pro:Lam0}
	Let $ (p^*(s; t), k^*_1(s; t), k^*_2(s, z; t)) \in S_\mathcal{F}^2(t,T_0;\mathbb{R}, \mathbb{P}) \times L_\mathcal{F}^2(t,T_0;\mathbb{R}^3, \mathbb{P}) \times F^2(t,T_0;\mathbb{R})$ be the unique solution to the BSDE \eqref{BSDE:p}. Then, $u^*$ is an equilibrium control if and only if
	\begin{equation}\label{Lam}
	\Lambda(t; t) =  e^{-\int_{0}^{t}r(v)dv}\nu(t) p^*(t;t) + e^{-\int_{0}^{t}r(v)dv}\sigma_S(t)^\top k_1^*(t; t) = 0, ~ a.s., ~a.e.,~ t \in [0, T_0].
	\end{equation}
\end{pro}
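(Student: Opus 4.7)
The plan is to prove both implications as direct consequences of the spike-variation expansion in Theorem~\ref{Thm:spike}, combined with the structural representation of $\Lambda(s;t)$ supplied by Proposition~\ref{Pro:Lam}. The starting identity is
\begin{equation*}
J(t, X^*(t), u^{t,\epsilon,\eta}) - J(t, X^*(t), u^*) = \mathbb{E}_t\int_t^{t+\epsilon}\left[\langle \Lambda(s;t), \eta_s\rangle + \tfrac{1}{2}\langle \Phi(s)\eta_s, \eta_s\rangle\right]ds + o(\epsilon),
\end{equation*}
and I would systematically exploit the fact that the quadratic form $\langle \Phi(s)\eta_s,\eta_s\rangle$ is nonnegative while the linear form $\langle \Lambda(s;t),\eta_s\rangle$ can take either sign. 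After division by $\epsilon$, the equilibrium condition \eqref{eq:def} becomes a pointwise inequality at $s=t$.

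For necessity, I would suppose that $u^*$ is equilibrium and plug in a constant perturbation $\eta_s\equiv\eta\in L^2(\Omega,\mathcal{F}_t;\mathbb{R}^2)$. Dividing the expansion by $\epsilon$ and using the right-continuity of the integrands at $s=t$, passage to $\liminf_{\epsilon\downarrow 0}$ yields $\mathbb{E}\langle \Lambda(t;t),\eta\rangle + \tfrac{1}{2}\mathbb{E}\langle \Phi(t)\eta,\eta\rangle \ge 0$. Rescaling $\eta\mapsto\alpha\eta$ with $\alpha>0$, dividing by $\alpha$, and letting $\alpha\downarrow 0$ annihilates the quadratic term and leaves the linear inequality $\mathbb{E}\langle \Lambda(t;t),\eta\rangle\ge 0$. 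Applying the same argument to $-\eta$ gives equality, and choosing $\eta=\mathbf{1}_A e_i$ for arbitrary $A\in\mathcal{F}_t$ and canonical vectors $e_i\in\mathbb{R}^2$ then forces $\Lambda(t;t)=0$ almost surely for a.e.\ $t\in[0,T_0]$.

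For sufficiency, I would invoke Proposition~\ref{Pro:Lam} to write $\Lambda(s;t)=\Lambda_0(s)+e^{-\int_0^s r(v)dv}\nu(s)\omega_t$, where $\omega_t$ is independent of the running variable $s$. The hypothesis $\Lambda(t;t)=0$ therefore translates into $\Lambda(s;t)=\Lambda(s;t)-\Lambda(t;t)$ on $[t,t+\epsilon]$, and this difference vanishes in $L^2(\Omega)$ as $s\downarrow t$ by the continuity in $s$ of $\Lambda_0$ and of $e^{-\int_0^s r(v)dv}\nu(s)$, which is inherited from the integrability hypotheses on $\nu$ and $\sigma_S^\top\sigma_S$ together with Lemma~\ref{Lemma:interest}. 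A Cauchy--Schwarz estimate then gives $\mathbb{E}_t\int_t^{t+\epsilon}\langle \Lambda(s;t),\eta_s\rangle ds = o(\epsilon)$; since the quadratic term $\mathbb{E}_t\int_t^{t+\epsilon}\tfrac{1}{2}\langle \Phi(s)\eta_s,\eta_s\rangle ds \ge 0$, the right-hand side of \eqref{eq:def} is nonnegative up to $o(\epsilon)$ and the equilibrium condition holds.

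The main obstacle will be upgrading the pointwise right-continuity of $s\mapsto \Lambda(s;t)$ into an $L^2$-modulus estimate that is uniform over admissible perturbations $\eta\in H^2_{\mathcal{F}_t}(\Omega;\mathbb{R}^2)$, so as to obtain the $o(\epsilon)$ rate unconditionally. Because $\omega_t$ encodes the full LRD history through the BSDE~\eqref{BSDE:p}, Markovian shortcuts are unavailable; instead I would derive an explicit bound on $\mathbb{E}_t|\Lambda(s;t)-\Lambda(t;t)|^2$ using the integrability of $\nu$ and the moment bound in Lemma~\ref{Lemma:interest}, and combine it with the $H^2_{\mathcal{F}}$-norm of $\eta$ via Cauchy--Schwarz to close the argument.
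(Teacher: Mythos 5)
Your proposal is correct and follows essentially the same route as the paper, which omits the proof by deferring to Proposition 6 of \cite{WW2021} (itself based on \cite{HJZ2012}): necessity via the spike-variation expansion of Theorem \ref{Thm:spike} with a rescaling $\eta\mapsto\alpha\eta$, $\alpha\downarrow 0$ to kill the quadratic term, and sufficiency via the decomposition of Proposition \ref{Pro:Lam} together with positivity of $\Phi$. The technical obstacle you flag at the end (turning right-continuity of $s\mapsto\Lambda(s;t)$ at $s=t$ into an $o(\epsilon)$ bound, valid for a.e.\ $t$) is exactly the Lebesgue-point argument carried out in those references, so your outline is faithful to the omitted proof.
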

\subsection{Explicit equilibrium hedging strategies}
\label{Sec:explicit}
We first derive the general solution to the time-consistent hedging problem \eqref{objective}, which involves the solution of a BSDE. Under certain conditions, the general uniqueness result for the equilibrium investment and hedging strategies is provided in Theorem \ref{Thm:unique}.  Under the settings of a deterministic and volatility-driven market price of risks, the BSDE is solved explicitly and the closed-form investment and hedging strategies are derived together with rigorous proofs about their admissibility and uniqueness.

By condition \eqref{Lam}, we derive the general form of the equilibrium strategy in the following. Consider the solution form for $p^*(s; t)$ as
\[p^*(s; t) = X^*(s) + \Gamma_s^{(1)} - \lambda -\mathbb{E}_t[X^*(s) + \Gamma_s^{(1)}], \] 
where $\left(\Gamma^{(1)}, \gamma^{(1)}\right)$ is the solution of the BSDE: 
\begin{equation*}
d\Gamma_s^{(1)} = -F(s)ds + \gamma^{(1)}(s)^\top d\boldsymbol{W}(s), ~ \Gamma^{(1)}(T_0) = 0
\end{equation*}
with $\gamma^{(1)} = (\gamma^{(1)}_1, \gamma^{(1)}_2, \gamma^{(1)}_3)^\top$. 
We suppress the dependence of $s$ when no confusion occurs. Mimicking the procedure in \cite{HJZ2012}, we deduce that
\begin{equation}\label{u*}
u^*(t) = e^{\int_{0}^{t}r(v)dv}\left(\sigma_S(t)^\top\sigma_S(t)\right)^{-1}\left(\lambda\nu(t) - \sigma_S(t)^\top\gamma^{(1)}(t)\right)
\end{equation}
and 
\[F(t) = \nu(t)^\top\left(\sigma_S(t)^\top\sigma_S(t)\right)^{-1}\left(\lambda \nu(t)  -\sigma_S(t)^\top\gamma^{(1)}(t)\right) + e^{-\int_{0}^{t}r(v)dv}\left( - \pi(t) -\mathbb{E}[z]c_1  {\hat{\mu}_2(t)}  \right). \] 
Then, $\Gamma^{(1)}$ satisfies the following BSDE:
\begin{align}\label{BSDE:gamma1}
\left\{
\begin{array}{ll}
d\Gamma^{(1)}_t & = -\Big\{\nu(t)^\top\left(\sigma_S(t)^\top\sigma_S(t)\right)^{-1}\left(\lambda \nu(t)  -\sigma_S(t)^\top\gamma^{(1)}(t)\right) + e^{-\int_{0}^{t}r(v)dv}\big(-\pi(t)\\
 & -\mathbb{E}[z]c_1  {\hat{\mu}_2(t)} \big)\Big\}dt + \gamma^{(1)}(t)^\top d\boldsymbol{W}(t), \\
\Gamma^{(1)}_{T_0} & = 0. 
\end{array}
\right.
\end{align}
We provide a uniqueness result for the general equilibrium control \eqref{u*} as follows. 
\begin{theorem}\label{Thm:unique}
Suppose that the following conditions are satisfied:
\begin{enumerate}[label=(\roman*)]
\item\label{condition1} $\sigma_S(t)\left(\sigma_S(t)^\top \sigma_S(t)\right)^{-1}\nu(t) \in H_\mathcal{F}^q(0, T_0; \mathbb{R}, \mathbb{P})$ for any $q > 1$, and
\begin{equation}
    \mathbb{E}\left[\exp\left(C\int_0^{T_0}\left|\sigma_S(s)\left(\sigma_S(s)^\top \sigma_S(s)\right)^{-1}\nu(s)\right|^2ds\right)\right] < \infty
\end{equation}
for a sufficiently large constant $C$.
\item The BSDE \eqref{BSDE:gamma1} admits a unique solution $(\Gamma^{(1)}_t, \gamma^{(1)}(t)) \in S_\mathcal{F}^q(0, T_0; \mathbb{R}, \mathbb{P})\times H_\mathcal{F}^q(0, T_0; \mathbb{R}^3, \mathbb{P})$ for a constant $q \geq 4$, and the $u^*$ in \eqref{u*} satisfies the condition in Theorem \ref{Thm:admissible} such that it is admissible. 
\end{enumerate}
Then, the equilibrium control $u^*$ given in \eqref{u*} is unique. 
\end{theorem}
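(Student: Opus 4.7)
The plan is to show that any equilibrium control must coincide with the candidate $u^*$ in \eqref{u*} by demonstrating that it necessarily obeys the same structural formula with a driver process that uniquely solves the BSDE \eqref{BSDE:gamma1}. Let $\tilde u$ be an arbitrary equilibrium control with discounted wealth $\tilde X$ and corresponding adjoint triple $(\tilde p,\tilde k_1,\tilde k_2)$ solving \eqref{BSDE:p}. Since \eqref{BSDE:p} has zero driver, it is a pure martingale BSDE, so $\tilde p(s;t) = \mathbb{E}_s[\tilde X(T_0)] - \mathbb{E}_t[\tilde X(T_0)] - \lambda$. Because the Poisson integrator in \eqref{Xtilde} is a pure jump martingale contributing nothing to conditional expectations, the process $\tilde\Gamma^{(1)}_s := \mathbb{E}_s\bigl[\int_s^{T_0}\mu_{\tilde X}(u)\,du\bigr]$, with $\mu_{\tilde X}(u)=e^{-\int_0^u r(v)dv}\bigl(\nu(u)^\top\tilde u(u)-\pi(u)-\mathbb{E}[z]c_1\hat\mu_2(u)\bigr)$ the drift of $\tilde X$, admits a pure Brownian BSDE representation $d\tilde\Gamma^{(1)}_s = -\mu_{\tilde X}(s)\,ds+\tilde\gamma^{(1)}(s)^\top d\boldsymbol{W}(s)$ with terminal value zero. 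Using $\mathbb{E}_s[\tilde X(T_0)]=\tilde X(s)+\tilde\Gamma^{(1)}_s$, I obtain $\tilde p(s;t)=\tilde X(s)+\tilde\Gamma^{(1)}_s-\tilde X(t)-\tilde\Gamma^{(1)}_t-\lambda$, so $\tilde p(t;t)=-\lambda$, and matching the Brownian coefficient yields $\tilde k_1(s;t)=e^{-\int_0^s r(v)dv}\sigma_S(s)\tilde u(s)+\tilde\gamma^{(1)}(s)$, independent of $t$ in agreement with Proposition \ref{Pro:Lam}.

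Plugging $\tilde p(t;t)=-\lambda$ and $\tilde k_1(t;t)$ into the equilibrium condition \eqref{Lam} of Proposition \ref{Pro:Lam0} and using invertibility of $\sigma_S^\top\sigma_S$ gives
\[\tilde u(t)=e^{\int_0^t r(v)dv}\bigl(\sigma_S(t)^\top\sigma_S(t)\bigr)^{-1}\bigl[\lambda\nu(t)-\sigma_S(t)^\top\tilde\gamma^{(1)}(t)\bigr],\]
which is exactly the structural form \eqref{u*} but with $\gamma^{(1)}$ replaced by $\tilde\gamma^{(1)}$. Substituting this expression for $\tilde u$ back into $\mu_{\tilde X}$ and hence into the BSDE for $\tilde\Gamma^{(1)}$ reproduces precisely the BSDE \eqref{BSDE:gamma1}. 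Uniqueness assumption (ii) then forces $(\tilde\Gamma^{(1)},\tilde\gamma^{(1)})=(\Gamma^{(1)},\gamma^{(1)})$, whence $\tilde u = u^*$.

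The main obstacle is verifying that $(\tilde\Gamma^{(1)},\tilde\gamma^{(1)})$ a priori lies in the $S_\mathcal{F}^q\times H_\mathcal{F}^q$ space in which uniqueness of \eqref{BSDE:gamma1} is asserted: admissibility of $\tilde u$ per Definition \ref{def1} yields only second-moment control, whereas assumption (ii) requires $q\ge 4$. This is where condition (i) is essential. The driver of \eqref{BSDE:gamma1} is linear in $\gamma^{(1)}$ with coefficient proportional to $\alpha := \sigma_S(\sigma_S^\top\sigma_S)^{-1}\nu$, and the strong exponential integrability $\mathbb{E}\bigl[\exp\bigl(C\int_0^{T_0}|\alpha(s)|^2\,ds\bigr)\bigr]<\infty$ together with $\alpha\in H_\mathcal{F}^q$ for every $q>1$ are exactly the standard ingredients needed to bootstrap $L^q$ a priori estimates for linear BSDEs via a Girsanov change of measure together with Burkholder--Davis--Gundy inequalities. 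This lifts $(\tilde\Gamma^{(1)},\tilde\gamma^{(1)})$ into the space where the uniqueness hypothesis (ii) applies, closing the argument.
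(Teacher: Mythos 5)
Your route is genuinely different from the paper's. You construct from an arbitrary equilibrium control $\tilde u$ a \emph{second} candidate solution $(\tilde\Gamma^{(1)},\tilde\gamma^{(1)})$ of the BSDE \eqref{BSDE:gamma1} and then invoke the uniqueness assumed in hypothesis (ii). The paper instead fixes the solution $(\Gamma^{(1)},\gamma^{(1)})$ once and for all, writes the adjoint triple of $\tilde u$ as that reference solution plus a deviation $(\bar p,\bar k_1,\bar k_2)$, reduces everything to showing $D(t)=-\bar p(t;t)\nu(t)-\sigma_S(t)^\top\bar k_1(t;t)\equiv 0$, and kills $D$ by showing that a corrected process $\hat p$ is a true martingale (with zero terminal value) under the Girsanov measure generated by $\sigma_S(\sigma_S^\top\sigma_S)^{-1}\nu$. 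The payoff of the paper's arrangement is that the deviation process only needs to be an $L^1$-martingale under the new measure, which follows from $S^{q'}_{\mathcal F}(\mathbb P)$ integrability for some $q'\in(1,2)$ plus the exponential moment in condition (i); no high-moment estimates on the competitor are ever required. Your arrangement front-loads exactly that difficulty.

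Two gaps need attention. First, your claim that $\tilde\Gamma^{(1)}_s=\mathbb{E}_s\bigl[\int_s^{T_0}\mu_{\tilde X}(u)\,du\bigr]$ admits a \emph{pure Brownian} representation does not follow from the fact that the compensated Poisson integral has zero conditional expectation. The filtration $\mathcal F$ is generated by $\boldsymbol W$ \emph{and} the Poisson random measure, and an admissible $\tilde u$ is only assumed $\mathcal F$-adapted; if $\tilde u$ reacts to the jump history, then $\mu_{\tilde X}$ does too and the martingale representation of $\tilde\Gamma^{(1)}$ acquires a nonzero jump coefficient $\tilde\gamma^{(1)}_J$. Your triple then solves a jump-augmented version of \eqref{BSDE:gamma1}, to which hypothesis (ii) does not literally apply. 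This is repairable --- the driver does not involve $\tilde\gamma^{(1)}_J$, so after the Girsanov change of measure $\tilde\Gamma^{(1)}$ is identified with a conditional expectation of data adapted to the continuous filtration, forcing $\tilde\gamma^{(1)}_J=0$ --- but that repair is essentially the paper's martingale argument and must be written out; note the paper carries $\bar k_2$ explicitly for precisely this reason. Second, the moment bootstrap that lifts $(\tilde\Gamma^{(1)},\tilde\gamma^{(1)})$ from the $S^2\times H^2$ class guaranteed by admissibility into the $S^q\times H^q$, $q\ge 4$, class where hypothesis (ii) asserts uniqueness is the load-bearing step of your proof, and it is only asserted. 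You correctly identify the ingredients (condition (i), Girsanov, BDG), and the chain --- true-martingale property of $\int\tilde\gamma^{(1)\top}d\hat{\boldsymbol W}$ from $H^{q'}(\hat{\mathbb P})$ with $q'\in(1,2)$, explicit representation of $\tilde\Gamma^{(1)}$ under $\hat{\mathbb P}$, transfer of moments back to $\mathbb P$ via H\"older against the density, then the reverse BDG bound on $\tilde\gamma^{(1)}$ --- does close, but until it is carried out the appeal to hypothesis (ii) is not justified. As written, the proposal is a plausible alternative skeleton rather than a complete proof.
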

The proof of Theorem \ref{Thm:unique} is shown in Appendix \ref{Appendix:unique}, where we clarify that $C > 6$ is considered to be sufficiently large in Condition \ref{condition1}.
To derive the closed-form solution for the equilibrium investment and hedging strategies, we want to solve for the BSDE \eqref{BSDE:gamma1} explicitly.


\subsection{ Deterministic market price of risk}
In this part, we assume that the market prices of risks are deterministic functions of time, which is a common assumption in financial modeling. 
\begin{assumption}\label{assume1}
The magnitudes of the market prices of the interest rate and mortality risk are both continuous deterministic functions. Specifically, $\varphi(s)= (\varphi_1(s), 0)^\top \in \mathbb{R}^2$ and $\vartheta(s)\in \mathbb{R}$, where $\varphi_1$ and $\vartheta$ are bounded deterministic functions.
\end{assumption}
It directly follows from Assumption \ref{assume1} that Novikov's condition \eqref{Novikov} is satisfied.
\subsubsection{Vasicek setting}
In the following, we adopt constant volatilities. Under Assumption \ref{assume_Vasicek}, the interest rate follows the classic Vasicek model and  $\mu_1$ in \eqref{mu1} becomes a Volterra type of Vasicek model.
\begin{assumption}\label{assume_Vasicek}
The volatility matrix of the   {force of mortality} 
$$\sigma(\mu(s)) \equiv \left(\begin{array}{cc}
\sigma_1 & 0\\
0  & \sigma_2
\end{array} \right) \triangleq \sigma_\mu,$$ 
where $\sigma_1, \text{and } \sigma_2$ are two positive constants. The volatility of the interest rate $\sigma(r(s))\equiv \sigma_r$ where $\sigma_r$ is a positive constant.
\end{assumption}
\begin{pro}\label{pro:Vas}
Under Assumption \ref{assume1} and \ref{assume_Vasicek}, $\sigma_l(s)$ in \eqref{longevity} is given by
 $$\sigma_l(s) = (\psi_1(T-s)\sigma_1, 0)^\top$$ 
 where $\psi_1$ solves $\psi_1 = (-1-\psi_1\theta_1)*K_1$; and $\sigma_b(s) = d_1(s)\sigma_r$ where $d_1(s) = \frac{1}{\theta_r}\left(e^{-\theta_r(T-s)} -1\right)$. 
In addition, $\pi(t) \in L_\mathcal{F}^q(0, T_0; \mathbb{R}, \mathbb{P})$ for any $q > 1$.
\end{pro}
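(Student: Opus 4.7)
The plan is to verify the three claims separately: the risk-free bond volatility $\sigma_b$ via the classical Vasicek bond formula, the longevity bond volatility $\sigma_l$ via a direct application of Lemma \ref{lemma:expmu} with a scalar Volterra--Riccati equation, and the $L^q$-integrability of $\pi$ through Gaussian moment estimates.

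First, for $\sigma_b$: because $\vartheta$ is bounded and deterministic under Assumption \ref{assume1}, the $\mathbb{Q}$-dynamics of $r$ remain a (shifted) Vasicek SDE with the same $\theta_r$ and $\sigma_r$. The standard exponential-affine formula gives $\mathcal{B}(t,T) = \exp(A(t,T) - B(t,T)r(t))$ with $B(t,T) = (1 - e^{-\theta_r(T-t)})/\theta_r$, and one Itô expansion under $\mathbb{P}$ immediately yields $\sigma_b(t) = -B(t,T)\sigma_r = d_1(t)\sigma_r$.

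Next, for $\sigma_l$: the affine retaining transform keeps $\mu$ as an affine Volterra process under $\mathbb{Q}$ with the same kernel $K$, operator $\Theta$ and diffusion $\sigma_\mu$, only modifying the bounded drift $b$ by $\sigma_\mu\varphi$. Hence Lemma \ref{lemma:expmu} applies to $\mathbb{E}^{\mathbb{Q}}[\exp(-\int_t^T\mu_1(s)ds)\mid \mathcal{F}_t]$ with $f = (-1, 0)$. Since Assumption \ref{assume_Vasicek} makes $a(\mu) = \sigma_\mu\sigma_\mu^\top$ constant, $A^1 = A^2 = 0$, so $A(\psi) \equiv 0$ and the Volterra--Riccati equation degenerates to the linear $\psi = (f - \psi\Theta) * K$. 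Using the lower-triangular structure of $K$ and $\Theta$ together with $f_2 = 0$, the second coordinate decouples into the homogeneous $\psi_2 = (-\psi_2\theta_2) * K_2$, whose unique $L^2$ solution is $\psi_2 \equiv 0$, reducing the first coordinate to $\psi_1 = (-1 - \psi_1\theta_1) * K_1$. Substituting $\psi = (\psi_1, 0)$ into \eqref{Y} shows that the diffusion coefficient of $Y_t(T)$ in $dW$ is $(\psi_1(T-t)\sigma_1, 0)$. Writing $\mathcal{B}_L(t,T) = \mathcal{B}(t,T)\exp(-\int_t^T m_1(s)\,ds)\exp(Y_t(T))$ and noting the middle factor is deterministic, an Itô expansion produces the claimed $\sigma_l(t) = (\psi_1(T-t)\sigma_1, 0)^\top$.

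Finally, for the integrability of $\pi(t) = c_2\exp(-\int_0^t\hat{\mu}_2(s)ds)$: Assumption \ref{assume_Vasicek} makes \eqref{mu2} a linear SVIE with constant volatility, so $\mu_2$ is a Gaussian Volterra process with continuous paths and variance uniformly bounded on $[0, T_0]$ (a consequence of Lemma \ref{lemma:E_mu}). Together with the boundedness of $m_2$, the random variable $\int_0^t\hat{\mu}_2(s)\,ds$ is Gaussian with mean and variance uniformly bounded on $[0, T_0]$, so the Gaussian moment generating function yields $\sup_{0 \le t \le T_0}\mathbb{E}[|\pi(t)|^q] < \infty$ for every $q > 1$, which gives $\pi \in L^q_\mathcal{F}(0, T_0; \mathbb{R}, \mathbb{P})$. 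The main technical obstacle I anticipate is the Volterra--Riccati step: one must invoke existence and uniqueness of the scalar linear equation $\psi_1 = (-1 - \psi_1\theta_1) * K_1$ in $L^2([0, T])$ (classical for the fractional kernel) and carefully justify the matrix-to-scalar decoupling; the other two items then follow from routine Itô and Gaussian computations.
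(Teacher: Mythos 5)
Your argument for $\sigma_b$ and $\sigma_l$ is essentially the paper's: pass to $\mathbb{Q}$, noting that the affine-retaining transform with bounded deterministic $\varphi,\vartheta$ only shifts the drifts, read off $\sigma_b = d_1\sigma_r$ from the exponential-affine Vasicek formula, and obtain $\sigma_l$ from Lemma \ref{lemma:expmu} with $f=(-1,0)$ followed by It\^o's formula. You are in fact more explicit than the paper on one point: the paper simply states that $\psi=(\psi_1,\psi_2)$ solves $\psi=(f-\psi\Theta)*K$ and that the result follows, while you carry out the decoupling --- the second coordinate reads $\psi_2=(-\theta_2\psi_2)*K_2$, whose unique $L^2$ solution is $\psi_2\equiv 0$, which is what forces the second entry of $\sigma_l$ to vanish. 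One small slip: the correct factorization is $\mathcal{B}_L(t,T)=\mathcal{B}(t,T)e^{-\int_t^T m_1(s)ds}e^{\int_0^t\mu_1(s)ds}\exp(Y_t(T))$, since Lemma \ref{lemma:expmu} conditions an exponential of an integral starting at $0$; you omitted the factor $e^{\int_0^t\mu_1(s)ds}$, but it is of finite variation and contributes nothing to the diffusion coefficient, so the conclusion is unaffected. Where you genuinely diverge is the claim $\pi\in L^q_{\mathcal{F}}$: the paper applies Lemma \ref{lemma:expmu} a second time with $f=(0,-q)$ and quotes the closed-form resolvent solution $\bar\psi=f*E_\Theta$ of the (now linear) Volterra--Riccati equation to get an explicit deterministic expression for $\mathbb{E}\left[e^{-q\int_0^t\mu_2(s)ds}\right]$, whereas you observe that under Assumption \ref{assume_Vasicek} the SVIE is linear with additive noise, so $(\mu_1,\mu_2)$ is Gaussian and the moment generating function of $\int_0^t\hat{\mu}_2(s)ds$ is finite, uniformly in $t$, by Lemma \ref{lemma:E_mu} and Cauchy--Schwarz. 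Both are valid; your Gaussian route is more elementary and sidesteps the Riccati equation for $f=(0,-q)$, while the paper's route yields the explicit exponential-affine expression that is reused later (e.g.\ in Theorem \ref{thm:gam_Vas}) and carries over verbatim to the CIR setting, where the Gaussian argument breaks down.
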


To solve the BSDE \eqref{BSDE:gamma1} explicitly, we introduce the pricing measure $\bar{\mathbb{P}}$:
\begin{equation}\label{P_bar}
\frac{d\bar{\mathbb{P}}}{d\mathbb{P}} =  \exp{\left( \int_0^t -(\varphi_1(s), 0, \vartheta(s))d\boldsymbol{W}(s)- \frac{1}{2}(\varphi_1(s)^2 + \vartheta(s)^2)ds \right)}.
\end{equation}
Under the measure $\bar{\mathbb{P}}$, $\bar{\boldsymbol{W}}(t) \triangleq \boldsymbol{W}(t) + \int_0^t (\varphi_1(s), 0, \vartheta(s))^\top ds$ is a standard Brownian motion. We use $\bar{\mathbb{E}}[\cdot]$ to denote the expectation under the measure $\bar{\mathbb{P}}$. 
\begin{theorem}\label{thm:gam_Vas}
Under Assumptions \ref{assume1} and \ref{assume_Vasicek}, the BSDE \eqref{BSDE:gamma1} admits a unique solution $(\Gamma^{(1)}_t, \gamma^{(1)}(t)) \in S_\mathcal{F}^q(0, T_0; \mathbb{R}, \mathbb{P})\times H_\mathcal{F}^q(0, T_0; \mathbb{R}^3, \mathbb{P})$ for any $q > 1$, which is given by
\begin{align*}
\Gamma^{(1)}_t = \int_t^{T_0} \lambda\left( \varphi_1(s)^2 + \vartheta(s)^2 \right) - e^{-\int_{0}^{t}r(v)dv}e^{\bar{d}_0(s, t) + \bar{d}_1(s, t)r(t)}\Bigg( c_2\exp\left(  {\int_0^s - m_2(v) dv + }  Y^1_t(s)\right) \\
+ \mathbb{E}[z]c_1\bar{\mathbb{E}}_t[  {\hat{\mu}_2(s)} ]\Bigg)ds,
\end{align*}
and 
\begin{align*}
  \left(\begin{array}{c}
    \gamma^{(1)}_1(t) \\
    \gamma^{(1)}_2(t)
  \end{array} \right)  = -\int_t^{T_0} e^{-\int_{0}^{t}r(v)dv}e^{\bar{d}_0(s, t) + \bar{d}_1(s, t)r(t)}\Bigg( c_2\exp\left(   {\int_0^s - m_2(v) dv + }  Y^1_t(s)\right)\sigma_\mu\psi(s-t)^\top\\
  + \mathbb{E}[z]c_1\sigma_\mu \left(\begin{array}{c}
     E_\Theta^{21}(s-t) \\
     E_\Theta^{22}(s-t)
  \end{array} \right)\Bigg)ds,
\end{align*}
\begin{align*}
\gamma_3^{(1)} = -\int_t^{T_0}e^{-\int_{0}^{t}r(v)dv}e^{\bar{d}_0(s, t) + \bar{d}_1(s, t)r(t)}\bar{d}_1(s, t)\sigma_r \left( c_2\exp\left(  {\int_0^s- m_2(v) dv + } 
 Y^1_t(s)\right) + \mathbb{E}[z]c_1\bar{\mathbb{E}}_t[  {\hat{\mu}_2(s)} ]\right)ds,
\end{align*}
where $\bar{d}_1(s, t) = \frac{1}{\theta_r}\left(e^{-\theta_r(s -t)}-1 \right)$; $\bar{d}_0(s, t) = \int_t^s(b_r -\sigma_r\vartheta(v)) + \frac{1}{2}\sigma_r^2\bar{d}_1(v, t)^2dv$; $Y^1 _t(s)$ is equivalent to $Y_t(s)$, as defined in \eqref{Y} with $\sigma(\mu(s)) = \sigma_\mu$, $f = (0, -1)$, $W(s)$ replaced by $\bar{W}(s)$, and $\psi$ solving the Volterra--Riccati equation $\psi = ((0, -1) - \psi \Theta)*K$;   {$\bar{\mathbb{E}}_t[\hat{\mu}(s)] = (m_1(s), m_2(s))^\top + \bar{\mathbb{E}}_t[\mu(s)] $ where} $\bar{\mathbb{E}}_t[\mu(s)]$ is given in \eqref{expectation:mu} with $b(s) = (b_1 - \sigma_1\varphi_1(s), b_2)$, $\sigma(\mu(s)) = \sigma_\mu$ and $W(s)$ replaced by $\bar{W}(s)$; and $E_\Theta = \left(\begin{array}{cc}
    E_\Theta^{11} & E_\Theta^{12}  \\
    E_\Theta^{21} & E_\Theta^{22}
\end{array}\right)$ in \eqref{expectation:mu}.
\end{theorem}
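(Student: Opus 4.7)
The plan is to collapse the BSDE \eqref{BSDE:gamma1} into one with a driver independent of $\gamma^{(1)}$ via a Girsanov change of measure, solve for $\Gamma^{(1)}_t$ by direct conditioning under $\bar{\mathbb{P}}$, and read off the three components of $\gamma^{(1)}$ from an It\^o expansion of the resulting explicit expression.

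First I would exploit the block structure of $\sigma_S$ under Assumption \ref{assume_Vasicek}. A direct calculation shows that $P:=\sigma_S(\sigma_S^\top\sigma_S)^{-1}\sigma_S^\top$ is the orthogonal projection in $\mathbb{R}^3$ onto $\{(x,0,y):x,y\in\mathbb{R}\}$, and $\nu=\sigma_S^\top\zeta$ with $\zeta=(\varphi_1,0,\vartheta)^\top$ lies inside that subspace. Consequently $\lambda\nu^\top(\sigma_S^\top\sigma_S)^{-1}\nu=\lambda(\varphi_1^2+\vartheta^2)$ and $\nu^\top(\sigma_S^\top\sigma_S)^{-1}\sigma_S^\top\gamma^{(1)}=\varphi_1\gamma^{(1)}_1+\vartheta\gamma^{(1)}_3$, reducing \eqref{BSDE:gamma1} to a linear BSDE with driver linear in $(\gamma^{(1)}_1,\gamma^{(1)}_3)$ and with deterministic coefficients multiplying them.

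Next I would apply Girsanov with the density \eqref{P_bar}. Substituting $d\boldsymbol{W}=d\bar{\boldsymbol{W}}-(\varphi_1,0,\vartheta)^\top dt$ exactly cancels the linear-in-$\gamma^{(1)}$ drift, yielding under $\bar{\mathbb{P}}$
\begin{equation*}
d\Gamma^{(1)}_t=\bigl[-\lambda(\varphi_1(t)^2+\vartheta(t)^2)+e^{-\int_0^t r(v)dv}\bigl(\pi(t)+\mathbb{E}[z]c_1\hat{\mu}_2(t)\bigr)\bigr]dt+\gamma^{(1)}(t)^\top d\bar{\boldsymbol{W}}(t),\ \Gamma^{(1)}_{T_0}=0.
\end{equation*}
Taking $\bar{\mathbb{E}}_t[\cdot]$ and exploiting the independence of the financial market from the mortality, I would factor the conditional expectation of $e^{-\int_t^s r\,dv}$ out as the Vasicek bond price $e^{\bar d_0(s,t)+\bar d_1(s,t)r(t)}$, invoke Lemma \ref{lemma:expmu} with $f=(0,-1)$ under $\bar{\mathbb{P}}$ to identify $\bar{\mathbb{E}}_t[e^{-\int_0^s\mu_2(v)dv}]=e^{Y^1_t(s)}$, and use \eqref{expectation:mu} for $\bar{\mathbb{E}}_t[\mu_2(s)]$. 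Pulling the $\mathcal{F}_t$-measurable factor $e^{-\int_0^t r(v)dv}$ outside the integral and recombining through $\hat{\mu}_2=m_2+\mu_2$ produces the stated formula for $\Gamma^{(1)}_t$.

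To identify $\gamma^{(1)}$, I would apply It\^o's formula under $\bar{\mathbb{P}}$ to the explicit integrand and match Brownian components of $\bar{\boldsymbol{W}}=(\bar W_1,\bar W_2,\bar W_r)^\top$. The three building blocks are: the Vasicek dynamics $dr=(b_r-\sigma_r\vartheta-\theta_r r)dt+\sigma_r d\bar W_r$, generating the $\bar d_1(s,t)\sigma_r$ factor in $\gamma^{(1)}_3$; the identity $de^{Y^1_t(s)}=e^{Y^1_t(s)}\psi(s-t)\sigma_\mu d\bar W(t)$ obtained from \eqref{Y} together with It\^o's formula (the drift cancels because $e^{Y_t(s)}$ is a $\bar{\mathbb{P}}$-martingale in $t$), yielding the first summand of $(\gamma^{(1)}_1,\gamma^{(1)}_2)^\top$; and the martingale representation $d\bar{\mathbb{E}}_t[\mu(s)]=E_\Theta(s-t)\sigma_\mu d\bar W(t)$ read off directly from \eqref{expectation:mu}, which contributes the second summand of $(\gamma^{(1)}_1,\gamma^{(1)}_2)^\top$ via the second row $(E_\Theta^{21},E_\Theta^{22})$. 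Interchanging $ds$-integration and stochastic differentiation recovers the stated formulas exactly.

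The main obstacle I anticipate is twofold: justifying the stochastic Fubini interchange in the last step, and verifying $(\Gamma^{(1)},\gamma^{(1)})\in S^q_\mathcal{F}\times H^q_\mathcal{F}$ for every $q>1$, which then delivers uniqueness through standard linear-BSDE theory. Both reduce to uniform $L^q$ control of the integrand on $[t,T_0]^2$. I would handle them by combining Lemma \ref{Lemma:interest} for two-sided exponential moments of $\int_0^\cdot r(v)dv$, Lemma \ref{lemma:E_mu} together with the Volterra exponential moment estimate behind Lemma \ref{lemma:expmu} to bound $\mathbb{E}[\exp(qY^1_t(s))]$ and $\mathbb{E}|\bar{\mathbb{E}}_t[\mu_2(s)]|^q$, and Proposition \ref{pro:Vas} for $\pi\in L^q_\mathcal{F}$. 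H\"older's inequality then closes the estimate, and the boundedness of $\varphi_1,\vartheta$ from Assumption \ref{assume1} lets me transfer the bounds between $\mathbb{P}$ and $\bar{\mathbb{P}}$.
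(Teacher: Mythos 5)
Your proposal follows essentially the same route as the paper's proof: the same algebraic reduction of the driver via $\nu^\top(\sigma_S^\top\sigma_S)^{-1}\nu=\varphi_1^2+\vartheta^2$ and $\nu^\top(\sigma_S^\top\sigma_S)^{-1}\sigma_S^\top=(\varphi_1,0,\vartheta)$, the same Girsanov change to $\bar{\mathbb{P}}$ to remove the $\gamma^{(1)}$-dependence, conditioning to obtain $\Gamma^{(1)}$ via the Vasicek bond formula and Lemma \ref{lemma:expmu}, reading off $\gamma^{(1)}$ by It\^o's formula, and transferring the $S^q\times H^q$ estimates between $\mathbb{P}$ and $\bar{\mathbb{P}}$ by H\"older together with moments of the stochastic exponential. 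The only cosmetic difference is that the paper first invokes Theorem 5.1 of \cite{EPQ} for existence and uniqueness under $\bar{\mathbb{P}}$ and then writes down the explicit solution, whereas you construct the solution first and verify integrability afterwards; the substance is identical.
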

\begin{proof}
By calculation,
\begin{align*}
& \nu(t)^\top\left(\sigma_S(t)^\top\sigma_S(t)\right)^{-1}\nu(t) \\
& = (\varphi_1(t)\sigma_{l1}(t)+ \vartheta(t)\sigma_b(t), \vartheta(t)\sigma_b(t))\left(\begin{array}{cc}
   \sigma_b^2 +\sigma_{l1}^2  & \sigma_b^2 \\
   \sigma_b^2  & \sigma_b^2
\end{array} \right)^{-1}\left(\begin{array}{c}
     \varphi_1(t)\sigma_{l1}(t)+ \vartheta(t)\sigma_b(t) \\
      \vartheta(t)\sigma_b(t)
\end{array}\right)\\
& = \varphi_1(t)^2 + \vartheta(t)^2,
\end{align*}
and 
\begin{align*}
\nu(t)^\top\left(\sigma_S(t)^\top\sigma_S(t)\right)^{-1}\sigma_S(t)^\top =  (\varphi_1(t), 0 , \vartheta(t)) \triangleq \xi(t). 
\end{align*} 
Under the measure $\bar{\mathbb{P}}$, the BSDE \eqref{BSDE:gamma1} becomes
\begin{align*}
\left\{
\begin{array}{ll}
d\Gamma^{(1)}_t & = -\left\{\lambda\left( \varphi_1(t)^2 + \vartheta(t)^2 \right) + e^{-\int_{0}^{t}r(s)ds}\left( - \pi(t) -\mathbb{E}[z]c_1  {\hat{\mu}_2(t)} \right)\right\}dt + \gamma^{(1)}(t)^\top d\bar{\boldsymbol{W}}(t), \\
\Gamma^{(1)}_{T_0} & = 0.
\end{array}
\right.
\end{align*}
By Lemmas \ref{lemma:expmu} and \ref{lemma:E_mu}, and Proposition \ref{pro:Vas}, we have
\[e^{-\int_{0}^{t}r(s)ds}\left( - \pi(t) -\mathbb{E}[z]c_1  {\hat{\mu}_2(t)} \right) \in H_\mathcal{F}^{q}(0,T_0;\mathbb{R}, \bar{\mathbb{P}}). \]
for any $q > 1$. By Theorem 5.1 in \cite{EPQ}, there exists a unique solution $ (\Gamma^{(1)}, \gamma^{(1)}) \in S_\mathcal{F}^{q}(0,T_0;\mathbb{R}, \bar{\mathbb{P}}) \times H_\mathcal{F}^{q}(0,T_0;\mathbb{R}^3, \bar{\mathbb{P}})$ for any $q > 1$. 
We denote the $\frac{d\bar{\mathbb{P}}}{d\mathbb{P}}$ in \eqref{P_bar} by $\mathcal{E}_t(-\xi)$. For any constant $q > 1$, there exists a constant $C_1$ such that
\begin{align*}
\mathbb{E}\left[\sup_{0\leq t\leq T_0}|\Gamma_t^{(1)}|^q \right] & = \bar{\mathbb{E}}\left[ \sup_{0\leq t\leq T_0}|\Gamma_t^{(1)}|^q \mathcal{E}_{T_0}^{-1}(\xi)\right]\leq \left\{\bar{\mathbb{E}}\left[\sup_{0\leq t\leq T_0}|\Gamma_t^{(1)}|^{2q} \right]\mathbb{E}[\mathcal{E}_{T_0}^{-1}(\xi)] \right\}^{\frac{1}{2}}\\
& \leq C_1 \left\{\mathbb{E}\left[\mathcal{E}_{T_0}(2\xi) \right]\mathbb{E}\left[\exp\left(3 \int_0^{T_0}|\xi(s)|^2ds\right) \right] \right\}^{\frac{1}{4}} < \infty. 
\end{align*}
Therefore, $\Gamma^{(1)} \in S_\mathcal{F}^{q}(0,T_0;\mathbb{R}, \mathbb{P})$ for any $q > 1$. Similarly, we have $\gamma^{(1)} \in H_\mathcal{F}^{q}(0,T_0;\mathbb{R}^3, \mathbb{P})$ for any $q > 1$. 
Furthermore, 
\begin{align*}
\Gamma^{(1)}_t & = \int_t^{T_0} \bar{\mathbb{E}}_t\left[\lambda\left( \varphi_1(s)^2 + \vartheta(s)^2 \right) -  e^{-\int_{0}^{s}r(v)dv}\left( \pi(s)  + \mathbb{E}[z]c_1  {\hat{\mu}_2(s)}\right) \right]ds \\
& = \int_t^{T_0} \lambda\left( \varphi_1(s)^2 + \vartheta(s)^2 \right) - \bar{\mathbb{E}}_t\left[e^{-\int_{0}^{s}r(v)dv} \right]\left( c_2\bar{\mathbb{E}}_t\left[e^{-\int_{0}^{s}  {\hat{\mu}_2(v)}dv} \right] + \mathbb{E}[z]c_1\bar{\mathbb{E}}_t[  {\hat{\mu}_2(s)}]\right)ds.
\end{align*}
Under the measure $\bar{\mathbb{P}}$,   $\mu$ follows
\begin{align*}
\mu(t)= \mu(0) + \int_0^t K(t-s)(b(s) -\Theta\mu(s))ds + \int_0^t K(t-s)\sigma_\mu d\bar{W}(s),
\end{align*}
where $b(s) = (b_1 - \sigma_1\varphi_1(s), b_2)$, $K = \left( \begin{array}{cc}
K_1 & 0\\
\beta_1 K_1 & K_2
\end{array} \right)$, and $\Theta = \left(\begin{array}{cc}
\theta_1 & 0\\
\beta_2 & \theta_2
\end{array}\right)$, which maintains the affine structure. By Lemma \ref{lemma:expmu}, we have
\[\bar{\mathbb{E}}_t\left[e^{-\int_{0}^{s}\mu_2(v)dv}\right] = \exp(  {\int_0^s - m_2(v) dv + } Y^1_t(s)).\]
The interest rate under the measure $\bar{\mathbb{P}}$ follows the dynamic:
\[ dr(t) = (b_r - \sigma_r\vartheta(t) - \theta_r r(t))dt + \sigma_rd\bar{W}_r(t).\]
By Appendix \ref{appendix:affine}, $ \bar{\mathbb{E}}_t\left[e^{-\int_{0}^{s}r(v)dv} \right] = e^{-\int_{0}^{t}r(v)dv}e^{\bar{d}_0(s, t) + \bar{d}_1(s, t)r(t)}$. The expression for $(\Gamma^{(1)}, \gamma^{(1)})$ follows.
\end{proof}

\begin{cor}\label{cor:Vas}
  Under Assumption \ref{assume1} and \ref{assume_Vasicek},  the unique admissible equilibrium investment and hedging strategies $u^* = (u^*_1, u^*_2)^\top$ is given by 
  \begin{align*}
   u^*_1(t) & =   e^{\int_0^tr(s)ds}\left(  \frac{\lambda\varphi_1(t)}{\psi_1(T-t)\sigma_1} - \frac{\gamma^{(1)}_1(t)}{\psi_1(T-t)\sigma_1}\right) \label{u1*Vas} \\
   u^*_2(t) & = e^{\int_0^tr(s)ds}\left( - \frac{\lambda\varphi_1(t)}{\psi_1(T-t)\sigma_1} + \frac{\lambda\vartheta(t)}{d_1(t)\sigma_r}+\frac{1}{\psi_1(T-t)\sigma_1}\gamma^{(1)}_1(t) - \frac{1}{d_1(t)\sigma_r}\gamma^{(1)}_3(t)\right)
  \end{align*}
where $\psi_1$ solves $\psi_1 = (-1-\psi_1\theta_1)*K_1$, $d_1(t) = \frac{1}{\theta_r}\left(e^{-\theta_r(T-t)} -1\right)$, and  $\gamma^{(1)}_1$ and $\gamma^{(1)}_3$ are given in Theorem \ref{thm:gam_Vas}.
\end{cor}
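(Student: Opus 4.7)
The plan is to substitute the explicit volatility expressions from Proposition \ref{pro:Vas} into the general equilibrium formula \eqref{u*}, and then verify that the resulting strategy satisfies the hypotheses of Theorems \ref{Thm:admissible} and \ref{Thm:unique}. Writing $\sigma_{l1}(t) = \psi_1(T-t)\sigma_1$ and $\sigma_b(t) = d_1(t)\sigma_r$, the matrix
\[
\sigma_S(t)^\top \sigma_S(t) = \begin{pmatrix} \sigma_{l1}^2 + \sigma_b^2 & \sigma_b^2 \\ \sigma_b^2 & \sigma_b^2 \end{pmatrix}
\]
has determinant $\sigma_{l1}^2 \sigma_b^2$ and an easy cofactor inverse $\frac{1}{\sigma_{l1}^2}\bigl(\begin{smallmatrix} 1 & -1 \\ -1 & 1 + \sigma_{l1}^2/\sigma_b^2 \end{smallmatrix}\bigr)$. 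Combined with $\nu(t) = (\vartheta \sigma_b + \varphi_1 \sigma_{l1},\, \vartheta \sigma_b)^\top$ and $\sigma_S(t)^\top \gamma^{(1)}(t) = (\sigma_{l1}\gamma^{(1)}_1 + \sigma_b \gamma^{(1)}_3,\, \sigma_b \gamma^{(1)}_3)^\top$, a direct two-by-two matrix-vector computation in \eqref{u*} produces the stated expressions for $u_1^*$ and $u_2^*$; note that $\gamma^{(1)}_2$ disappears automatically because the second row of $\sigma_S^\top$ is identically zero.

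For admissibility, I will invoke Theorem \ref{Thm:admissible}. Under Assumption \ref{assume1} the processes $\varphi_1,\vartheta$ are bounded deterministic, and the scalar functions $\psi_1(T-\cdot)$ and $d_1(\cdot)$ are continuous and bounded away from zero on $[0,T_0]$ since $T > T_0$ keeps them away from their root at maturity. Theorem \ref{thm:gam_Vas} supplies $\gamma^{(1)} \in H^q_\mathcal{F}(0,T_0;\mathbb{R}^3,\mathbb{P})$ for every $q > 1$, while Lemma \ref{Lemma:interest} gives arbitrary exponential moments of $\int_0^{\cdot} r(v)\,dv$. A single application of Cauchy--Schwarz then yields $u^* \in L^4_\mathcal{F}(0,T_0;\mathbb{R}^2,\mathbb{P})$; because $\nu$ and $\sigma_S$ are bounded, this immediately gives $\nu^\top u^* \in L^4_\mathcal{F}$ and $\sigma_S u^* \in L^4_\mathcal{F}$, which are exactly the conditions required in Theorem \ref{Thm:admissible}.

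For uniqueness I will appeal to Theorem \ref{Thm:unique}. Condition (ii) is essentially the content of Theorem \ref{thm:gam_Vas} (taking $q=4$, which is admitted by the ``any $q>1$'' statement), combined with the admissibility just verified. Condition (i) is particularly simple in the Vasicek setting: the vector $\sigma_S(\sigma_S^\top \sigma_S)^{-1}\nu$ reduces after cancellation to the deterministic continuous map $t \mapsto (\varphi_1(t), 0, \vartheta(t))^\top$, which is uniformly bounded on $[0,T_0]$ by Assumption \ref{assume1}. Hence every $H^q_\mathcal{F}$-norm and the exponential integrability requirement hold trivially.

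The main obstacle is really a bookkeeping one: one must combine the $H^q$-estimates from Theorem \ref{thm:gam_Vas} with the exponential moment bounds from Lemma \ref{Lemma:interest} via H\"older in the right exponents to land in $L^4_\mathcal{F}$, and then make sure the degenerate structure of $\sigma_S$ (its second column being zero) is handled correctly when rearranging the $2\times 2$ linear system. Once these are in place, the explicit formula, admissibility and uniqueness all follow without further technical work.
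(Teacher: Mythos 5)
Your proposal is correct and follows essentially the same route as the paper: substitute the Vasicek volatilities from Proposition \ref{pro:Vas} into the general formula \eqref{u*} and carry out the $2\times 2$ inversion, verify condition (i) of Theorem \ref{Thm:unique} by noting that $\sigma_S(\sigma_S^\top\sigma_S)^{-1}\nu=(\varphi_1,0,\vartheta)^\top$ is bounded deterministic, and obtain admissibility from Theorem \ref{Thm:admissible} via Lemma \ref{Lemma:interest} and H\"older's inequality. Your write-up is in fact more explicit than the paper's (which compresses the algebra into ``by calculation''), and the matrix computation and the handling of the vanishing second column of $\sigma_S$ check out.
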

\begin{proof}
    As $\sigma_S(t)\left(\sigma_S(t)^\top \sigma_S(t)\right)^{-1}\nu(t)= (\varphi_1(t), 0 , \vartheta(t))^\top$ is 
a bounded deterministic function in this case, condition \ref{condition1} in Theorem \ref{Thm:unique} is satisfied. By calculation, the expression for $u^*$ folows. 
By Lemma \ref{Lemma:interest} and H\"older's inequality, it holds that $u^* \in H_\mathcal{F}^2(0, T_0; \mathbb{R}^2, \mathbb{P})$. Similarly, we have $\nu^\top u^*\in L_\mathcal{F}^4(0, T_0; \mathbb{R}, \mathbb{P})$ and $\sigma_S u^* \in L_\mathcal{F}^4(0, T_0; \mathbb{R}^3, \mathbb{P})$, then $u^*$ is a unique admissible control. 
\end{proof}

 As our problem has LRD property, the equilibrium hedging strategy $u^*$ in Corollary \ref{cor:Vas} naturally depends on the entire historical life tables of the nation. Most of the functions there are time deterministic functions and depened on the model parameters, except for $\gamma_1^{(1)}$ and $\gamma_3^{(1)}$. The $\gamma_1^{(1)}$ and $\gamma_3^{(1)}$ defined in Theorem \ref{thm:gam_Vas} appear as functionals of $Y^1_t(\cdot)$, which is a collection of the historical paths of $\mu_1$ and depends on the state process $X_t$. In other words, the equilibrium hedging strategy is not a function on the state process $X$ alone but the historical path of the national death data. As the open-loop control framework embraces feedback and closed-loop controls as special cases, this funding justifies the suboptimality of them for our application.

In Corollary \ref{cor:Vas},  the closed-form solution of the unique admissible strategy is provided. We further investigate the effects of incorporation LRD property and cointegration.  The effect of cointegration is mainly reflected in the term $c_2\exp\left(  {\int_0^s m_2(v) dv + } Y^1_t(s)\right)\sigma_\mu\psi(s-t)^\top
  + \mathbb{E}[z]c_1\sigma_\mu \left(\begin{array}{c}
     E_\Theta^{21}(s-t) \\
     E_\Theta^{22}(s-t)
  \end{array} \right)$ in the expression of $\left(\begin{array}{c}
    \gamma^{(1)}_1(t) \\
    \gamma^{(1)}_2(t)
  \end{array} \right)$. Indeed, the function $\psi$ admits an explicit solution
  $\psi = (0 , -1)*E_\Theta$. In the following, we provide the explicit form for $E_\Theta$.
  
\begin{pro}\label{pro:RB}
$E_\theta$ admits the form
\begin{align}
    E_\Theta = \left(\begin{array}{cc} 
    \frac{1}{\theta_1}R_{1\theta_1} & 0 \\
    \frac{\beta_1}{\theta_1}R_{1\theta_1} -(\frac{\beta_1}{\theta_1} + \frac{\beta_2}{\theta_1\theta_2})R_{2\theta_2}*R_{1\theta_1} & \frac{1}{\theta_2}R_{2\theta_2}
\end{array} \right),
\end{align}
where $R_{1\theta_1}$ and $R_{2\theta_2}$ are the resolvents for $\theta_1 K_1$ and $\theta_2 K_2$, respectively.
\end{pro}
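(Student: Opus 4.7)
The plan is to compute $R_\Theta$ entry by entry using the triangular structure of $K$ and $\Theta$, then recover $E_\Theta$ from the identity $E_\Theta = K - R_\Theta * K$ given in Lemma \ref{lemma:expmu}.

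First I would write down
\[ K\Theta = \begin{pmatrix} \theta_1 K_1 & 0 \\ \beta_1\theta_1 K_1 + \beta_2 K_2 & \theta_2 K_2 \end{pmatrix}, \]
and observe that, since $K\Theta$ is lower triangular, any resolvent of the second kind $R_\Theta$ satisfying $K\Theta*R_\Theta = R_\Theta*K\Theta = K\Theta - R_\Theta$ must itself be lower triangular. Write $R_\Theta = \left(\begin{smallmatrix} r_{11} & 0 \\ r_{21} & r_{22} \end{smallmatrix}\right)$. The diagonal scalar equations $\theta_i K_i * r_{ii} = \theta_i K_i - r_{ii}$ then identify $r_{11} = R_{1\theta_1}$ and $r_{22} = R_{2\theta_2}$ by the uniqueness of the scalar resolvent.

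Next I would determine $r_{21}$ from the $(2,1)$ component of $R_\Theta*K\Theta = K\Theta - R_\Theta$. After expanding and substituting the known diagonal entries, this reduces to a scalar equation of the form $r_{21} + \theta_2 K_2 * r_{21} = Y$ with a source term $Y$ expressible through $R_{1\theta_1}$ and $K_2$. Applying the standard resolvent inversion principle — if $g + K*g = f$ then $g = f - R*f$ where $R$ is the resolvent of $K$ — gives $r_{21} = Y - R_{2\theta_2}*Y$, which simplifies via the identities $R_{i\theta_i}*\theta_i K_i = \theta_i K_i - R_{i\theta_i}$ and hence $R_{i\theta_i}*K_i = K_i - R_{i\theta_i}/\theta_i$.

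Finally I would compute $E_\Theta = K - R_\Theta * K$ componentwise. The $(1,1)$ and $(2,2)$ entries follow at once from the identity $R_{i\theta_i}*K_i = K_i - R_{i\theta_i}/\theta_i$, giving $\frac{1}{\theta_i}R_{i\theta_i}$. The $(2,1)$ entry is
\[ (E_\Theta)_{21} = \beta_1 K_1 - r_{21}*K_1 - \beta_1 R_{2\theta_2}*K_1, \]
into which I substitute the explicit form of $r_{21}$ and again collapse terms of the form $R_{1\theta_1}*K_1$ and $R_{2\theta_2}*K_1$ using the same scalar identities. The terms involving $R_{2\theta_2}*K_1$ cancel, and what remains regroups exactly as $\frac{\beta_1}{\theta_1}R_{1\theta_1} - \bigl(\frac{\beta_1}{\theta_1} + \frac{\beta_2}{\theta_1\theta_2}\bigr) R_{2\theta_2}*R_{1\theta_1}$.

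The only real obstacle is bookkeeping in the $(2,1)$ entry: one must carefully track which convolutions can be simplified by the scalar resolvent identity and which cannot (e.g.\ $R_{2\theta_2}*K_1$ does not simplify, so the calculation must be arranged so that such cross terms cancel rather than persist). Since all convolutions here are scalar and the algebra is linear, no further analytic input beyond Lemma \ref{lemma:expmu} and the definition of the scalar resolvents is needed.
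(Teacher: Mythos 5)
Your proposal is correct, but it reaches the result by a genuinely different route from the paper in two places. First, you \emph{derive} the $(2,1)$ entry of $R_\Theta$ by solving a linear scalar Volterra equation with the resolvent-inversion principle, whereas the paper simply writes down the candidate $R_\Theta = \left(\begin{smallmatrix} R_{1\theta_1} & 0 \\ \beta_1R_{1\theta_1} + \frac{\beta_2}{\theta_2}R_{2\theta_2} - (\beta_1+\frac{\beta_2}{\theta_2})R_{2\theta_2}*R_{1\theta_1} & R_{2\theta_2}\end{smallmatrix}\right)$ and verifies the resolvent identity by direct convolution algebra; your version is more systematic and explains where the formula comes from, at the cost of more bookkeeping. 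Second, for the passage from $R_\Theta$ to $E_\Theta$ the paper uses a much slicker observation: from $E_\Theta = K - R_\Theta*K$ one gets $E_\Theta\Theta = K\Theta - R_\Theta*K\Theta = K\Theta - (K\Theta - R_\Theta) = R_\Theta$, hence $E_\Theta = R_\Theta\Theta^{-1}$, and the stated matrix is just $R_\Theta$ times the explicit inverse of the constant lower-triangular matrix $\Theta$. This one-line trick eliminates the entire componentwise computation of $K - R_\Theta*K$ and the cancellation of the $R_{2\theta_2}*K_1$ cross terms that you have to track by hand (your cancellation does go through; I checked it). Two small points to tighten: the equation $r_{21} + \theta_2 K_2 * r_{21} = Y$ with $Y$ built from $R_{1\theta_1}$ comes from the $(2,1)$ entry of $K\Theta * R_\Theta = K\Theta - R_\Theta$, not of $R_\Theta*K\Theta = K\Theta - R_\Theta$ (the latter gives a convolution against $\theta_1 K_1$ instead, which is equally solvable but with $R_{1\theta_1}$); and since you determine $R_\Theta$ from only one of the two two-sided identities, you should either check the other one as the paper does, or invoke existence and uniqueness of the resolvent of $K\Theta\in L^2_{loc}$ (as in Gripenberg et al.) so that the solution of one identity is automatically the resolvent. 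You also need the associativity of the triple convolutions you use, which the paper justifies by $R_{1\theta_1}, R_{2\theta_2}\in L^2_{loc}$.
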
 

By Proposition \ref{pro:RB}, we can alternatively express $\gamma_1^{(1)}$ as follows. 
\begin{align*}
  \gamma_1^{(1)}(t) =  \sigma_1 e^{-\int_{0}^{t}r(v)dv} \int_t^{T_0} e^{\bar{d}_0(s, t) + \bar{d}_1(s, t)r(t)}\Bigg(c_2\exp\left(  {\int_0^s m_2(v) dv + }  Y^1_t(s)\right)\int_0^{s-t}E_\Theta^{21}(u)du \\
  -\mathbb{E}[z]c_1E_\Theta^{21}(s-t)\Bigg)ds,
\end{align*}
where $E_\Theta^{21} =  \frac{\beta_1}{\theta_1}R_{1\theta_1} -(\frac{\beta_1}{\theta_1} + \frac{\beta_2}{\theta_1\theta_2})R_{2\theta_2}*R_{1\theta_1}$. It is then clear that the cointegration and correlation affect the equilibrium hedging  strategy through the values of $\beta_1$ and $\beta_2$ in $E_\Theta^{21}$. In our model setting, the coefficient $\beta_1$ is associated with both cointegration and correlation between $\mu_1$ and $\mu_2$, and $\beta_2$ only contributes to the cointegration. When $\mu_1$ and $\mu_2$ are independent, we have $\beta_1 =\beta_2 = 0$ and $E_\Theta^{21} = 0$ so that $\gamma_1^{(1)} = 0$. The equilibrium hedging strategy $u_1^*$ in 
\eqref{u1*Vas}  reduces to  
\[ u^*_1(t) =   e^{\int_0^tr(s)ds} \frac{\lambda\varphi_1(t)}{\psi_1(T-t)\sigma_1},\]
where $\psi_1$ is defined in Corollary \ref{cor:Vas}.

By setting $H_1 = H_2 = \frac{1}{2}$, we have $K_1 = K_2 \equiv 1$ and recover the equilibrium hedging strategy under the Markovian model. Hence, the effect of LRD property on the equilibrium strategy is reflected by the terms $\psi_1$ and $\gamma_1^{(1)}$ in \eqref{u1*Vas}. Note that $\psi_1 = \frac{1}{\theta_1}(e^{-\theta_1 t} -1)$ under the Markovian case. Therefore, the LRD property primarily adjusts $E_\Theta$. In fact, the term $\exp(Y_t^1)$ is influenced by the related $\psi$ through $E_\Theta$. Under the Markovain model,
 \[E_\Theta^{21}(t) = \beta_1 e^{-\theta_1 t} - \frac{\beta_1 \theta_2 + \beta_2}{\theta_2 -\theta_1}\left(e^{-\theta_1 t} - e^{-\theta_2 t}\right). \]
The direct comparison shows that the LRD adjustment on $E_\Theta^{21}$ reads
\begin{align*}
   \frac{\beta_1}{\theta_1}R_{1\theta_1} -(\frac{\beta_1}{\theta_1} + \frac{\beta_2}{\theta_1\theta_2})R_{2\theta_2}*R_{1\theta_1} - \beta_1 e^{-\theta_1 t} + \frac{\beta_1 \theta_2 + \beta_2}{\theta_2 -\theta_1}\left(e^{-\theta_1 t} - e^{-\theta_2 t}\right). 
\end{align*}

\subsection{Volatility-driven market price of risk}
Another consideration is that the market prices of risks are related to their volatility. In this part of the paper, we investigate the hedging strategy under the following assumption.
\begin{assumption}\label{assume2}
The market prices of mortality and interest risks are related to their volatility. Specifically, $\varphi(s) = \sigma(\mu(s))(\widetilde{\varphi}_1, 0)^\top$ and $\vartheta(s) = \widetilde{\vartheta} \sigma_r(r(s))$, where $\widetilde{\varphi}_1$ and $\widetilde{\vartheta}$ are constant values.
\end{assumption} 
\subsubsection{CIR setting}
\begin{assumption}\label{assume_CIR}
The volatility matrix of the   {force of mortality}  $$\sigma(\mu(s)) = \left(\begin{array}{cc}
\widetilde{\sigma}_1\sqrt{\mu_1(s)} & 0\\
0  & \widetilde{\sigma}_2\sqrt{\mu_2(s)}
\end{array} \right)$$ with constant values $\widetilde{\sigma}_1$ and $\widetilde{\sigma}_2$. The volatility of the interest rate $\sigma_r(r(s)) = \widetilde{\sigma}_r\sqrt{r(s)}$ for a positive constant $\widetilde{\sigma}_r$. 
\end{assumption}
Further, under the CIR setting, we make the following assumption.
\begin{assumption}\label{assume:expmu}
\begin{equation}\label{condition:expmu1}
\mathbb{E}\left[\exp\left(C\int_0^{T_0}\mu_1(s)ds\right)\right] < \infty    
\end{equation}
for a sufficiently large constant $C$, and $\sup_{0\leq t \leq T_0}\mathbb{E}\left[e^{-16\int_0^t\mu_2(s)ds}\right] < \infty$.
\end{assumption}
Under Assumption \ref{assume:expmu}, Novikov's condition in \eqref{Novikov} is satisfied and $\pi \in L_\mathcal{F}^{16}(0, T_0; \mathbb{R}, \mathbb{P})$.
Under the CIR setting in Assumption \ref{assume_CIR},     {$\mu$}  in \eqref{mu0} becomes
\begin{align}
\mu_1(t) & = \mu_1(0) + \int_{0}^{t}K_1(t-s)(b_1 - \theta_1 \mu_1(s))ds + \int_{0}^{t}K_1(t-s)\widetilde{\sigma}_1\sqrt{\mu_1(s)}dW_1(s),\label{mu1CIR}\\
\mu_2(t) & = \mu_2(0) + \int_{0}^{t}\beta_1 K_1(t-s)(b_1 - \theta_1 \mu_1(s))ds  + \int_0^tK_2(t-s)(b_2 - \beta_2 \mu_1(s) - \theta_2 \mu_2(s))ds \nonumber \\
& + \int_{0}^{t}\beta_1 K_1(t-s)\widetilde{\sigma}_1\sqrt{\mu_1(s)}dW_1(s) + \int_0^t K_2(t-s)\widetilde{\sigma}_2\sqrt{\mu_2(s)} dW_2(s); \label{mu2CIR}
\end{align}
The interest rate follows 
\[	dr(t)=(b_r -  \theta_r r(t))dt +\widetilde{\sigma}_r \sqrt{r(s)} dW_r(t).\]
\begin{pro}\label{Pro:CIR}
Under Assumptions \ref{assume2}, \ref{assume_CIR}, and \ref{assume:expmu}, we have
$$\sigma_l(s)  =  \left(\psi'_1(T-s)\widetilde{\sigma}_1\sqrt{\mu_1(s)},~ 0 \right)^\top. $$
where $\psi'_1$ solves the Volterra--Riccati equation $\psi'_1 = (-1 - (\theta_1 - \widetilde{\varphi}_1\widetilde{\sigma}_1^2)\psi'_1+ \frac{1}{2}\widetilde{\sigma}_1^2(\psi'_1)^2)*K_1$.
In addition, $\sigma_b(s) = d'_1(s)\widetilde{\sigma}_r \sqrt{r(s)}$ with $d'_1( s) = \frac{1}{\theta_r-\widetilde{\vartheta}\widetilde{\sigma}_r^2}\left(e^{-(\theta_r-\widetilde{\vartheta}\widetilde{\sigma}_r^2)(T-s)} -1\right)$.
\end{pro}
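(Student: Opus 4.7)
The plan is to obtain the longevity-bond volatility $\sigma_l$ by applying Lemma \ref{lemma:expmu} to $\mu_1$ under the pricing measure $\mathbb{Q}$, and the interest-rate-bond volatility $\sigma_b$ by the classical affine transform for the CIR short rate.

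First, I will verify that the affine-retaining measure change is legitimate. Under Assumptions \ref{assume2} and \ref{assume_CIR}, the market price of risk takes the form $\zeta = (\widetilde{\varphi}_1\widetilde{\sigma}_1\sqrt{\mu_1},\, 0,\, \widetilde{\vartheta}\widetilde{\sigma}_r\sqrt{r})^\top$, and Assumption \ref{assume:expmu} supplies the exponential-moment bound needed for Novikov's condition \eqref{Novikov}. Absorbing the market prices of risk into the drifts under $\mathbb{Q}$ produces $dr = (b_r - (\theta_r - \widetilde{\vartheta}\widetilde{\sigma}_r^2) r)dt + \widetilde{\sigma}_r\sqrt{r}\, dW_r^{\mathbb{Q}}$ and a Volterra CIR equation for $\mu_1$ with modified mean-reversion rate $\theta_1 - \widetilde{\varphi}_1\widetilde{\sigma}_1^2$. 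Since $\mu_1$ does not depend on $\mu_2$, the $\mu_1$-subsystem is a genuinely scalar affine Volterra process under $\mathbb{Q}$, which is crucial for applying Lemma \ref{lemma:expmu} in its scalar form.

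Next, I will factor $\mathcal{B}_L(t,T) = \mathcal{B}(t,T)\,e^{-\int_t^T m_1(s)ds}\, \mathbb{E}^{\mathbb{Q}}[e^{-\int_t^T \mu_1(s)ds}|\mathcal{F}_t]$ and apply Lemma \ref{lemma:expmu} to the final factor with $f = -1$. Specialising the Volterra--Riccati equation to the scalar CIR covariance $a(\mu_1) = \widetilde{\sigma}_1^2\mu_1$ (so that $A(\psi) = \widetilde{\sigma}_1^2\psi^2$) yields exactly $\psi'_1 = (-1 - (\theta_1 - \widetilde{\varphi}_1\widetilde{\sigma}_1^2)\psi'_1 + \frac{1}{2}\widetilde{\sigma}_1^2(\psi'_1)^2) * K_1$, and the martingale part in \eqref{Y} contributes the $dW_1^{\mathbb{Q}}$ coefficient $\psi'_1(T-s)\widetilde{\sigma}_1\sqrt{\mu_1(s)}$. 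Applying It\^o's formula to $\log \mathcal{B}_L$ then isolates the $W$-diffusion vector $\sigma_l(s) = (\psi'_1(T-s)\widetilde{\sigma}_1\sqrt{\mu_1(s)},\,0)^\top$, with the zero second entry reflecting that $\mu_1$ is not driven by $W_2$. For $\sigma_b$, I apply the affine transform detailed in Appendix \ref{appendix:affine} to the CIR dynamics of $r$ under $\mathbb{Q}$, obtaining $\mathcal{B}(t,T) = \exp(\bar{d}_0(t,T) + d'_1(t)r(t))$ with the stated form of $d'_1(t)$, after which It\^o's formula reads off $\sigma_b(s) = d'_1(s)\widetilde{\sigma}_r\sqrt{r(s)}$.

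The main technical hurdle will be controlling the Volterra--Riccati equation for $\psi'_1$: the quadratic term $\frac{1}{2}\widetilde{\sigma}_1^2(\psi'_1)^2$ creates a potential for blow-up, so global existence of $\psi'_1 \in L^2([0,T])$ and, equivalently, the validity of the exponential-affine formula \eqref{expmu} as a genuine expectation rather than merely a local-martingale identity, are non-trivial. This is exactly where the sufficiently large constant $C$ in Assumption \ref{assume:expmu} enters: the exponential-moment bound on $\int_0^{T_0}\mu_1(s)ds$ is the sharp criterion used within the affine Volterra framework of \cite{AJ} to promote the candidate $\exp(Y_t(T))$ from a local martingale to a true $\mathbb{Q}$-martingale, which is the step that closes the argument.
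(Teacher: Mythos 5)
Your proposal is correct and follows essentially the same route as the paper's proof: pass to the pricing measure $\mathbb{Q}$ (which preserves the affine structure and shifts the mean-reversion rates to $\theta_1-\widetilde{\varphi}_1\widetilde{\sigma}_1^2$ and $\theta_r-\widetilde{\vartheta}\widetilde{\sigma}_r^2$), apply Lemma \ref{lemma:expmu} and It\^o's formula to read off $\sigma_l$, and use the affine transform of Appendix \ref{appendix:affine} for $\sigma_b$. The only cosmetic difference is that the paper applies Lemma \ref{lemma:expmu} to the full two-dimensional system with $f=(-1,0)$ and then deduces $\psi_2'=0$, whereas you work directly with the scalar $\mu_1$-subsystem; your added remarks on Novikov's condition and on the well-posedness of the quadratic Volterra--Riccati equation are technical points the paper leaves implicit rather than a different argument.
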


To present the closed-form solution to BSDE \eqref{BSDE:gamma1} explicitly, we introduce the pricing measure $\widetilde{\mathbb{P}}$ as follows.
 $$\frac{d\widetilde{\mathbb{P}}}{d\mathbb{P}} =  \exp{\left( \int_0^t -(\widetilde{\varphi}_1\widetilde{\sigma}_1\sqrt{\mu_1(s)}, 0, \widetilde{\vartheta}\widetilde{\sigma}_r\sqrt{r(s)})d\boldsymbol{W}(s)- \frac{1}{2}(\widetilde{\varphi}_1^2\widetilde{\sigma}_1^2\mu_1(s) + {\widetilde{\vartheta}}^2\widetilde{\sigma}_r^2 r(s))ds \right)}.$$
 $\widetilde{\boldsymbol{W}}(t) \triangleq \boldsymbol{W}(t) + \int_0^t (\widetilde{\varphi}_1\widetilde{\sigma}_1\sqrt{\mu_1(s)}, 0, \widetilde{\vartheta}\widetilde{\sigma}_r\sqrt{r(s)})^\top ds$ is a standard Brownian motion under measure $\widetilde{\mathbb{P}}$. Let $\widetilde{\mathbb{E}}[\cdot]$ denote the expectation under the measure $\widetilde{\mathbb{P}}$. 
\begin{theorem}\label{solutionCIR}
Under Assumptions \ref{assume2}, \ref{assume_CIR}, and \ref{assume:expmu},  the BSDE \eqref{BSDE:gamma1} admits a unique solution $(\Gamma^{(1)}_t, \gamma^{(1)}(t)) \in S_\mathcal{F}^4(0, T_0; \mathbb{R}, \mathbb{P})\times H_\mathcal{F}^4(0, T_0; \mathbb{R}^3, \mathbb{P})$ given by
\begin{align*}
\Gamma^{(1)}_t = \int_t^{T_0} \lambda \left(\widetilde{\varphi}_1^2\widetilde{\sigma}_1^2\widetilde{\mathbb{E}}_t[  {\hat{\mu}_1(s)}] + \widetilde{\vartheta}^2\widetilde{\sigma}_r^2 \widetilde{\mathbb{E}}_t[r(s)]\right) - e^{-\int_{0}^{t}r(v)dv}e^{\widetilde{d}_0(s, t) + \widetilde{d}_1(s, t)r(t)}\\
\bigg( c_2\exp\left(  {\int_0^s - m_2(v) dv + }  Y^2_t(s)\right)  + \mathbb{E}[z]c_1\widetilde{\mathbb{E}}_t[  {\hat{\mu}_2(s)}]\bigg)ds,
\end{align*}
and 
\begin{align*}
  \left(\begin{array}{c}
    \gamma^{(1)}_1(t) \\
    \gamma^{(1)}_2(t)
  \end{array} \right) = \int_t^{T_0}\lambda\widetilde{\varphi}_1^2\widetilde{\sigma}_1^2\left(\begin{array}{c}
      E_{B_1}(s-t)\widetilde{\sigma}_1\sqrt{\mu_1(t)}   \\
      0 
  \end{array}\right) -e^{-\int_{0}^{t}r(v)dv}e^{\widetilde{d}_0(s, t) + \widetilde{d}_1(s, t)r(t)}\\
  \Bigg( c_2\exp\left(Y^2_t(s)\right)\sigma(\mu(t))\widetilde{\psi}(s-t)^\top
  + \mathbb{E}[z]c_1\left(\begin{array}{c}
      \widetilde{\sigma}_1\sqrt{\mu_1(s)}E_{\widetilde{\Theta}}^{21}(s-t) \\
      \widetilde{\sigma}_2\sqrt{\mu_2(s)}E_{\widetilde{\Theta}}^{22}(s-t)
  \end{array} \right)\Bigg)ds,
\end{align*}
\begin{align*}
\gamma_3^{(1)}(t) = \int_t^{T_0}\lambda \widetilde{\vartheta}^2\widetilde{\sigma}_r^2 e^{-\theta_r(s-t)}\widetilde{\sigma}_r\sqrt{r(t)} -e^{-\int_{0}^{t}r(v)dv}e^{\widetilde{d}_0(s, t) + \widetilde{d}_1(s, t)r(t)}\widetilde{d}_1(s, t)\widetilde{\sigma}_r\sqrt{r(t)}\\
\Bigg(c_2\exp\left(  {\int_0^s - m_2(v) dv + } 
 Y^2_t(s)\right)
+ \mathbb{E}[z]c_1\widetilde{\mathbb{E}}_t[  {\hat{\mu}_2(s)}]\Bigg)ds,
\end{align*}
where $B_1 = -\theta_1$;  $Y^2_t(s)$ is equivalent to $Y_t(s)$ in \eqref{Y} and  $\widetilde{\mathbb{E}}_t[\hat{\mu}_2(s)] = m_2(s) + \widetilde{\mathbb{E}}_t[\mu_2(s)]$ where    $\widetilde{\mathbb{E}}_t[\mu_2(s)]$ is as defined in \eqref{expectation:mu}, with $f = (0, -1)$, $b(s) = (b_1, b_2)$, $\Theta$ replaced by $\widetilde{\Theta} \triangleq \left(\begin{array}{cc}
\theta_1 + \widetilde{\varphi}_1\widetilde{\sigma}_1^2 & 0\\
\beta_2 & \theta_2
\end{array}\right)$, $W(s)$ replaced by $\widetilde{W}(s)$, and $\psi$ replaced by $\widetilde{\psi}$, which solves the Volterra--Riccati equation $\widetilde{\psi} = ((0, -1) - \widetilde{\psi}\widetilde{\Theta} + \frac{1}{2}(\widetilde{\sigma}_1^2\widetilde{\psi}_1^2, \widetilde{\sigma}_2^2\widetilde{\psi}_2^2))*K$; $E_\Theta$ replaced by $E_{\widetilde{\Theta}} \triangleq \left(\begin{array}{cc}
    E_{\widetilde{\Theta}}^{11} & E_{\widetilde{\Theta}}^{12}  \\
    E_{\widetilde{\Theta}}^{21} & E_{\widetilde{\Theta}}^{22}
\end{array}\right)$ in \eqref{expectation:mu}.
\end{theorem}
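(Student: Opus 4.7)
The plan is to mimic the structure of the proof of Theorem \ref{thm:gam_Vas}, with modifications tailored to the square-root volatility structure of the CIR setting and the volatility-driven form of the market price of risk in Assumption \ref{assume2}. First I would algebraically simplify $\nu(t)^\top(\sigma_S(t)^\top\sigma_S(t))^{-1}\nu(t)$ and $\nu(t)^\top(\sigma_S(t)^\top\sigma_S(t))^{-1}\sigma_S(t)^\top$ using the explicit expressions for $\sigma_l$ and $\sigma_b$ from Proposition \ref{Pro:CIR}; the same $2\times 2$ inversion that was performed in the Vasicek case yields
\begin{align*}
\nu^\top(\sigma_S^\top\sigma_S)^{-1}\nu = \widetilde{\varphi}_1^2\widetilde{\sigma}_1^2\mu_1(t) + \widetilde{\vartheta}^2\widetilde{\sigma}_r^2 r(t), \qquad \nu^\top(\sigma_S^\top\sigma_S)^{-1}\sigma_S^\top = \bigl(\widetilde{\varphi}_1\widetilde{\sigma}_1\sqrt{\mu_1(t)},\, 0,\, \widetilde{\vartheta}\widetilde{\sigma}_r\sqrt{r(t)}\bigr) =: \xi(t).
\end{align*}
Assumption \ref{assume:expmu} guarantees Novikov's condition \eqref{Novikov} for $\xi$, so the Girsanov change of measure to $\widetilde{\mathbb{P}}$ is well-defined and $\widetilde{\boldsymbol{W}}$ is a Brownian motion under $\widetilde{\mathbb{P}}$.

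The BSDE \eqref{BSDE:gamma1} then rewrites under $\widetilde{\mathbb{P}}$ with driver $-\bigl\{\lambda(\widetilde{\varphi}_1^2\widetilde{\sigma}_1^2\mu_1(t)+\widetilde{\vartheta}^2\widetilde{\sigma}_r^2 r(t)) - e^{-\int_0^t r(v)dv}(\pi(t)+\mathbb{E}[z]c_1\hat{\mu}_2(t))\bigr\}$, and I would express $\Gamma^{(1)}_t$ as a conditional expectation under $\widetilde{\mathbb{P}}$ of its terminal integral. The affine Volterra structure is preserved under the change of measure, but with the mean-reversion matrix modified from $\Theta$ to $\widetilde{\Theta}=\begin{pmatrix}\theta_1+\widetilde{\varphi}_1\widetilde{\sigma}_1^2 & 0\\ \beta_2 & \theta_2\end{pmatrix}$ and the effective CIR mean reversion speed $\theta_r-\widetilde{\vartheta}\widetilde{\sigma}_r^2$ for $r$. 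Applying Lemma \ref{lemma:expmu} with $f=(0,-1)$ gives $\widetilde{\mathbb{E}}_t\!\left[e^{-\int_0^s\hat{\mu}_2(v)dv}\right]=\exp\!\left(\int_0^s-m_2(v)dv+Y^2_t(s)\right)$ with $Y^2$ driven by the Volterra–Riccati solution $\widetilde{\psi}$, and the classical affine bond formula gives $\widetilde{\mathbb{E}}_t[e^{-\int_t^s r(v)dv}]=e^{\widetilde{d}_0(s,t)+\widetilde{d}_1(s,t)r(t)}$. The linear conditional expectations $\widetilde{\mathbb{E}}_t[\mu_i(s)]$ and $\widetilde{\mathbb{E}}_t[r(s)]$ follow from \eqref{expectation:mu} with resolvent $E_{\widetilde{\Theta}}$ and from the scalar CIR formula, respectively; these steps directly yield the claimed form of $\Gamma^{(1)}_t$.

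To extract $\gamma^{(1)}$, I would apply It\^o's formula (and the stochastic integral representation in \eqref{expectation:mu}) to each of the conditional expectation factors appearing in the integrand, reading off the diffusion coefficients multiplying $d\widetilde{W}_1,d\widetilde{W}_2,d\widetilde{W}_r$. The pension term $\exp(Y^2_t(s))$ contributes $\sigma(\mu(t))\widetilde{\psi}(s-t)^\top$ via \eqref{Y}; the $\mathbb{E}[z]c_1\hat{\mu}_2$ term contributes $(\widetilde{\sigma}_1\sqrt{\mu_1(t)}E_{\widetilde{\Theta}}^{21},\widetilde{\sigma}_2\sqrt{\mu_2(t)}E_{\widetilde{\Theta}}^{22})^\top$ through \eqref{expectation:mu}; the $\lambda\widetilde{\varphi}_1^2\widetilde{\sigma}_1^2\widetilde{\mathbb{E}}_t[\mu_1(s)]$ piece contributes $E_{B_1}(s-t)\widetilde{\sigma}_1\sqrt{\mu_1(t)}$ in the first Brownian component (with $B_1=-\theta_1$ from the $\mu_1$-only subsystem); and the bond/$r$-terms contribute $\widetilde{d}_1(s,t)\widetilde{\sigma}_r\sqrt{r(t)}$ and $e^{-\theta_r(s-t)}\widetilde{\sigma}_r\sqrt{r(t)}$ in the third Brownian component. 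Summation over $s\in[t,T_0]$ produces exactly the displayed expressions for $\gamma^{(1)}_1,\gamma^{(1)}_2,\gamma^{(1)}_3$.

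Existence and uniqueness in $S^4_{\mathcal{F}}(0,T_0;\mathbb{R},\mathbb{P})\times H^4_{\mathcal{F}}(0,T_0;\mathbb{R}^3,\mathbb{P})$ would be proved exactly as in Theorem \ref{thm:gam_Vas}: using Lemmas \ref{lemma:E_mu} and \ref{Lemma:interest}, Assumption \ref{assume:expmu} (which in particular forces $\pi\in L^{16}_{\mathcal{F}}$), and repeated applications of H\"older's inequality, the driver lies in $H^q_{\mathcal{F}}(0,T_0;\mathbb{R},\widetilde{\mathbb{P}})$ for $q$ large enough so that Theorem 5.1 in \cite{EPQ} gives a unique solution under $\widetilde{\mathbb{P}}$; a H\"older–Novikov argument identical to the Vasicek case transfers the $q=4$ integrability back to $\mathbb{P}$. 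The main obstacle I anticipate is the bookkeeping required for the $\gamma^{(1)}$ identification: because $\Gamma^{(1)}_t$ is an $s$-integral of products of conditional expectations (the bond factor $e^{\widetilde{d}_0+\widetilde{d}_1 r(t)}$, the Volterra–Riccati exponential $\exp(Y^2_t(s))$, and the affine linear means), one must carefully disentangle which Brownian motion each $\sqrt{\mu_i}$ or $\sqrt{r}$ factor belongs to, while simultaneously tracking the shift of drift parameters ($\Theta\to\widetilde{\Theta}$, $\theta_r\to\theta_r-\widetilde{\vartheta}\widetilde{\sigma}_r^2$) induced by the measure change; the exponential moment condition \eqref{condition:expmu1} is what ultimately underwrites both Novikov and the $L^4$ estimates at each step.
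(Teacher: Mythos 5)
Your proposal follows exactly the route the paper intends: the paper's own proof of Theorem \ref{solutionCIR} is omitted with the remark that it is ``similar to the proof of Theorem \ref{thm:gam_Vas}'', and your reconstruction---the algebraic reduction of $\nu^\top(\sigma_S^\top\sigma_S)^{-1}\nu$ and $\xi$, the Girsanov change to $\widetilde{\mathbb{P}}$ justified by Assumption \ref{assume:expmu}, Lemma \ref{lemma:expmu} together with the affine bond formula for the closed forms, and the \cite{EPQ}/H\"older--Novikov argument for existence, uniqueness and the transfer of $L^4$ integrability back to $\mathbb{P}$---is precisely that argument adapted to the square-root volatilities. The only slip is a sign: since $\widetilde{W}_r(t) = W_r(t) + \int_0^t \widetilde{\vartheta}\widetilde{\sigma}_r\sqrt{r(s)}\,ds$, the mean-reversion speed of $r$ under $\widetilde{\mathbb{P}}$ becomes $\theta_r + \widetilde{\vartheta}\widetilde{\sigma}_r^2$ rather than the $\theta_r - \widetilde{\vartheta}\widetilde{\sigma}_r^2$ you quote, which is the speed under the pricing measure $\mathbb{Q}$ in Proposition \ref{Pro:CIR} (consistently, the theorem's $\widetilde{\Theta}$ carries $\theta_1 + \widetilde{\varphi}_1\widetilde{\sigma}_1^2$, as you correctly state).
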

\begin{proof}
The proof is similar to the proof of Theorem \ref{thm:gam_Vas} and is thus omitted.
\end{proof}

\begin{cor}
    Under Assumption \ref{assume2}, \ref{assume_CIR}, and \ref{assume:expmu},  the unique admissible equilibrium investment and hedging strategies $u^*$ is given by
    \begin{align}
       u^*_1(t) & =  e^{\int_0^tr(s)ds}\left(\frac{\lambda\widetilde{\varphi}_1}{\psi'_1(T-t)} - \frac{\gamma^{(1)}_1(t)}{\psi'_1(T-t)\widetilde{\sigma}_1\sqrt{\mu_1(t)}} \right), \\
       u^*_2(t) & =  e^{\int_0^tr(s)ds}\left(  - \frac{\lambda \widetilde{\varphi}_1}{\psi'_1(T-t)} + \frac{\lambda\widetilde{\vartheta}}{d'_1( t)} + \frac{\gamma^{(1)}_1(t)}{\psi'_1(T-t)\widetilde{\sigma}_1\sqrt{\mu_1(t)}} - \frac{\gamma^{(1)}_3(t)}{d'_1(t)\widetilde{\sigma}_r \sqrt{r(t )}}\right),
    \end{align}
    where  $\psi'_1$ solves $\psi'_1 = (-1 - (\theta_1 - \widetilde{\varphi}_1\widetilde{\sigma}_1^2)\psi'_1+ \frac{1}{2}\widetilde{\sigma}_1^2(\psi'_1)^2)*K_1$, $d'_1( s) = \frac{1}{\theta_r-\widetilde{\vartheta}\widetilde{\sigma}_r^2}\left(e^{-(\theta_r-\widetilde{\vartheta}\widetilde{\sigma}_r^2)(T-s)} -1\right)$,  and $\gamma^{(1)}_1$ and $\gamma^{(1)}_3$ are given in Theorem \ref{solutionCIR}.
\end{cor}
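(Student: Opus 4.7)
The plan is to mirror the proof of Corollary \ref{cor:Vas}, substituting the CIR-specific expressions from Proposition \ref{Pro:CIR} and the $\gamma^{(1)}$ formulas from Theorem \ref{solutionCIR} into the general equilibrium representation \eqref{u*}. First I would compute $(\sigma_S^\top\sigma_S)^{-1}$: with $\sigma_{l1}(t)=\psi'_1(T-t)\widetilde{\sigma}_1\sqrt{\mu_1(t)}$ and $\sigma_b(t)=d'_1(t)\widetilde{\sigma}_r\sqrt{r(t)}$, the matrix $\sigma_S^\top\sigma_S$ has the same block structure as in the Vasicek case with determinant $\sigma_{l1}^2\sigma_b^2$. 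After substituting $\varphi_1(t)=\widetilde{\varphi}_1\widetilde{\sigma}_1\sqrt{\mu_1(t)}$ and $\vartheta(t)=\widetilde{\vartheta}\widetilde{\sigma}_r\sqrt{r(t)}$, the first entry of $(\sigma_S^\top\sigma_S)^{-1}(\lambda\nu-\sigma_S^\top\gamma^{(1)})$ collapses to $\lambda\varphi_1/\sigma_{l1}-\gamma_1^{(1)}/\sigma_{l1}$ and the second entry to $-\lambda\varphi_1/\sigma_{l1}+\gamma_1^{(1)}/\sigma_{l1}+\lambda\vartheta/\sigma_b-\gamma_3^{(1)}/\sigma_b$, reproducing the announced formulas for $u_1^*$ and $u_2^*$ after cancelling the relevant $\widetilde{\sigma}_1\sqrt{\mu_1}$ and $\widetilde{\sigma}_r\sqrt{r}$ factors.

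Next I would verify the two conditions of Theorem \ref{Thm:unique}. The identity $\sigma_S(\sigma_S^\top\sigma_S)^{-1}\nu(t)=(\widetilde{\varphi}_1\widetilde{\sigma}_1\sqrt{\mu_1(t)},0,\widetilde{\vartheta}\widetilde{\sigma}_r\sqrt{r(t)})^\top$, which is essentially a linearization reading of the matrix computation above, reduces condition \ref{condition1} to controlling $\mathbb{E}\bigl[\exp\bigl(C'\int_0^{T_0}(\mu_1(s)+r(s))ds\bigr)\bigr]$ for sufficiently large $C'$. This is handled by Assumption \ref{assume:expmu} for the $\mu_1$ contribution and by Lemma \ref{Lemma:interest} for the $r$ contribution, combined through a Cauchy--Schwarz splitting, while the $H^q$-integrability piece of \ref{condition1} is immediate from Lemmas \ref{lemma:E_mu} and \ref{Lemma:interest}. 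Condition (ii) of Theorem \ref{Thm:unique} is delivered directly by Theorem \ref{solutionCIR}, which asserts existence of a unique $(\Gamma^{(1)},\gamma^{(1)})\in S_\mathcal{F}^4\times H_\mathcal{F}^4$.

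Finally, to check that $u^*$ is admissible in the sense of Theorem \ref{Thm:admissible}, I would bound $u^*$, $\nu^\top u^*$, and $\sigma_S u^*$ by successive applications of H\"older's inequality, isolating the pieces $e^{\int_0^t r(s)ds}$ (handled by Lemma \ref{Lemma:interest}), the square-root singular factors $1/\sqrt{\mu_1(t)}$ and $1/\sqrt{r(t)}$, and the terms coming from $\gamma^{(1)}$ in $H_\mathcal{F}^4$. The principal technical obstacle, and the main departure from the Vasicek proof, is controlling those square-root singularities: the resolution is to observe from the explicit formulas in Theorem \ref{solutionCIR} that $\gamma_1^{(1)}$ carries an explicit multiplicative factor $\widetilde{\sigma}_1\sqrt{\mu_1(t)}$ (through $\sigma(\mu(t))\widetilde{\psi}(s-t)^\top$ and the $E_{B_1}$ term) and $\gamma_3^{(1)}$ carries $\widetilde{\sigma}_r\sqrt{r(t)}$, so that the singular denominators in $u_1^*$ and $u_2^*$ cancel pathwise against these numerators. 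Once this cancellation is exhibited, the remaining $L_\mathcal{F}^4$-integrability of $\nu^\top u^*$ and $\sigma_S u^*$ reduces to polynomial moment estimates already supplied by Lemma \ref{lemma:E_mu}, Lemma \ref{Lemma:interest}, and Assumption \ref{assume:expmu}, after which Theorems \ref{Thm:admissible} and \ref{Thm:unique} together yield the claimed unique admissibility.
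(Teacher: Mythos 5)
Your proposal is correct and follows essentially the route the paper intends: the paper states this corollary without a written proof (deferring to the analogous argument for Corollary \ref{cor:Vas} and the remark that $C>\max\{\tfrac{1}{2}\widetilde{\varphi}_1\widetilde{\sigma}_1^2,8\}$ suffices), and your substitution of $\sigma_{l1}=\psi_1'(T-t)\widetilde{\sigma}_1\sqrt{\mu_1(t)}$, $\sigma_b=d_1'(t)\widetilde{\sigma}_r\sqrt{r(t)}$ into \eqref{u*}, the verification of condition (i) of Theorem \ref{Thm:unique} via Assumption \ref{assume:expmu} and Lemma \ref{Lemma:interest}, and condition (ii) via Theorem \ref{solutionCIR} is exactly that argument. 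Your observation that the apparent $1/\sqrt{\mu_1(t)}$ and $1/\sqrt{r(t)}$ singularities cancel against the explicit $\widetilde{\sigma}_1\sqrt{\mu_1(t)}$ and $\widetilde{\sigma}_r\sqrt{r(t)}$ factors carried by $\gamma_1^{(1)}$ and $\gamma_3^{(1)}$ is the correct (and genuinely necessary) extra step that distinguishes the CIR case from the Vasicek one.
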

Under the volatility-driven market price of risk and the CIR setting, Theorem \ref{solutionCIR} provides the unique equilibrium hedging strategy under the condition in Assumption \ref{assume:expmu}. From the proof, $C > \max\{\frac{1}{2}\widetilde{\varphi}_1\widetilde{\sigma}_1^2, 8\}$ in Assumption \ref{assume:expmu} is sufficient for Novikov's condition \eqref{Novikov}, the admissibility and uniqueness of the control. 
\section{Numerical study}
\label{Sec:numerical}
In this section, we numerically examine the effects of cointegration and the LRD property of the   {force of mortality}  on longevity hedging strategies. The deterministic market price of risk and the Vasicek setting in Assumptions \ref{assume1} and \ref{assume_Vasicek} are considered.  Then,   {the dynamic of $\mu$  is given by} 
\begin{align}\label{Vas_num}
\begin{split}
\mu_1(t) & = \mu_1(0) + \int_{0}^{t}\frac{(t-s)^{H_1 -\frac{1}{2}}}{\Gamma(H_1+\frac{1}{2})}(b_1 - \theta_1 \mu_1(s))ds + \int_{0}^{t}\frac{(t-s)^{H_1 -\frac{1}{2}}}{\Gamma(H_1+\frac{1}{2})}\sigma_1dW_1(s) \\ 
\mu_2(t) & = \mu_2(0) + \beta_1\int_{0}^{t} \frac{(t-s)^{H_1 -\frac{1}{2}}}{\Gamma(H_1+\frac{1}{2})}(b_1 - \theta_1 \mu_1(s))ds  + \int_0^t \frac{(t-s)^{H_2 -\frac{1}{2}}}{\Gamma(H_2+\frac{1}{2})}(b_2-\beta_2\mu_1(s) -\theta_2 \mu_2(s))ds \\
    & + \beta_1 \int_{0}^{t} \frac{(t-s)^{H_1 -\frac{1}{2}}}{\Gamma(H_1+\frac{1}{2})}\sigma_1dW_1(s) + \int_0^t \frac{(t-s)^{H_2 -\frac{1}{2}}}{\Gamma(H_2 +\frac{1}{2})}\sigma_2 dW_2(s).
\end{split}
\end{align}
The interest rate follows the dynamic:
\begin{equation}\label{interest_num}
dr(t) = (b_r - \theta_r r(t))dt + \sigma_rdW_r(t).
\end{equation}

\subsection{Effect of cointegration between   {forces of mortality}}
\label{Sec:num_coin}
We first examine the effect of cointegration under the LRD mortality environment in this part. Here, we focus on a group of the population with the same age, 65, at the hedging start point $t = 0$.   {To perform numerical experiments,  inspired by \cite{Biffis}, we assume that $m_i(t)$, $i =  1, 2$, is calibrated to the  SIM92 table, which is a dataset  from the Italian National Institute of Statistics (ISTAT) that reports Italian population life tables. The numbers of persons surviving to ages from 40 to 100 in SIM92 table are used for  calibration. We assume $m_i(x) = a_mx^{g-1} \text{ for } x\in [40, 100], i = 1, 2$ where $a_m$ and $g$ are two constants. This form is same to  the force of mortality under Weibull mortality model.  By calibration,  $m_i(t) = 0.000004212*(t+25)^{2.68}$, $i = 1, 2$ for $t\in [0, T]$ is obtained.
According to  the empirical studies in \cite{Yan2021}, the Hurst parameter for the national force of mortality is around $0.83$.  In this part, we choose $H_1 = 0.83$ and $H_2 = 0.5$ for the true model. }  
 Then,   $\mu_2$ in \eqref{Vas_num} becomes
\begin{align*}
\mu_2(t) & = \mu_2(0) + \beta_1\int_{0}^{t} \frac{(t-s)^{H_1 -\frac{1}{2}}}{\Gamma(H_1+\frac{1}{2})}(b_1 - \theta_1 \mu_1(s))ds  + \int_0^t (b_2-\beta_2\mu_1(s) -\theta_2 \mu_2(s))ds \\
    & + \beta_1 \int_{0}^{t} \frac{(t-s)^{H_1 -\frac{1}{2}}}{\Gamma(H_1+\frac{1}{2})}\sigma_1dW_1(s) + \int_0^t \sigma_2 dW_2(s)
\end{align*}
which is an RL mixed fractional Brownian motion.    {For the national force of mortality,  we set  $b_1 = 0.0001$, $\theta_1 = 0.5$, $\sigma_1 = 0.002$, and  $\mu_1(0) = 0.00008$. These parameter values  are of  the same orders of magnitude as those in \cite{Biffis} and \cite{WaCW}. In addition,  the parameter values for $\mu_2$ are $b_2 = 0.0001$, $\theta_2 = 0.6$, $\sigma_2 = 0.001$, $\beta_1 = 1$, $\beta_2 = 0.6$,  and $\mu_2(0) = 0.00008$. Consider the hedging  horizon $T_0 = 5$.}   A pair of sample paths for $\mu(\cdot)$ are simulated as shown in Figure \ref{fig:mu}.   {After adding $m_i(t)$  to  $\mu_i(t)$,  the forces of mortality $\hat{\mu}_i$ are presented in Figure \ref{fig:mu_hat}.}
	\begin{figure}[H]
			\centering
			\subfigure{\includegraphics[width= 7.5 cm ]{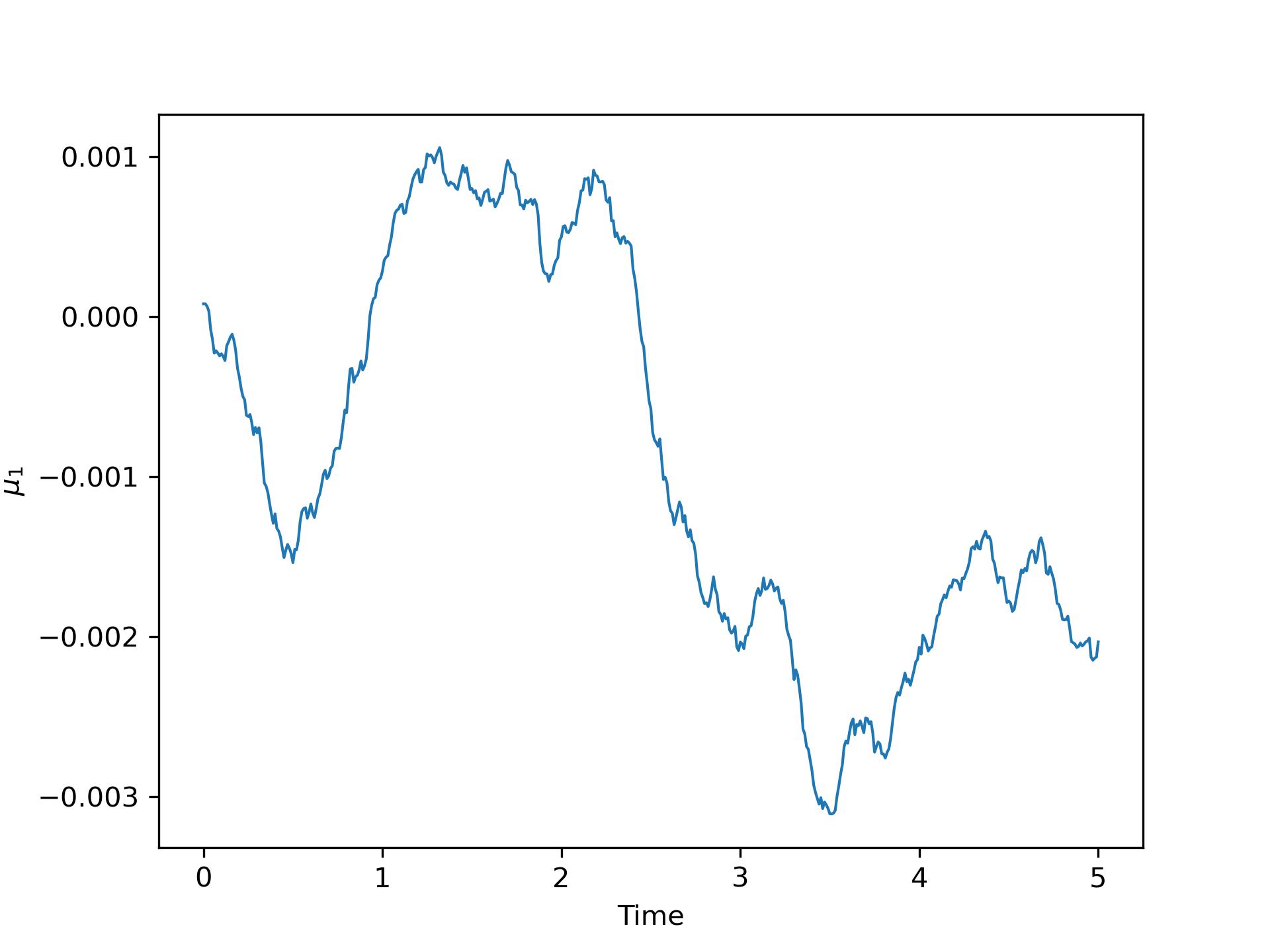}}
			\subfigure{\includegraphics[width= 7.5cm]{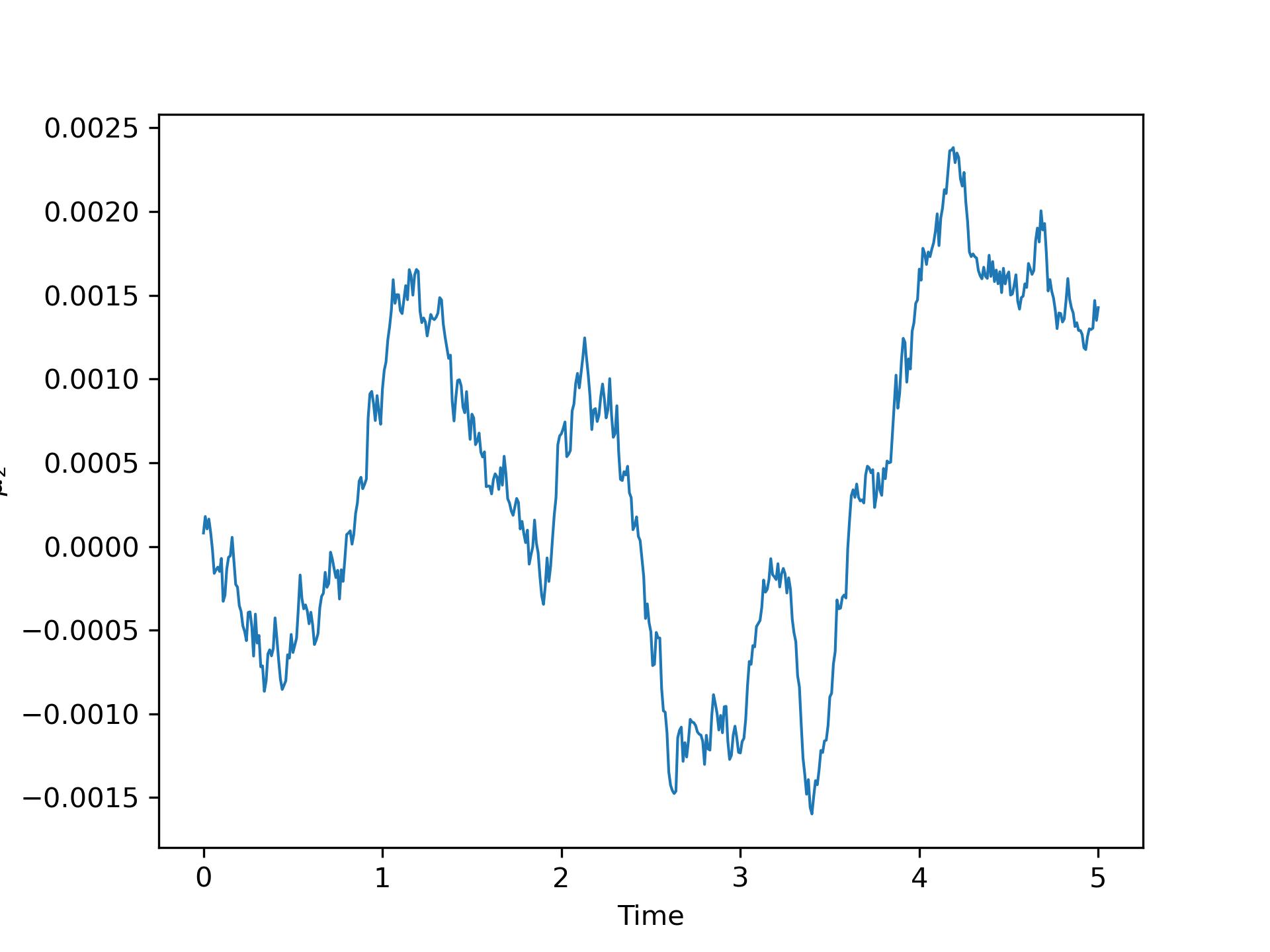}}
			\caption{Simulated sample paths of $\mu_1$ and $\mu_2$}
			\label{fig:mu}
		\end{figure}
	\begin{figure}[H]
			\centering
	\subfigure{\includegraphics[width= 7.5 cm ]{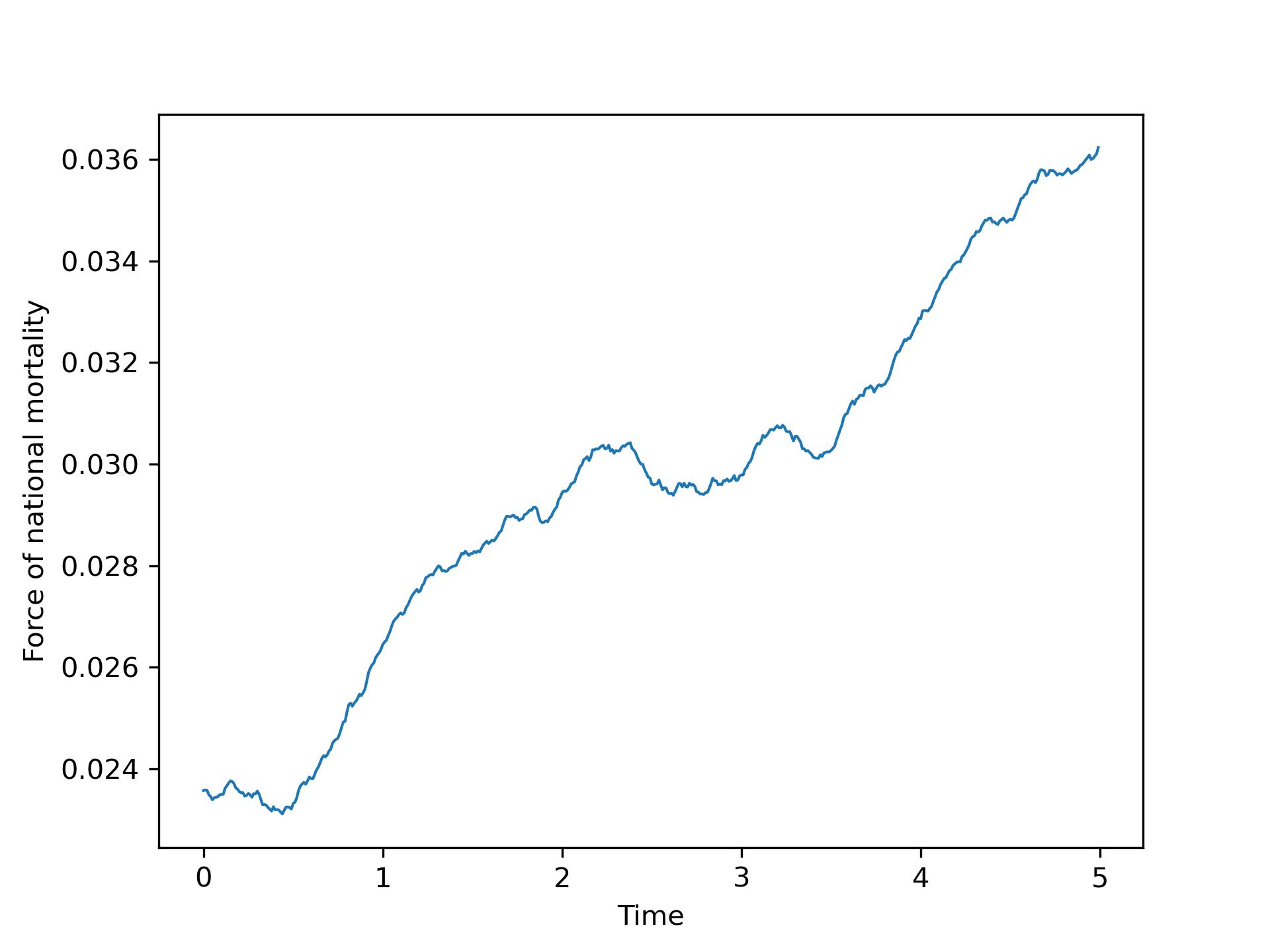}}
	\subfigure{\includegraphics[width= 7.5cm]{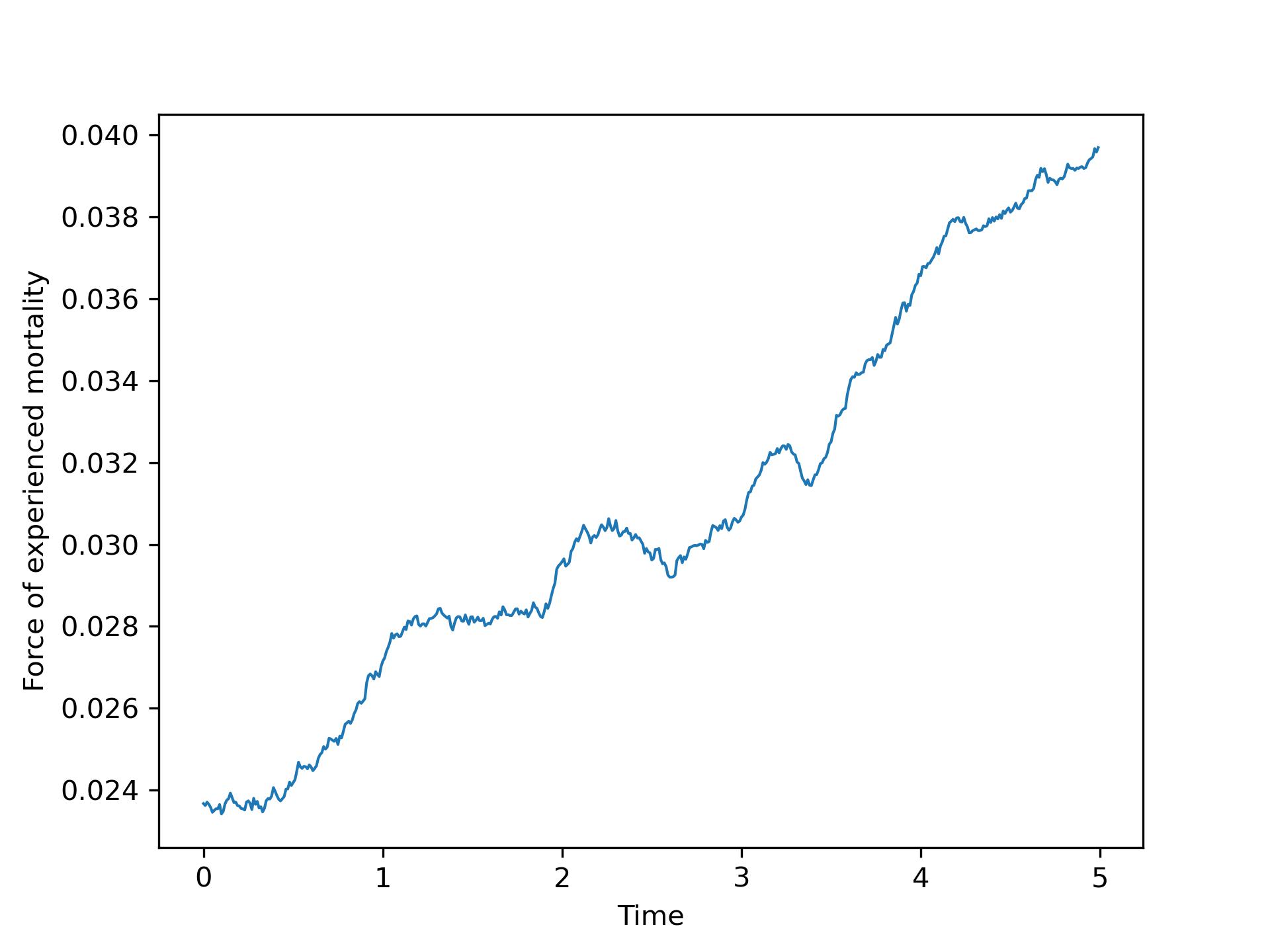}}
			\caption{The sample paths of the forces of mortality}
			\label{fig:mu_hat}
		\end{figure}
To examine the effect of the cointegration, we consider the following two models.
\begin{enumerate}
\item[\textbullet] Model 1: Assumption \ref{assume1} and Vasicek model in \eqref{Vas_num};
\item[\textbullet] Model 2: Assumption \ref{assume1} and Vasicek model in \eqref{Vas_num} with $\beta_1 = \beta_2 = 0$.
\end{enumerate}
The parameter values are estimated using the simulated path of $\mu_2$ under the two different models, in which Model 1 corresponds to the case where the cointegration is considered by the insurer, whereas in Model 2, the insurer ignores the cointegration. The estimated parameters are given in Table \ref{tab:parameter}.
\begin{table}[h]
    \centering
    \begin{tabular}{c|c|c|c|c|c}
    \hline
      & $\beta_1$ & $\beta_2$ & $b_2$ & $\theta_2$ & $\sigma_2$ \\
    \hline
      True value & 1 & 0.6 & 0.0001 & 0.6 & 0.001 \\
    \hline
    Model 1 & 1.044 & 0.586 & 0.0001 & 0.563 &  0.001  \\
    \hline
    Model 2 & 0  & 0  & 0.0037 & 0.165 & 0.001 \\
    \hline
    \end{tabular}
    \caption{Parameter values of $\mu_2$. }
    \label{tab:parameter}
\end{table}
By setting the parameter values of the interest rate \eqref{interest_num} at $r(0) = 0.04$, $b_r = 0.02$, $\theta_r = 0.6$, and $\sigma_r = 0.01$, a sample path for $r(t)$ is simulated, as shown in Figure \ref{fig:r}.
\begin{figure}[H]
    \centering
    \includegraphics[width = 7.5cm]{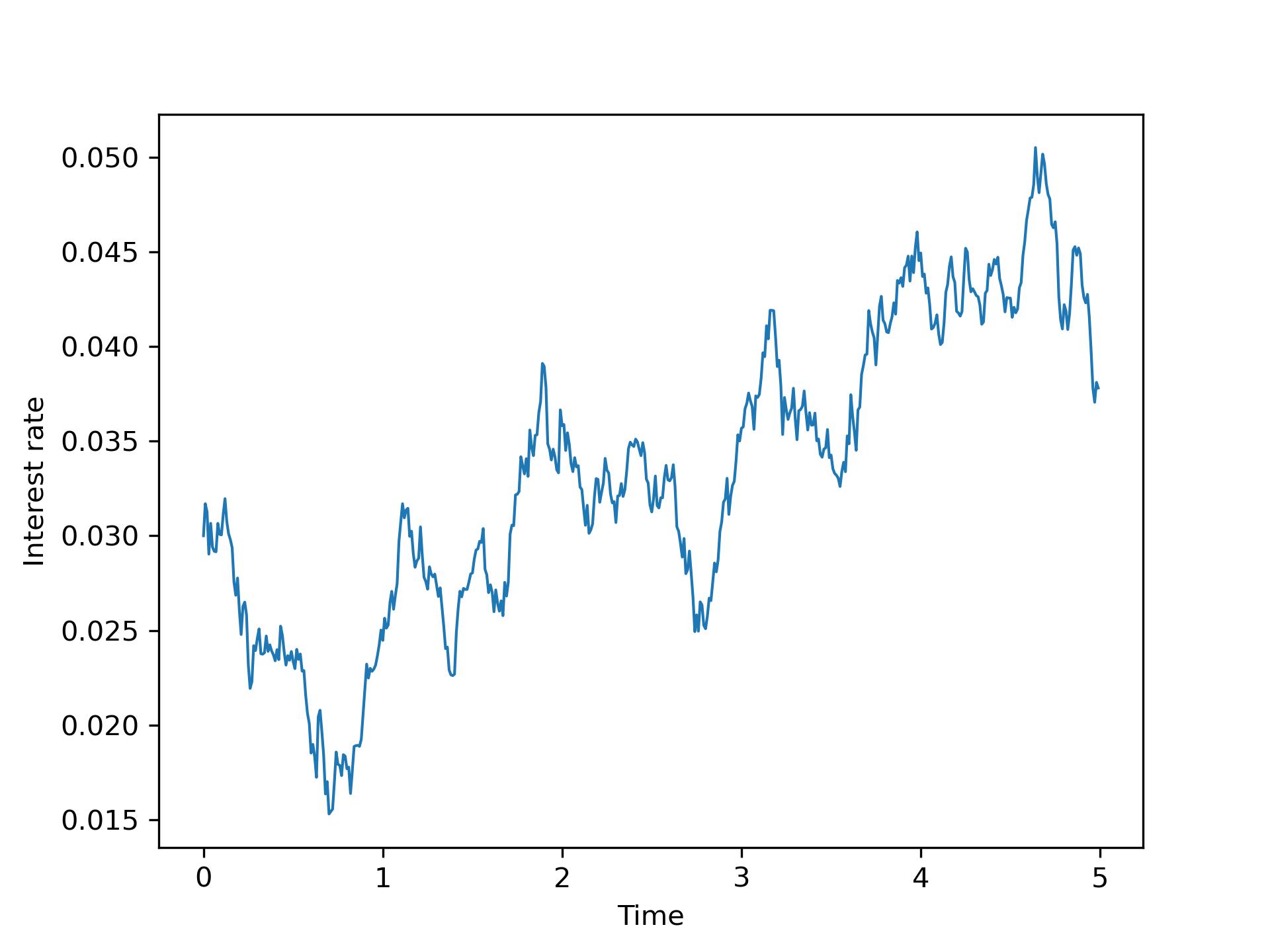}
    \caption{Simulated sample path of the interest rate}
    \label{fig:r}
\end{figure}
  {In addition to Model 1 and 2, we also investigate the no-hedging case, where the insurer can only invest in the zero-coupon bond.  Therefore, $u_1^*(t)\equiv 0$ in this case. 
It can be seen that the equilibrium investment  strategy for no hedging case is $$ u_2^*(t) = e^{\int_0^t r(s)ds}\frac{\lambda\vartheta(t) - \gamma_3^{(1)}(t)}{d_1(t)\sigma_r},$$ where $d_1(t) = \frac{1}{\theta_r}\left(e^{-\theta_r(T-t)} -1\right)$, and $\gamma_3^{(1)}$ is defined in Theorem \ref{thm:gam_Vas}. To examine the hedging performance of the models,  we further set   the frequency of life insurance claims experienced by  the insurance company $c_1 = 8$,  the amount of the  pension annuity  payment at the initial time $c_2 = 4$,  the maturity time of the zero-coupon bond and the zero-coupon longevity bond $T$  at 10,  the risk aversion parameter $\lambda$ at 10, and the initial wealth level $X(0) = 100$.   The insurer is allowed to both buy and sell the bond and the longevity bond. The market price of mortality risk $\varphi$ and the market price of interest rate risk $\vartheta$  are both set at 0.1.} The equilibrium hedging strategies under the Model 1 and 2 are calculated according to the closed-form solution in \eqref{u*} and Theorem \ref{thm:gam_Vas}.   {The equilibrium strategies under Model 1 and 2 are shown in Figure \ref{fig:u*}. From Figure \ref{fig:u*}, we can see that the evolution of  the equilibrium hedging strategy $u_1^*$  is dominated by the changes of $\psi_1(T-t)$, the solution to the Volterra-Riccati equation.}
The discounted wealth processes under different equilibrium strategies corresponding to the simulated sample path are shown in Figure \ref{fig:X}. From Figure \ref{fig:X}, it is evident that ignoring the cointegration between the models, given that it exists, significantly influences the performance of the hedging strategy.

\begin{figure}[H]
	\centering
	\subfigure{\includegraphics[width= 7.5cm ]{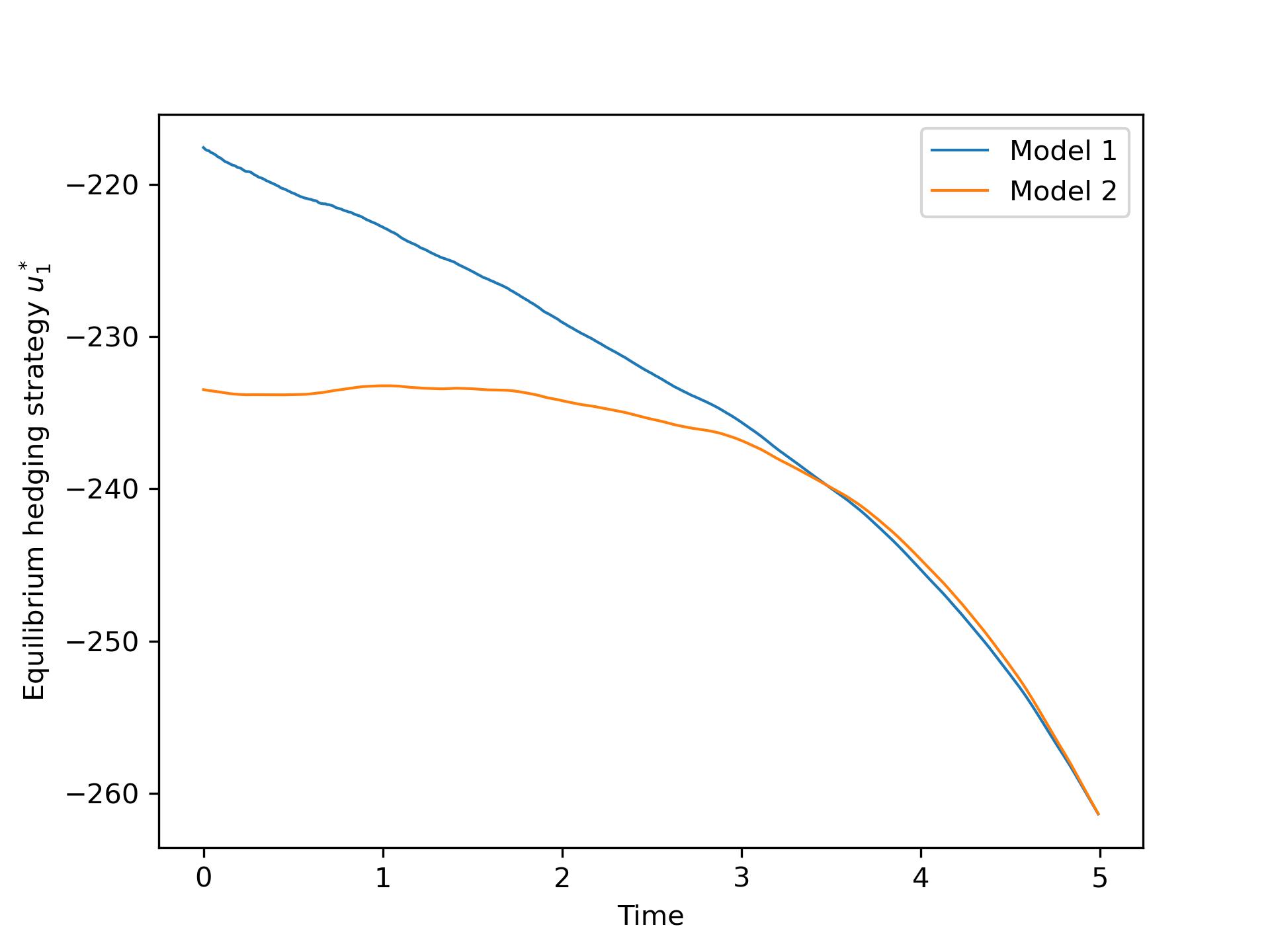}}
	\subfigure{\includegraphics[width= 7.5cm]{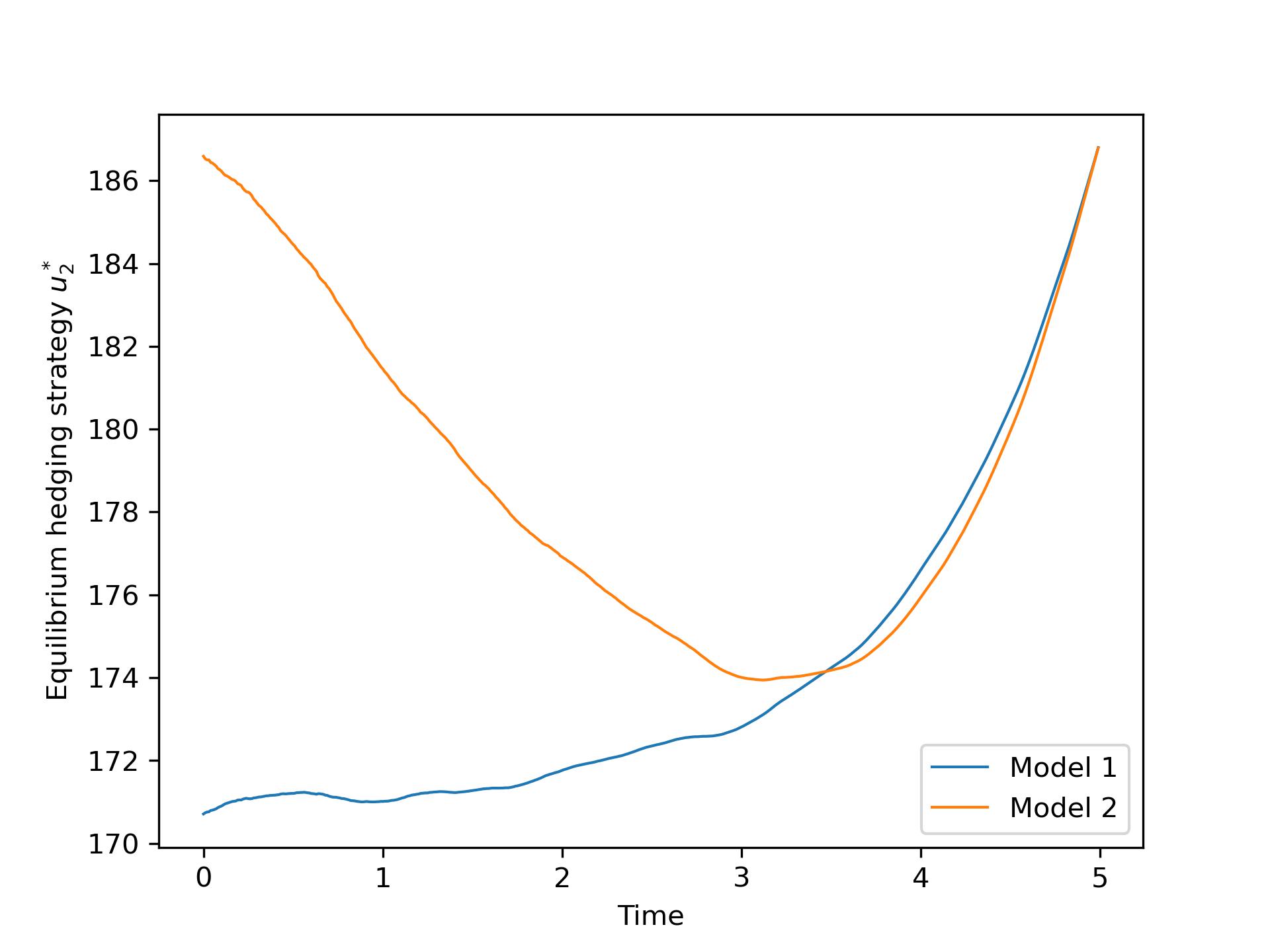}}
		\caption{The equilibrium hedging strategies under Model 1 and 2.}
	\label{fig:u*}
\end{figure}
  
\begin{figure}[H]
    \centering
    \includegraphics[width = 7.5cm]{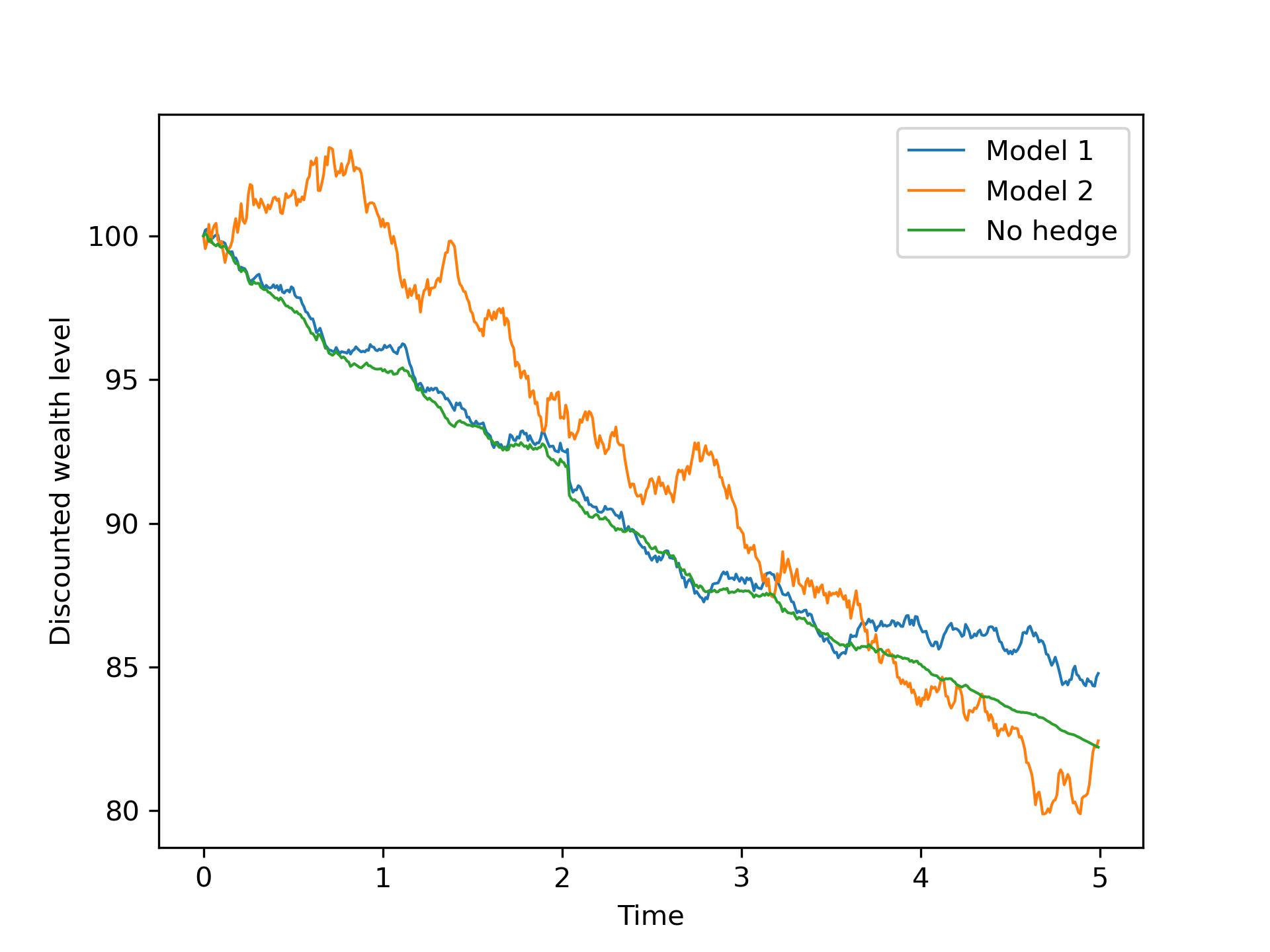}
    \caption{Discounted wealth processes under the different hedging strategies.}
    \label{fig:X}
\end{figure}
To investigate the general performance of equilibrium hedging strategies under Models 1 and 2, we plot the efficient frontiers. For a fixed risk aversion parameter $\lambda$, we simulate 10,000 pairs of paths for the forces of mortality and interest rates. Equilibrium strategies are calculated for each pair of sample paths. Then, 10,000 samples of the terminal discounted wealth level under Model 1, Model 2, and the no-hedging case are collected, respectively. By varying $\lambda$ from 1 to 40, the sample expectations and variances of the terminal discounted wealth $X_T$ are plotted in Figure \ref{fig:EF-coin}. Figure \ref{fig:EF-coin} shows that the equilibrium strategy under the LRD mortality model with cointegration outperforms the strategy when the cointegration between mortality rates is ignored.
\begin{figure}[H]
    \centering
    \includegraphics[width = 8 cm]{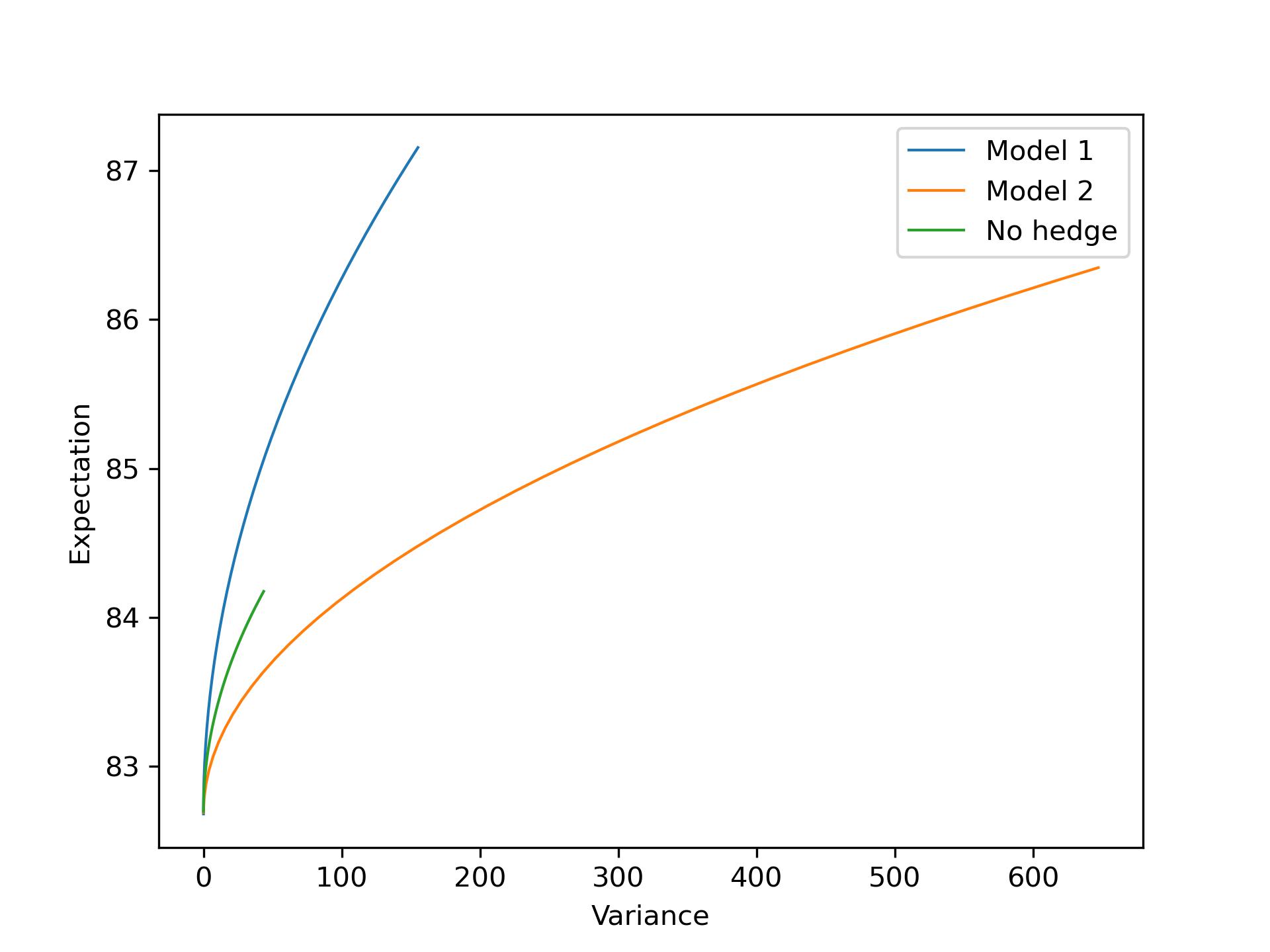}
    \caption{Efficient frontiers under  Model 1, Model 2, and no hedging case.}
    \label{fig:EF-coin}
\end{figure}

  {We further investigate how the value changes of  $c_1$ and $c_2$ affect the equilibrium hedging strategy. For the effect of $c_1$, we vary $c_1$ from 6 to $16$ with $c_2$ fixed at 4. The equilibrium hedging strategies correspond to different $c_1$ under Model 1 are plotted in Figure \ref{fig:u1-var}. Figure \ref{fig:u1-var} shows that when $c_1$ increases, the absolute value of the equilibrium hedging strategy $u_1^*$ decreases during the early period of the hedging horizon and increases later in  the hedging horizon.   For the effect of $c_2$, we fix $c_1 = 8$ and vary $c_2$ from $2$ to $6$. The corresponding equilibrium hedging strategies under Model 1 are plotted in Figure \ref{fig:u1-var2}. Figure \ref{fig:u1-var2} shows that the absolute value of $u_1^*$ decreases when  $c_2$ increases.}
\begin{figure}[H]
	\centering
	\subfigure[]{\includegraphics[width= 7.5cm ]{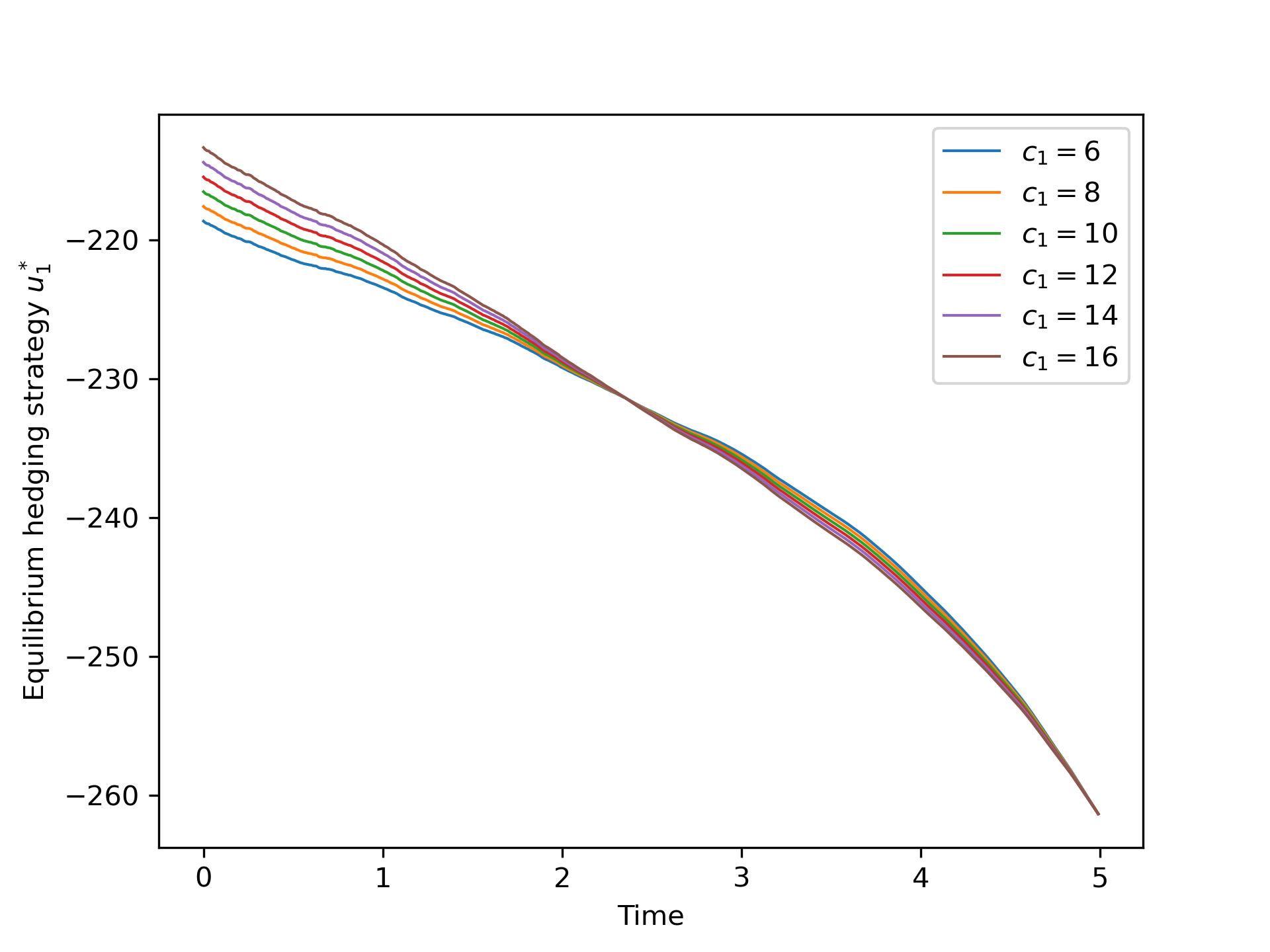}\label{fig:u1-var}}
	\subfigure[]{\includegraphics[width= 7.5cm]{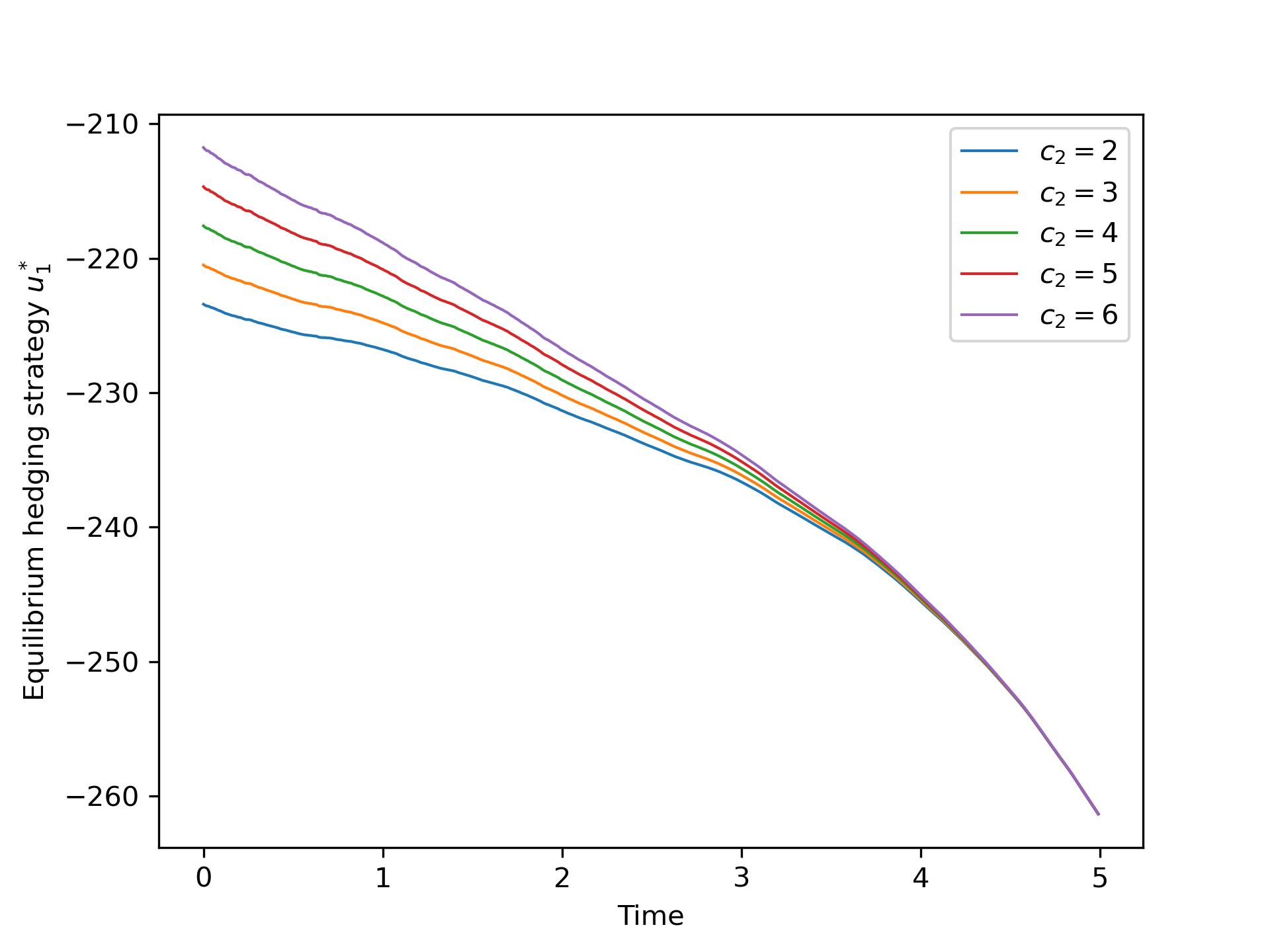}\label{fig:u1-var2}}
		\caption{The equilibrium hedging strategies under different values of $c_1$ and $c_2$.}
\end{figure}
  {Furthermore, one can consider  an alternative to Model 2 by incorporating the correlation between $\mu_1$ and $\mu_2$. Specifically,  
\begin{equation}\label{Model2'}
    d\mu_2(t) = (b_2 - \theta_2\mu_2(t))dt + \sigma_2 dW'_2(t),
\end{equation}
where $dW'_2 = \rho dW_1 + \sqrt{1-\rho^2}dW_2$ with $\rho \in (0, 1)$.  Under this setting, $\mu = (\mu_1, \mu_2)^\top$ is also an affine Volterra process with 
\begin{eqnarray}\label{Cor:parameter}
    \Theta = \left(\begin{array}{ll}
    \theta_1 & 0 \\
    0 & \theta_2
\end{array}\right), K = \left(\begin{array}{ll}
    K_1 & 0 \\
    0 & 1
\end{array}\right), \text{ and } \sigma(\mu) = \left(\begin{array}{ll}
    \sigma_1 & 0 \\
   \rho\sigma_2 & \sqrt{1-\rho^2}\sigma_2
\end{array}\right). 
\end{eqnarray} 
Then, the equilibrium hedging strategy  is still of the form in Corollary \ref{cor:Vas} and Theorem \ref{thm:gam_Vas} with  $\theta$, $K$, and $\sigma(\mu)$ given by Equation \eqref{Cor:parameter}. Denote the model under this setting as Model 2'.  Based on the sample path of $\mu$, the correlation  coefficient  is estimated as $\rho = 0.089$.  We then calculated the corresponding equilibrium hedging strategy under Model 2'. The discounted wealth process is plotted in Figure \ref{fig:X-2}.  From Figure \ref{fig:X-2}, we can see that the performances of Model 2' and Model 2 are similar. } 
\begin{figure}[H]
    \centering
    \includegraphics[width = 7.5cm]{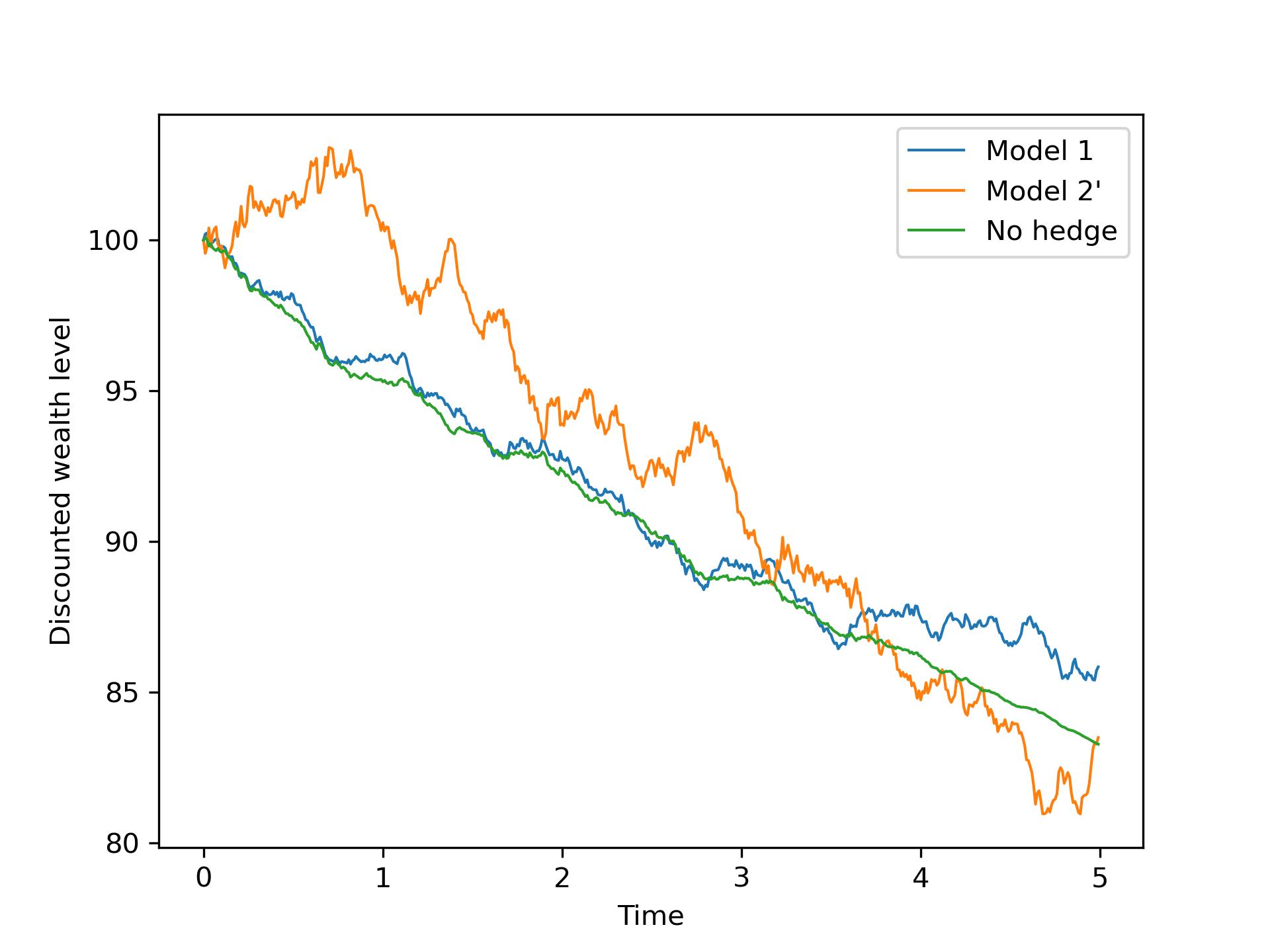}
    \caption{Discounted wealth processes under the Model 1, Model 2' and the no hedging case.}
    \label{fig:X-2}
\end{figure}
\subsection{Effect of the LRD property of the   {force of mortality} }
Although \cite{WW2021} show that the LRD property in the national   {force of mortality}  has a significant effect on the longevity hedging strategy, they assume that there is no difference between the national   {force of mortality}  and the insurer-experienced rate. Therefore, in this part, we examine the effect of the LRD property when the LRD cointegration is considered. We are interested in the longevity hedging strategy under Model \eqref{Vas_num} compared with its Markovian counterpart.
In this part, we set the parameter values in Model \eqref{Vas_num} the same as in Section \ref{Sec:num_coin}. We simulate a pair of sample paths of   {$\mu_1$ and $\mu_2$}, as shown in Figure \ref{fig:mu_LRD}.   {After adding $m_i(t)$ to $\mu_i(t)$, $i = 1, 2$, the sampled paths of the forces of national mortality and experienced mortality are shown in Figure \ref{fig:mu_hat-LRD}.} A sample path of the interest rate is shown in Figure \ref{fig:interest-lrd}.
    \begin{figure}[H]
	\centering
	\subfigure{\includegraphics[width= 7.5 cm]{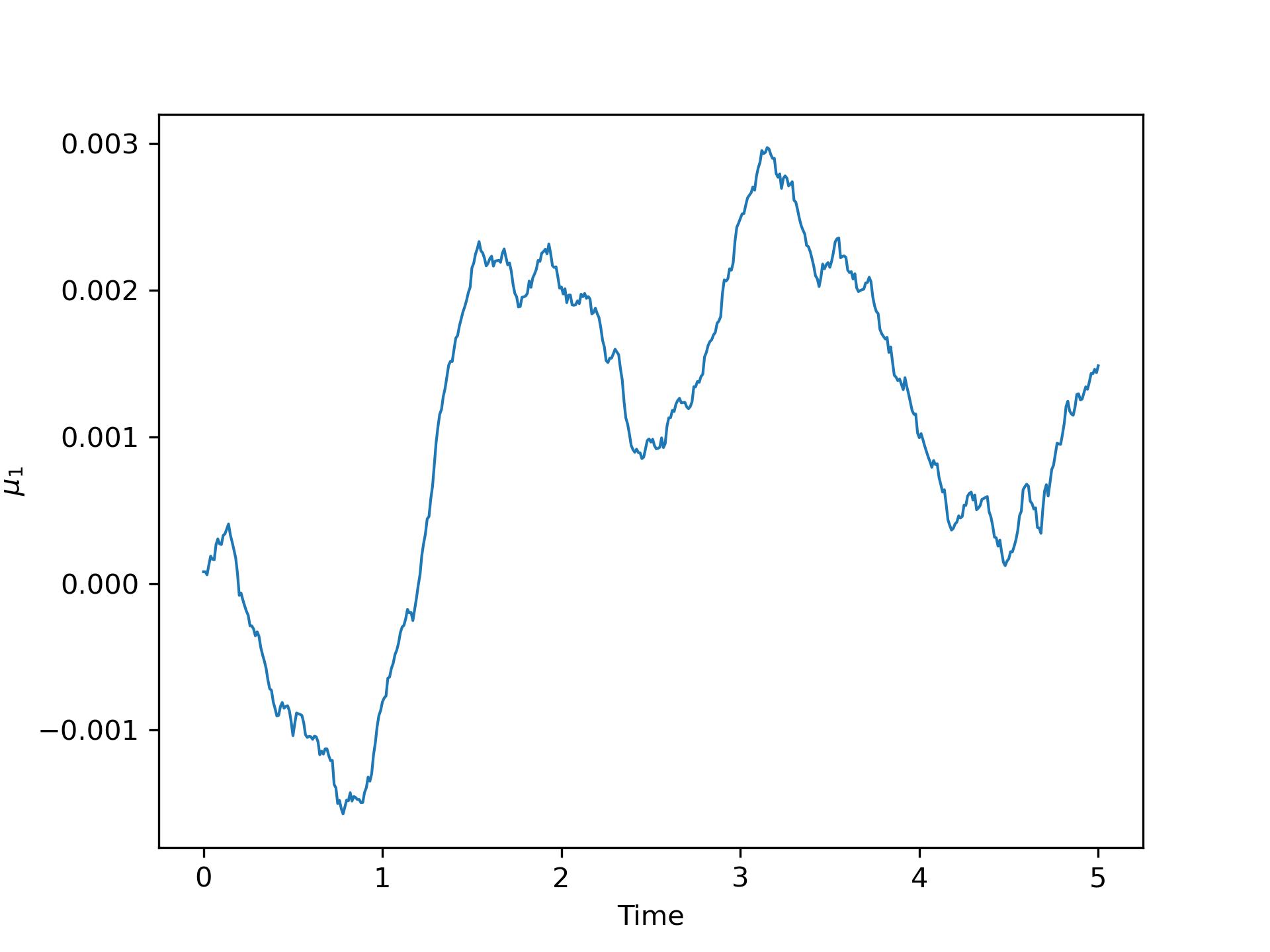}}
	\subfigure{\includegraphics[width= 7.5cm]{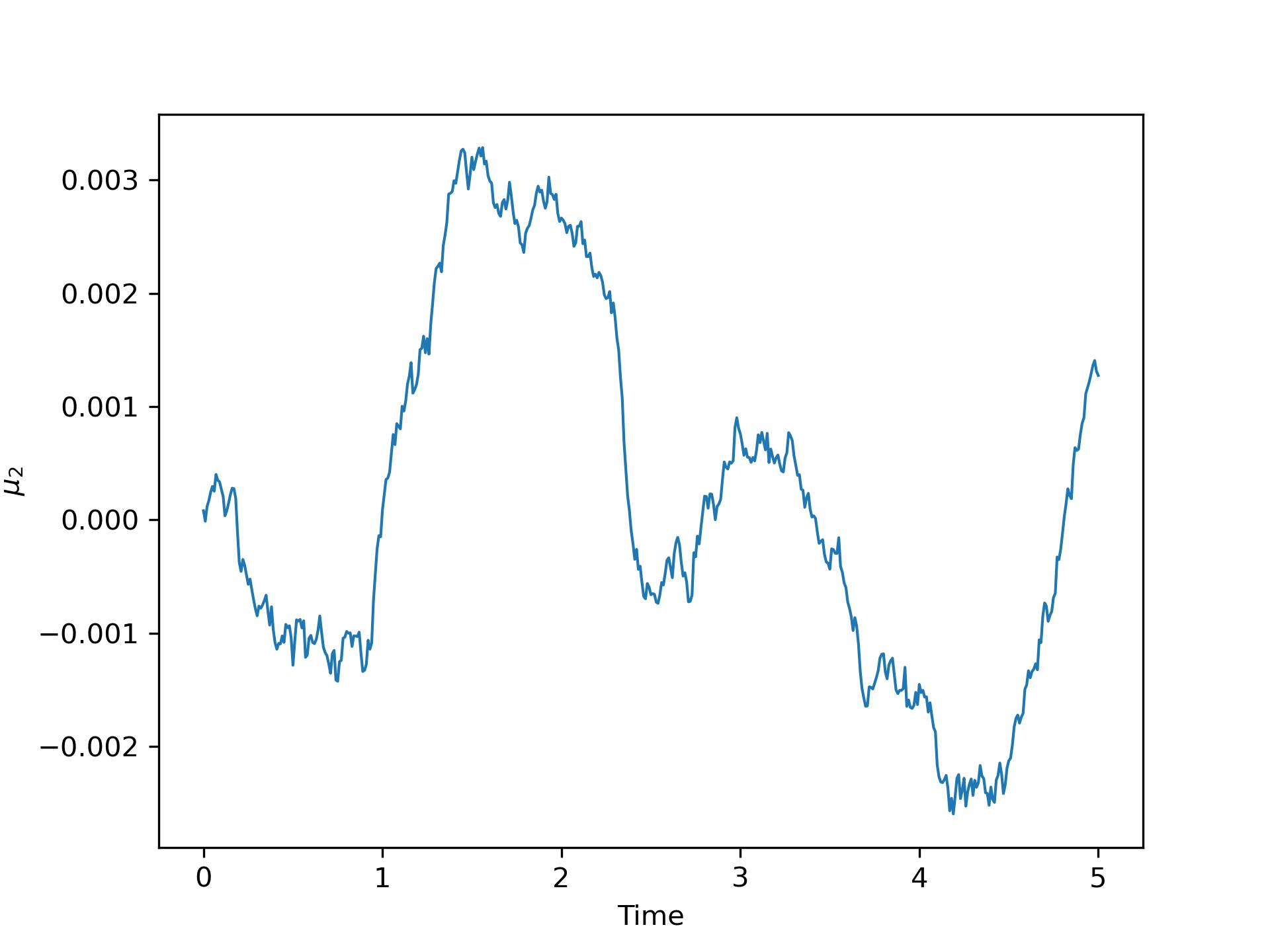}}
	\caption{Simulated sample paths of $\mu_1$ and $\mu_2$}
	\label{fig:mu_LRD}
	\end{figure}
 \begin{figure}[H]
			\centering
	\subfigure{\includegraphics[width= 7.5 cm ]{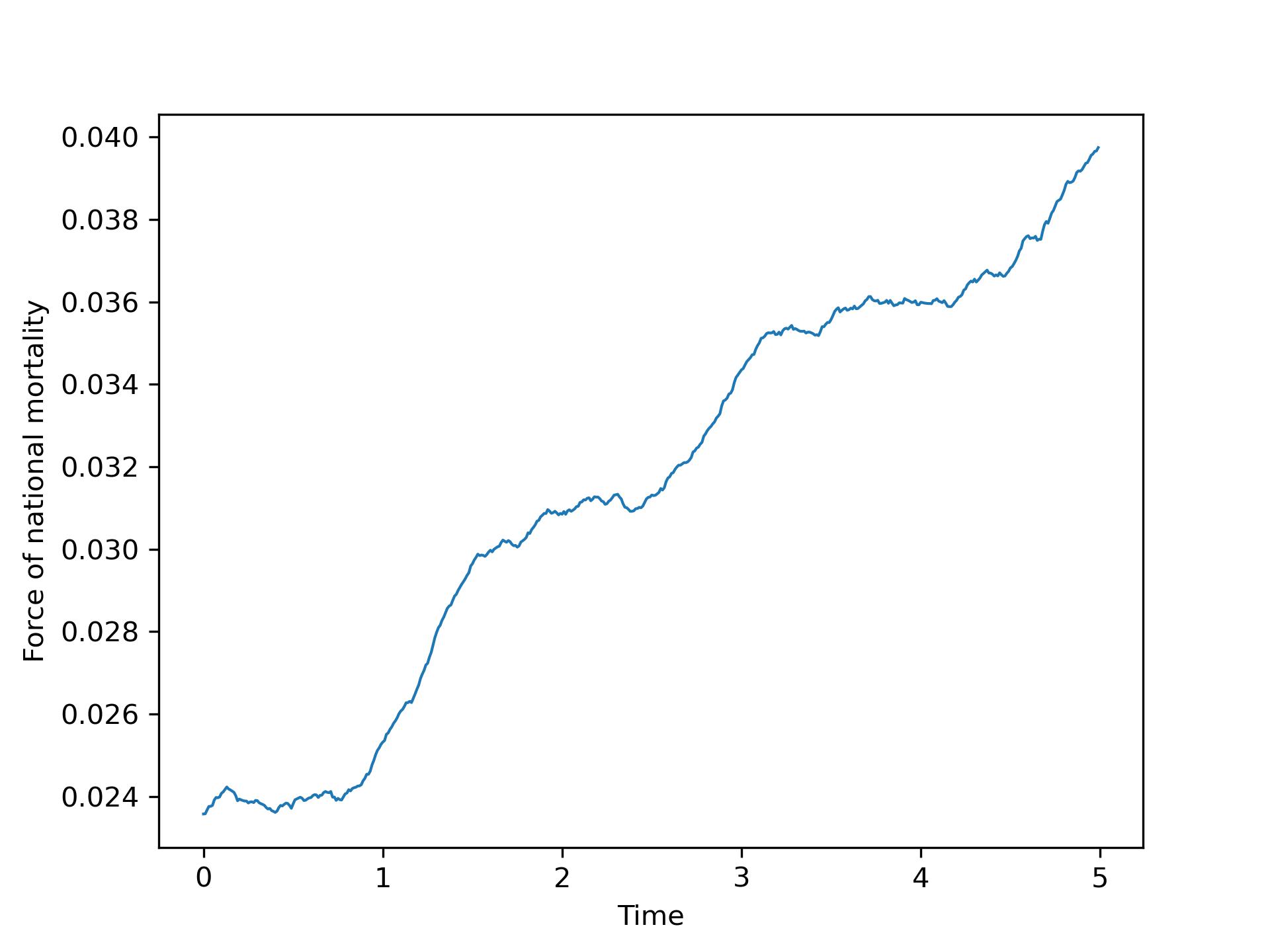}}
	\subfigure{\includegraphics[width= 7.5cm]{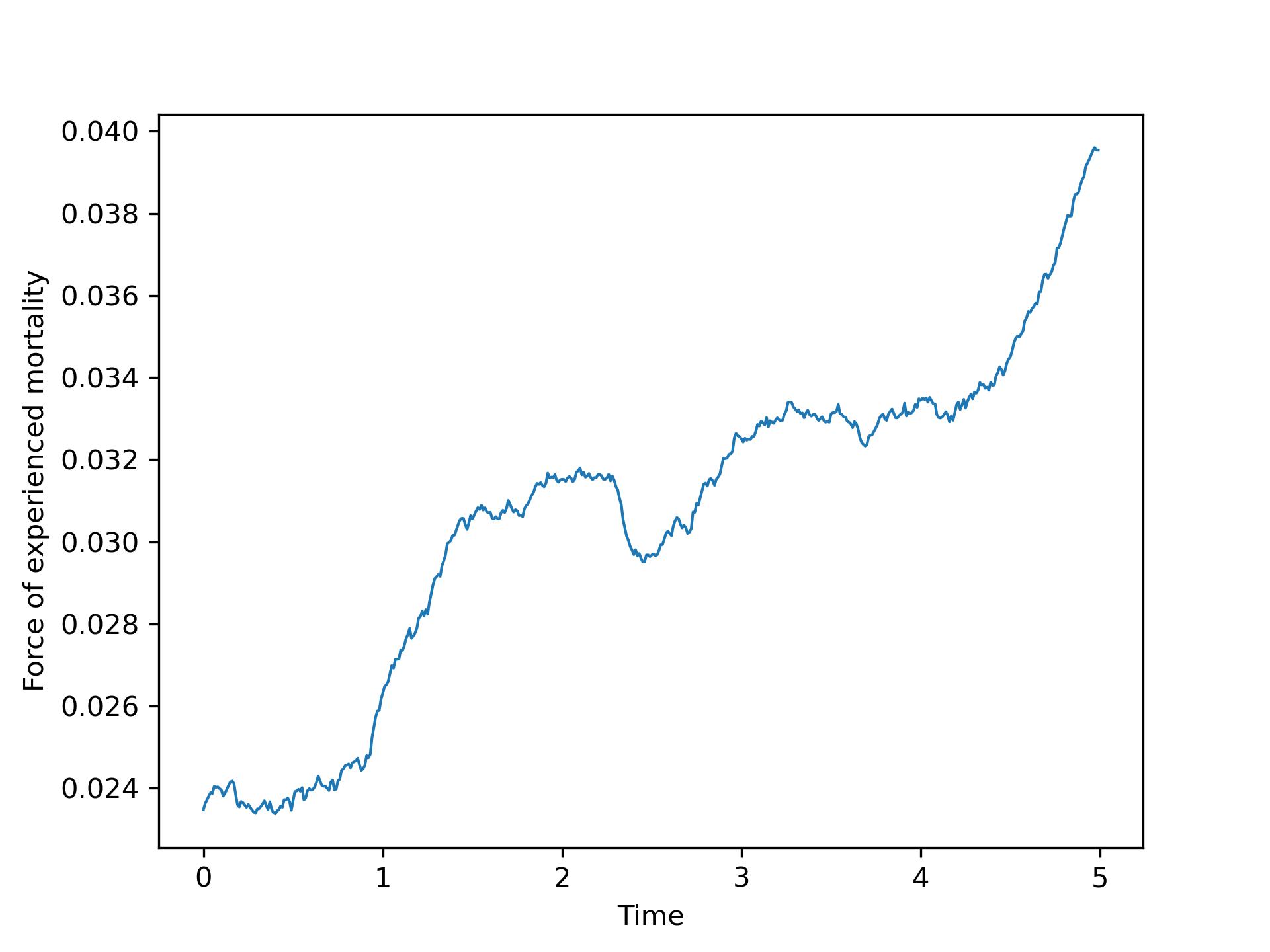}}
			\caption{The sample paths of the forces of mortality}
			\label{fig:mu_hat-LRD}
		\end{figure}
   \begin{figure}[H]
    \centering
    \includegraphics[width = 8 cm]{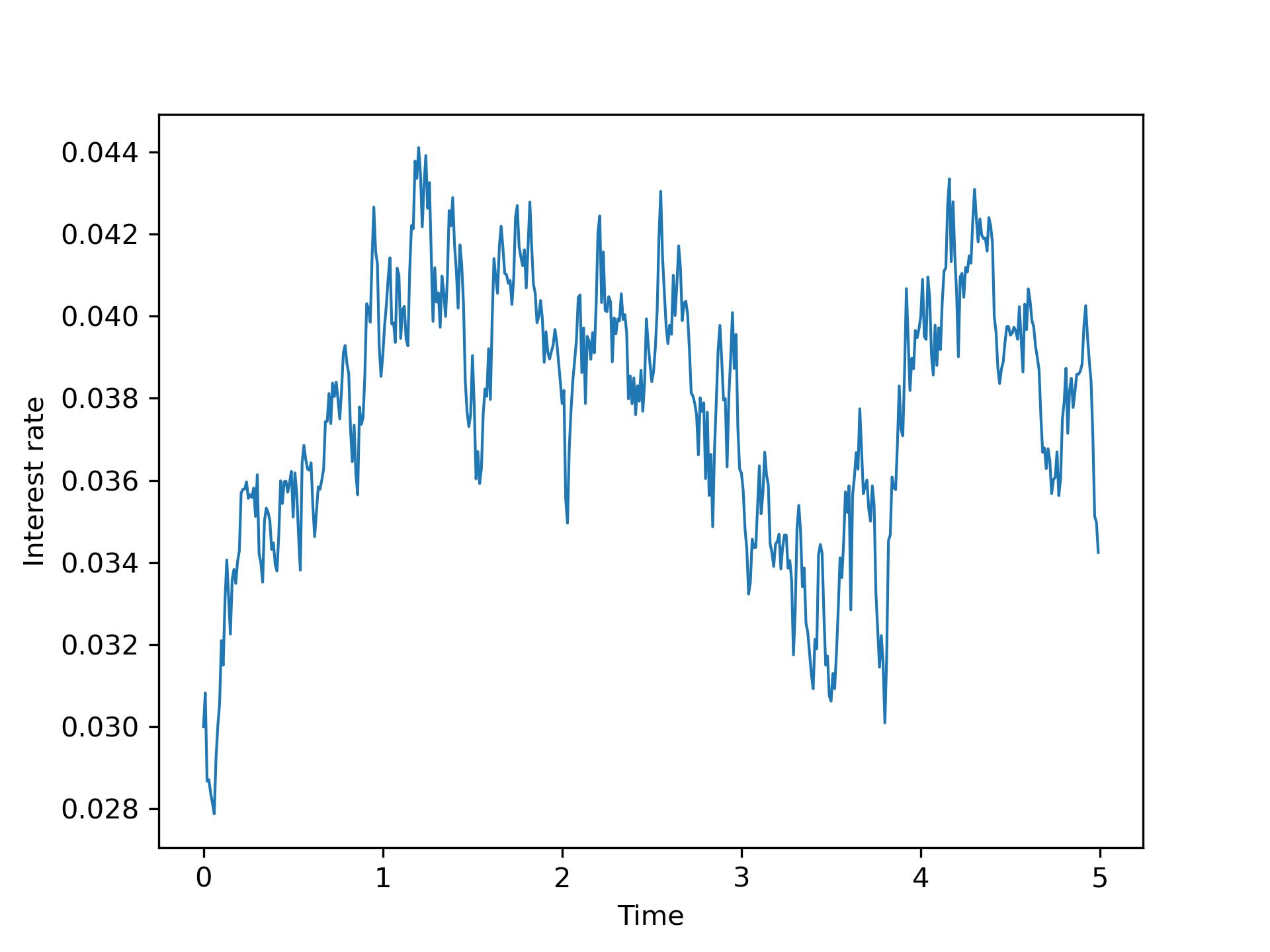}
    \caption{Simulated sample path of the interest rate}
    \label{fig:interest-lrd}
\end{figure}
Corresponding to the sample paths, the longevity hedging strategies are calculated according to the closed-form solution in \eqref{u*} and Theorem \ref{thm:gam_Vas} under the true model ($H_1 = 0.83$, $H_2 = 0.50$) and its Markovian counterpart ($H_1 = H_2 = 0.50$).   {The equilibrium hedging strategy in the no hedging case is the same as that in Section \ref{Sec:num_coin}, where the insurer is only allowed to invest in the zero-coupon bond.} The discounted wealth processes under the longevity hedging strategies in the three cases are shown in Figure \ref{fig:X_lrd}. From Figure \ref{fig:X_lrd}, we can see that the LRD mortality model with cointegration outperforms its Markovian counterparts, given that the LRD feature exists. 
\begin{figure}[H]
	\centering
	\subfigure{\includegraphics[width= 7.5cm ]{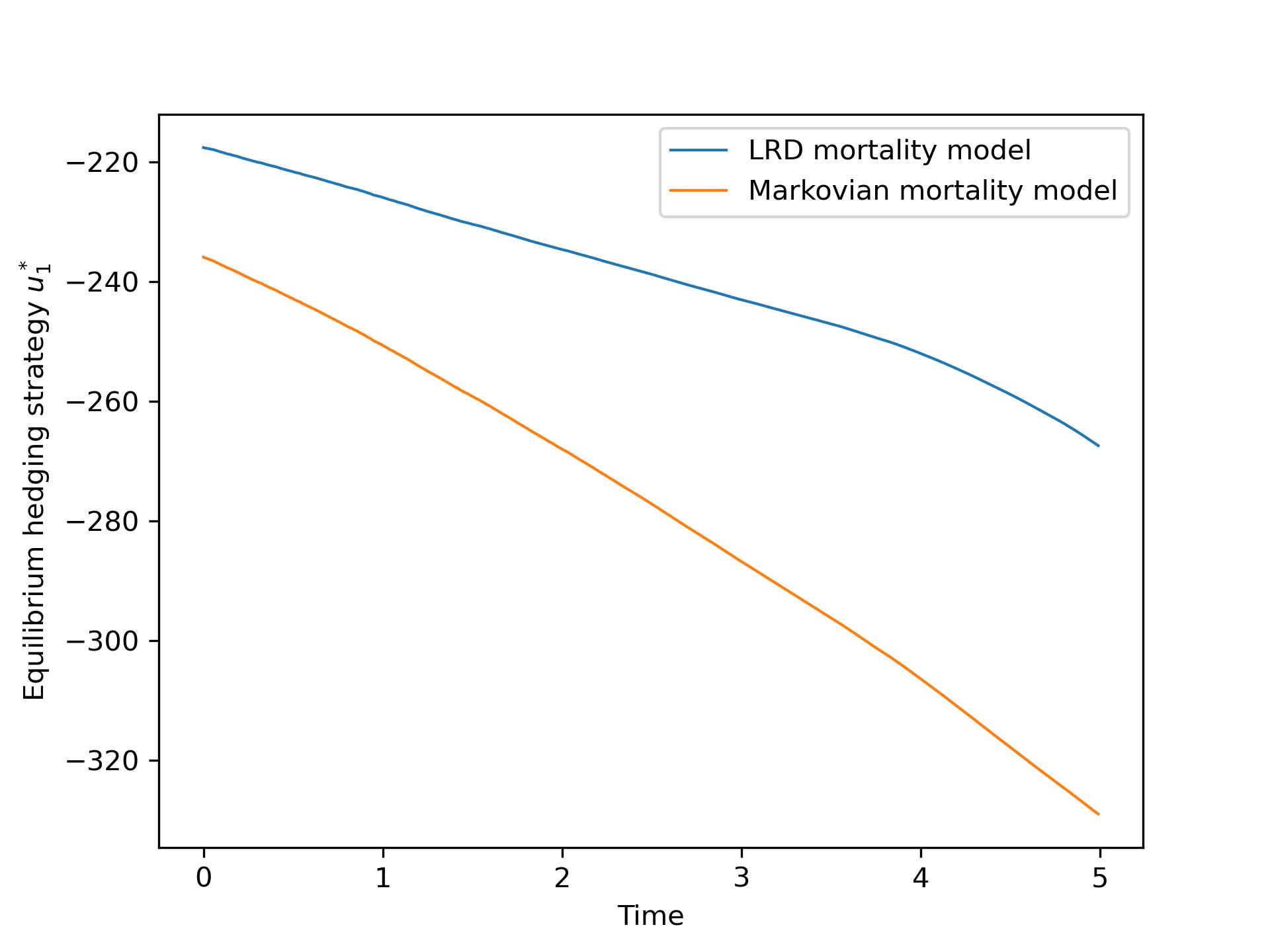}}
	\subfigure{\includegraphics[width= 7.5cm]{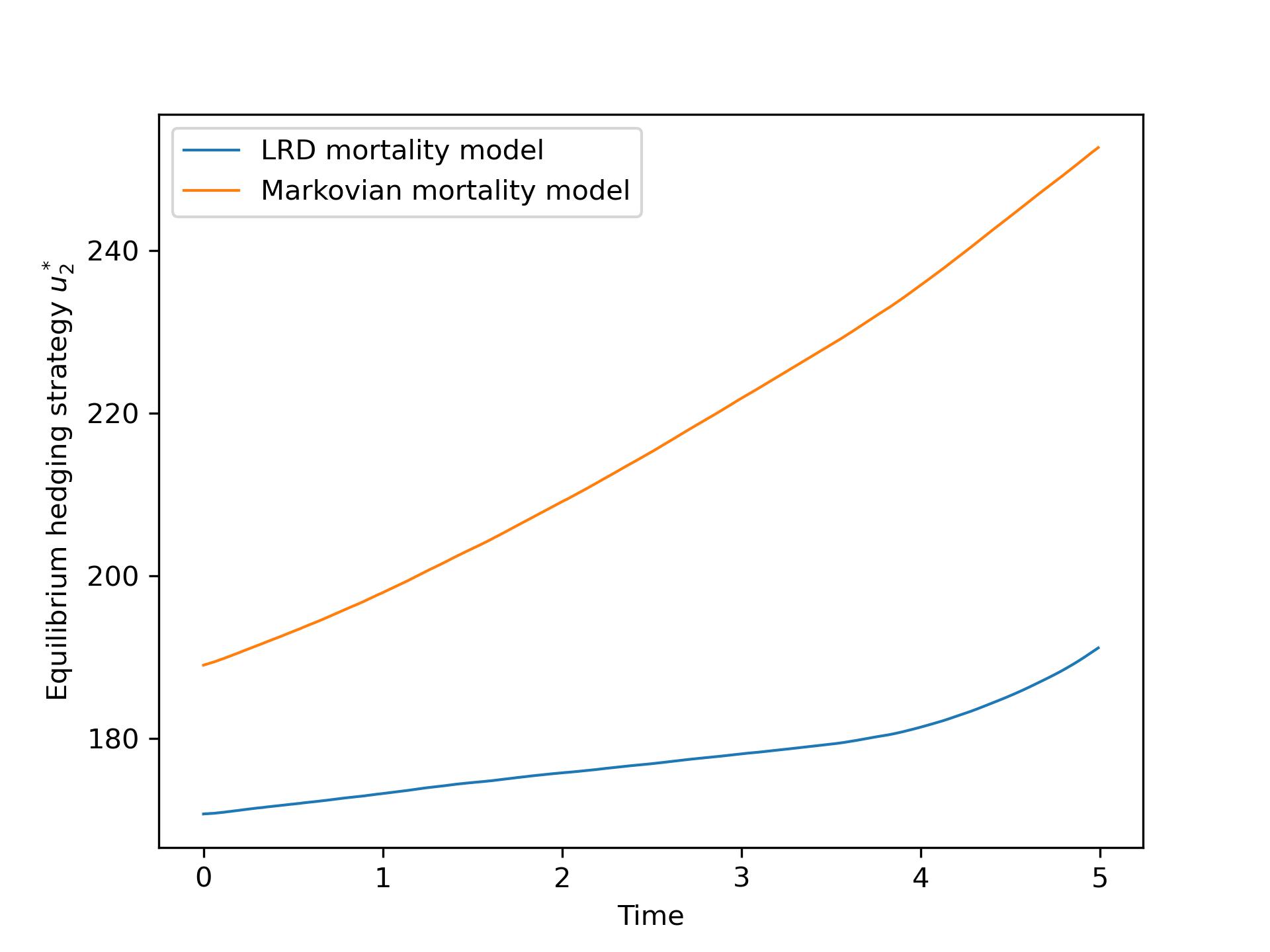}}
		\caption{The equilibrium hedging strategies under  under the LRD mortality model and its Markovian counterpart.}
\label{fig:u*-LRD}
\end{figure}
\begin{figure}[H]
    \centering
    \includegraphics[width = 8cm]{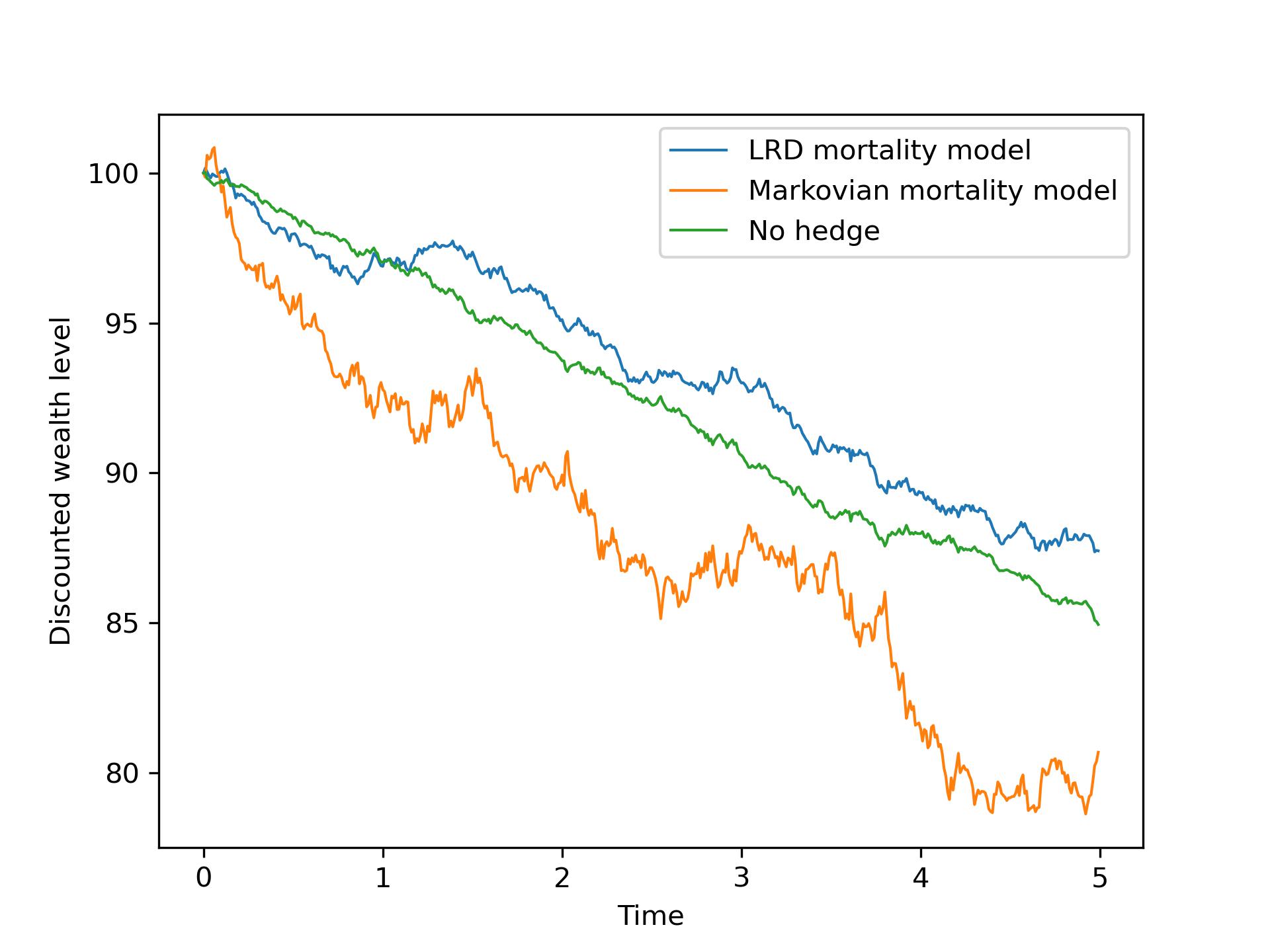}
    \caption{Discounted wealth processes under the LRD mortality model, its Markovian counterpart, and no hedging case.}
    \label{fig:X_lrd}
\end{figure}

Similar  to that  in Section \ref{Sec:num_coin}, we also plot  the efficient frontiers for  LRD mortality model  and its Markovian counterpart  by varying $\lambda$ from 1 to 40, as shown in Figure \ref{fig:EF_lrd}. From Figure \ref{fig:EF_lrd}, we can see that equilibrium hedging strategy under  LRD mortality model outperforms its Markovian counterpart. 
\begin{figure}[H]
    \centering
    \includegraphics[width = 8 cm]{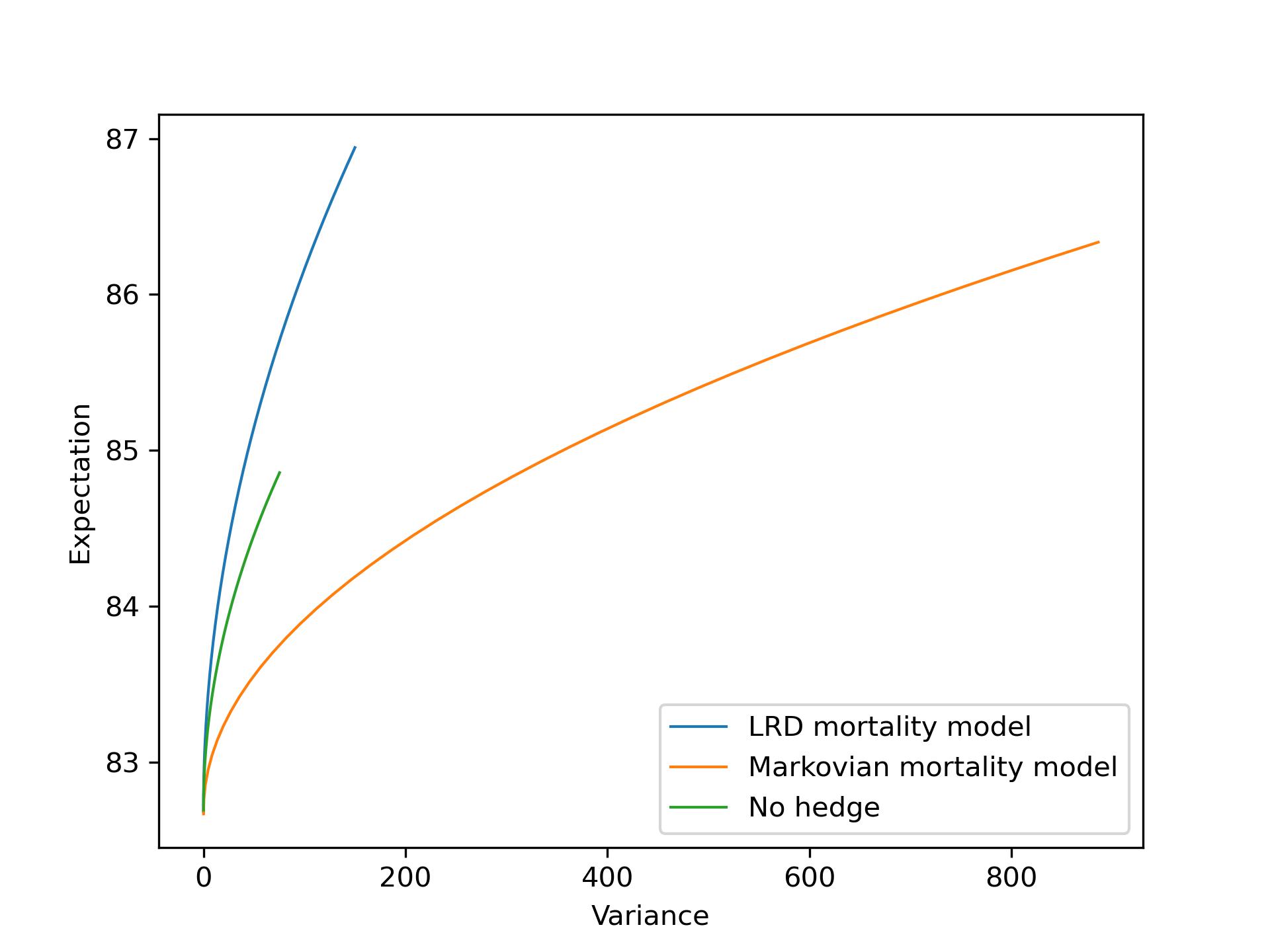}
    \caption{Efficient frontiers under the LRD mortality model, its Markovian counterpart, and no hedging case.}
    \label{fig:EF_lrd}
\end{figure}

\section{Conclusion}
\label{Sec:conclude}
This paper proposes the first  mortality model which incorporates the cointegration and the LRD property of   {force of mortality}  simultaneously by using a two-dimensional affine Volterra process.  The cointegration between the national and the insurer's experienced   {forces of mortality} is theoretically shown to be crucial for measuring the LRD feature. A time-consistent longevity hedging problem is investigated under the proposed mortality model.  We provide the equilibrium hedging strategies explicitly under both deterministic and volatility-driven market prices of risks, and prove the uniqueness of the equilibrium strategies. The numerical studies in this paper examine the effects of incorporating the  cointegration and correlation between   {forces of mortality} and LRD property on hedging performance.   {Numerical results indicate that both the effects of incorporating cointegration and  the correlation  between forces of mortality on hedging performance are significant.}

\bigskip
\noindent {\bf Acknowledgement:} L. Wang acknowledges the support from the National Natural Science Foundation of China (Grant number 12301615).  H.Y. Wong acknowledges the support of the Research Grants Council
of Hong Kong through the research funding schemes (RMG8601495, GRF14308422).

\bigskip

\noindent {\bf Disclosure statement:} All authors declare that they have no conflicts of interest.

\begin{appendices} 
	\section{Transformation of affine Markovian processes}
	\label{appendix:affine}
	Suppose that a $\mathcal{F}_t$-adapted Markovian process $r(\cdot)\in \mathbb{R}$ is the strong solution to the SDE:
	$$dr(t)=(b_r(t) - \theta_r r(t))dt +\sigma_r(r(t))dW_r(t), $$
	where $b_r(t)\in \mathbb{R}$ is a bounded deterministic function,  $\theta_r$ is a constant, and $W_r$ is a standard Brownian motion. Suppose that $\sigma_r(r(t))^2 = a_r^0 + a_r^1r(t)$ for some constants $a_r^0$ and $a_r^1$. 
    From the study in \cite{DPS}, we have
	\[\mathbb{E}\left[\left.e^{\int_{t}^{T}-r(s)ds}\right|\mathcal{F}_t\right]= e^{d_0(t)+ d_1(t)r(t)},\]
	where the functions $ d_0(\cdot)\doteq d_0(\cdot, T)$ and $d_1(\cdot)\doteq d_1(\cdot, T)$ solve the following ordinary differential equations:
	\[\dot{d_1}(t)= 1 + \theta_rd_1(t) - \frac{1}{2}a^1_rd_1(t)^2,\]
	\[\dot{d_0}(t)= -b_r(t) d_1(t)-\frac{1}{2}a_r^0d_1(t)^2,\]
	with boundary conditions $d_1(T)=0$ and $d_0(T)= 0$.

\section{Proofs}
\subsection{Proof of Proposition \ref{pro:Hurst}}
\label{appendix:Hurst}
\begin{proof}
    Define  a sequence of stopping time as $\tau_n := \inf\{t > 0; \mu_1(t)>n, \mu_2(t)>n\}$ for $n = 1, 2, \cdots$.  For each $n$, we introduce the following dynamic:
    \begin{align*}
        \mu_2^n(t) & = \mu_2(0) + \int_{0}^{t}\beta_1 K_1(t-s)(b_1 - \theta_1 \mu_1(s))\mathbbm{1}_{\{s< \tau_n\}}ds  \\
       & + \int_0^tK_2(t-s)(b_2 - \beta_2 \mu_1(s) - \theta_2 \mu_2(s))\mathbbm{1}_{\{s< \tau_n\}}ds \nonumber \\
& + \int_{0}^{t}\beta_1 K_1(t-s)\sigma_1(\mu_1(s))\mathbbm{1}_{\{s< \tau_n\}}dW_1(s) + \int_0^t K_2(t-s)\sigma_2(\mu_2(s)) \mathbbm{1}_{\{s< \tau_n\}}dW_2(s). 
    \end{align*}
For any $p > 2$ and $0 \leq s < t \leq  T$, 
\begin{align*}
    |\mu_2^n(t)-\mu_2^n(s)|^p &\leq C\Bigg[\left| \int_{s}^{t}K_1(t-u)(b_1 - \theta_1 \mu_1(u))\mathbbm{1}_{\{u< \tau_n\}}du \right|^p \\
    & + \left|\int_0^s(K_1(t-u) - K_1(s-u))(b_1 - \theta_1 \mu_1(u))\mathbbm{1}_{\{u< \tau_n\}}du\right|^P \\
    & + \left| \int_s^tK_2(t-u)(b_2 - \beta_2 \mu_1(u) - \theta_2 \mu_2(u))\mathbbm{1}_{\{u < \tau_n\}}du \right|^p\\
    & + \left|\int_0^s(K_2(t-u) - K_2(s-u))(b_2 - \beta_2 \mu_1(u) - \theta_2 \mu_2(u))\mathbbm{1}_{\{u < \tau_n\}}du \right|^p \\
    & + \left|\int_s^t K_1(t-u)\sigma_1(\mu_1(u))\mathbbm{1}_{\{u< \tau_n\}}dW_1(u) \right|^p \\
    & + \left|\int_0^s(K_1(t-u)-K_1(s-u))\sigma_1(\mu_1(u))\mathbbm{1}_{\{u< \tau_n\}}dW_1(u)\right|^p\\
    & + \left|\int_s^t  K_2(t-u)\sigma_2(\mu_2(u)) \mathbbm{1}_{\{u< \tau_n\}}dW_2(u)\right|^p \\
    & + \left|\int_0^s(K_2(t-u) -K_2(s-u))\sigma_2(\mu_2(u)) \mathbbm{1}_{\{u< \tau_n\}}dW_2(u)\right|^p
    \Bigg]\\
    & := \bold{1} + \bold{2} + \bold{3} + \bold{4} + \bold{5} + \bold{6} + \bold{7} + \bold{8}.
\end{align*}

By applying the H\"older inequality twice, we obtain
\begin{align*}
    \bold{1} & \leq (t-s)^{p/2}\left| \int_s^t K_1^2(t-u)(b_1 - \theta_1 \mu_1(u))^2\mathbbm{1}_{\{u< \tau_n\}}du\right|^{p/2} \\
    & \leq (t-s)^{p/2}\left(\int_s^t K_1^2(t-u)du\right)^{p/2-1}\left(\int_s^t K_1^2(t-u) \left|(b_1 - \theta_1 \mu_1(u))\mathbbm{1}_{\{u< \tau_n\}}\right|^p du  \right).
\end{align*}
 By taking expectation, we have
 \begin{align*}
     \mathbb{E}[\bold{1}] & \leq \sup_{0\leq u \leq T}\mathbb{E}\left[\left|(b_1 - \theta_1 \mu_1(u))\mathbbm{1}_{\{u< \tau_n\}}\right|^p  \right] (t-s)^{p/2}\left(\int_s^t K_1^2(t-u)du\right)^{p/2-1}\\
    & \leq \sup_{0\leq u \leq T}\mathbb{E}\left[\left|(b_1 - \theta_1 \mu_1(u))\mathbbm{1}_{\{u< \tau_n\}}\right|^p  \right] (t-s)^{(H_1 + 1/2)p}. 
 \end{align*}
Similarly, we have
\begin{align*}
\mathbb{E}[\bold{2}] \leq \sup_{0\leq u \leq T}\mathbb{E}\left[\left|(b_1 - \theta_1 \mu_1(u))\mathbbm{1}_{\{u< \tau_n\}}\right|^p  \right] (t-s)^{p/2}\left(\int_0^s(K_1(t-u)-K_1(s-u))^2 du\right)^{p/2}. 
\end{align*}
For $H_1 \in (0,  1/2)$, 
\begin{align*}
    & \left(\int_0^s(K_1(t-u)-K_1(s-u))^2 du\right)^{p/2} \leq(t-s)^{H_1 p} \int_0^{\infty}\left[(u+1)^{H_1 -1/2}-u^{H_1 -1/2}\right]^2 d u \\
& \leq(t-s)^{H_1 p}\left\{\int_0^1\left[(u+1)^{H_1 -1/2}-u^{H_1 - 1 / 2}\right]^2 d u+\int_1^{\infty} u^{2 H_1 - 3} d u\right\} \\
& \leq C(t-s)^{H_1 p} .
\end{align*}
For $H_1 \in [1/2, 1)$, it is clear that $\left(\int_0^s(K_1(t-u)-K_1(s-u))^2 du\right)^{p/2} \leq C(t-s)^{H_1 p}$. 
Applying the Burkholder-Davis-Gundy inequalities and using similar calculations, we have 
\begin{align*}
    \mathbb{E}[\bold{1} + \bold{2} + \bold{5} + \bold{6}] \leq C \sup_{0 \leq t \leq T} \mathbb{E}\left[\left| (b_1 - \theta_1 \mu_1(u))\mathbbm{1}_{\{u< \tau_n\}}\right|^p + |\sigma_1(\mu_1(u))^2\mathbbm{1}_{\{u\leq \tau_n\}}|^{p/2}\right](t-s)^{H_1 p},
\end{align*}
and 
\begin{align*}
    & \mathbb{E}[\bold{3} + \bold{4} + \bold{7} + \bold{8}] \\
    & \leq C \sup_{0 \leq t \leq T} \mathbb{E}\left[\left| (b_2 - \beta_2 \mu_1(u) - \theta_2 \mu_2(u))\mathbbm{1}_{\{u< \tau_n\}}\right|^p + |\sigma_2(\mu_2(u))^2\mathbbm{1}_{\{u\leq \tau_n\}}|^{p/2}\right](t-s)^{H_2 p}. 
\end{align*}
As $\sigma_1(\mu_1)^2$ and $\sigma_2(\mu_2)^2$ are linear in $\mu_1$ and $\mu_2$ respectively, there exist a constant $C$ such that
\begin{align*}
    \mathbb{E}\left[|\mu_2^n(t) - \mu_2^n(s)|^p\right] \leq C(t-s)^{\min\{H_1, H_2\}p}. 
\end{align*}
The H\"older continuity follows from the Kolmogorov continuity theorem (see Theorem I.2.1 in \cite{RY1999}). As $\mu_2(t) = \mu_2^n(t)$ almost surely on $\{t \leq \tau_n\}$ for each $t$, the result follows from letting $\tau_n \rightarrow \infty$. 
\end{proof}

\subsection{Proof of Theorem \ref{Thm:admissible}}
\label{Appendix:Thm1}
\begin{proof}
From \eqref{Xtilde}, we have
\begin{align*}
X(t) & = \int_0^t e^{-\int_{0}^{s}r(v)dv}\left(\nu(s)^\top u(s)-\pi(s) - \mathbb{E}[z]c_1  {\hat{\mu}_2(s)} \right)ds  + \int_0^t e^{-\int_{0}^{s}r(v)dv}u(s)^\top\sigma_S(s)^\top d\boldsymbol{W}(s)\\ & -\int_0^t\int_{\mathbb{R}_+}ze^{-\int_{0}^{t}r(s)ds}\widetilde{N}(ds, dz). 
\end{align*}
By H\"older's inequality,
\begin{align*}
& \mathbb{E}\left[\int_0^{T_0}e^{-\int_0^s 2 r(v) dv}|\nu(s)^\top u(s)|^2 ds \right]\leq \left(\mathbb{E}\left[\int_0^{T_0}e^{-\int_0^s 4r(v)dv } ds\right]\right)^{\frac{1}{2}} \left(\mathbb{E}\left[\int_0^{T_0}|\nu(s)^\top u(s)|^4 ds \right] \right)^{\frac{1}{2}}   \\
& \leq \left(T_0\sup_{0\leq s \leq T_0}\mathbb{E}\left[e^{-\int_0^s 4r(v)dv}\right]\right)^{\frac{1}{2}} \left(\mathbb{E}\left[\int_0^{T_0}|\nu(s)^\top u(s)|^4 ds \right] \right)^{\frac{1}{2}} < \infty. 
\end{align*}
  {As $m_1$ and $m_2$ are bounded deterministic functions, it follows Lemma  \ref{lemma:E_mu} that for any constant $p \geq 2$, there exists a  constant $C$ such that $\sup_{t\leq T}\mathbb{E}[|\hat{\mu}|^p] \leq C$. } By applying H\"older's inequality, it follows that  $\mathbb{E}\left[\int_0^{T_0}e^{-\int_0^s 2 r(v) dv}|\pi(s)|^2 ds \right] <\infty$ and $\mathbb{E}\left[\int_0^{T_0}e^{-\int_0^s 2 r(v) dv}|\mathbb{E}[z]c_1  {\hat{\mu}_2(s)}|^2 ds \right] <\infty$. 
As we have $\mathbb{E}\left[\int_0^{T_0}\int_\mathbb{R}z^2e^{-2\int_0^tr(s)ds}\delta(dz)dt\right]\leq c_1\mathbb{E}[z^2]\int_0^{T_0}\mathbb{E}\left[e^{-2\int_0^t r(s)ds}\right]\mathbb{E}[  {\hat{\mu}_2(t)} 
]dt < \infty$, 
by the Burkholder--Davis--Gundy (BDG) inequality, 
\begin{align*}
\mathbb{E}\left[\sup_{0\leq t \leq T_0}\left(\int_0^t\int_{\mathbb{R}_+}ze^{-\int_{0}^{t}r(s)ds}\widetilde{N}(ds, dz)\right)^2\right] \leq C_1 \int_0^{T_0} \mathbb{E}\left[e^{-\int_0^s2r(v)dv}\right]\mathbb{E}\left[|  {\hat{\mu}_2(s)} |\right]ds < \infty
\end{align*}
for a constant $C_1$.
Then, by the BDG and H\"older's inequalities, we have
\begin{align*}
& \mathbb{E}\left[\sup_{0\leq t \leq T_0}|X(t)|^2\right] < C_2 + C_2\mathbb{E}\left[\int_0^{T_0}e^{-\int_0^s 2 r(v) dv}(|\nu(s)^\top u(s)|^2+ |\pi(s)|^2 + |\mathbb{E}[z]c_1  {\hat{\mu}_2(s)} |^2)ds \right] \\
& + C_2 \mathbb{E}\left[\int_0^{T_0} e^{-\int_0^s2r(v)dv}|\sigma_S(s)u(s)|^2ds\right]   + C_2 \int_0^{T_0} \mathbb{E}\left[e^{-\int_0^s2r(v)dv}\right]\mathbb{E}\left[|  {\hat{\mu}_2(s)}|\right]ds < \infty
\end{align*}
for a positive constant $C_2$. 
\end{proof}
\subsection{Proof of Theorem \ref{Thm:spike}}
\label{Appendix:Thm2}
\begin{proof}
Let $X^\epsilon(\cdot)$ denote the state process corresponding to $u^{t,\epsilon, \eta}(\cdot)$. Using the standard perturbation approach, 
	\[X^\epsilon(s) = X^*(s) + Y^\epsilon(s) + Z^\epsilon(s), s\in[t, T_0], \]
	where $Y^\epsilon(s)$ and $Z^\epsilon(s)$ are the solutions to the following SDEs respectively, 
	\[dY^\epsilon(s) = e^{-\int_0^s r(v)dv}\eta^\top\sigma_S^\top\boldsymbol{1}_{[t, t+\epsilon]}(s) d\boldsymbol{W}(s), ~ Y^\epsilon(t) =0, ~ s\in [t, T_0],  \]
	\[dZ^\epsilon(s) = e^{-\int_0^s r(v)dv} \nu^\top\eta\boldsymbol{1}_{[t, t+\epsilon]}(s)ds, ~ Z^\epsilon(t) =0, ~ s\in [t, T_0]. \]
	By the BDG inequality, there exists a positive constant $C_1$ such that
	\begin{align*}
	&\mathbb{E}_t\left[\sup_{s \in [t, T_0]}(Y_s^\epsilon)^2\right] \leq \mathbb{E}_t\left[\sup_{s\in[t, T_0]}\left(\int_{t}^{s}e^{-\int_{0}^{v}r(u)du} \eta^\top\sigma_S^\top\boldsymbol{1}_{[t, t+\epsilon]} d\boldsymbol{W}(v)\right)^2\right]\notag\\
	&\leq C_1\mathbb{E}_t\left[\int_{t}^{T_0}e^{-\int_{0}^{v}r(u)du} \eta^\top\sigma_S^\top\sigma_S\eta\boldsymbol{1}_{[t, t+\epsilon]}dv\right]
	\end{align*}
	 Let $\sigma_S^\top\sigma_S = (\Sigma_S^{ij}), ~i, j = 1, 2$. By Lemma \ref{Lemma:interest} and H\"older's inequality, there exists a positive constant $C_2$ such that
	\begin{align*}
	  & \mathbb{E}_t\left[\int_{t}^{T_0}e^{-\int_{0}^{v}r(u)du} \eta^\top\sigma_S^\top\sigma_S\eta dv\right]\\
	  & \leq C_2 \left(\int_t^{T_0}  \mathbb{E}[e^{-\int_{0}^{v}2r(u)du}]dv\right)^{\frac{1}{2}}\left(\int_t^{T_0}\sum_{ij}\mathbb{E}\left[(\Sigma_S^{ij})^2\right]dv\right)^{\frac{1}{2}}\mathbb{E}\left[|\eta|^2\right] < \infty.
	\end{align*}
	Therefore, $\mathbb{E}_t\left[\sup_{s \in [t, T_0]}(Y_s^\epsilon)^2\right] = O(\epsilon)$.
	Moreover, 
	\begin{align}
	&\mathbb{E}_t\left[\sup_{s \in [t, T_0]}(Z_s^\epsilon)^2\right] \leq \mathbb{E}_t\left[\sup_{s\in[t, T_0]}\left(\int_{t}^{s}e^{-\int_{0}^{v}r(u)du}\nu^\top\eta\boldsymbol{1}_{[t, t+\epsilon]}dv\right)^2\right]\notag\\
	&= \mathbb{E}_t\left[\left(\int_{t}^{t+\epsilon}e^{-\int_{0}^{v}r(u)du}\nu^\top\eta dv\right)^2\right] = O(\epsilon^2). 
	\end{align}
	Applying the same method as in \cite{HJZ2012}, the result follows. 
\end{proof}

\subsection{Proof of Theorem \ref{Thm:unique}}
\label{Appendix:unique}
\begin{proof}
Suppose that there is another admissible equilibrium control $u$ that is different from $u^*$ in \eqref{u*} and has a corresponding wealth process $X$. The equation \eqref{BSDE:p} admits a unique solution, with $X^*$ replaced by $X$. Let $(p(s; t), k_1(s; t), k_2(s, z; t))$ denote the solution. By Proposition \ref{Pro:Lam}, 
we have
 \begin{equation}\label{p*}
  e^{-\int_{0}^{t}r(v)dv}\nu(t) p(t;t) + e^{-\int_{0}^{t}r(v)dv}\sigma_S(t)^\top k_1(t; t) = 0, ~ a.s.,~a.e.,~ t \in [0, T_0].
 \end{equation}
 We define that
 $$\bar{p}(s; t) = p(s; t) - \left(X(s) + \Gamma_s^{(1)} - \lambda -\mathbb{E}_t[X(s) + \Gamma_s^{(1)}]\right), $$
 $\bar{k}_1(s; t) = k_1(s; t) - e^{-\int_{0}^{s}r(v)dv}\sigma_S(s)u(s) - \gamma^{(1)}(s)$, and $ \bar{k}_2(s, z; t) =  k_2(s, z; t) +  ze^{-\int_{0}^{s}r(v)dv}$. Then, we have $\bar{p}(s; t)\in S_\mathcal{F}^2(t,T_0;\mathbb{R}, \mathbb{P})$,  $\bar{k}_1(s; t)\in H_\mathcal{F}^2(t,T_0;\mathbb{R}^3, \mathbb{P})$, and $\bar{k}_2(s; t)\in F^2(t,T_0;\mathbb{R})$.
 
 Substituting the above into \eqref{p*} yields
 \begin{equation*}
 	\left[ \bar{p}(t; t) - \lambda\right]\nu(t) + \sigma_S(t)^\top\left( \bar{k}_1(t; t) + e^{-\int_{0}^{t}r(v)dv}\sigma_S(t)u(t) + \gamma^{(1)}(t)\right)= 0.
 \end{equation*}
 Thus,
 $u = -e^{\int_{0}^{t}r(v)dv}\left(\sigma_S(t)^\top \sigma_S(t)\right)^{-1}\left( \left[ \bar{p}(t; t) - \lambda\right]\nu(t) + \sigma_S(t)^\top\bar{k}_1(t; t) + \sigma_S(t)^\top\gamma^{(1)}(t)  \right) \triangleq  e^{\int_{0}^{t}r(v)dv}\left(\sigma_S(t)^\top \sigma_S(t)\right)^{-1}\left( \lambda \nu(t) - \sigma_S(t)^\top\gamma^{(1)}(t)  + D(t)\right)$ with $D(t) = -\bar{p}(t; t)\nu(t) - \sigma_S(t)^\top\bar{k}_1(t; t)$. We aim to prove the uniqueness by showing that $D \equiv 0$. We have
 \begin{align*}
 	d\bar{p}(s; t) & = dp(s; t) - d\left(X(s) + \Gamma_s^{(1)} - \lambda -\mathbb{E}_t[X(s) + \Gamma_s^{(1)}]\right)\\
 	&= -\left\{\nu(s)^\top\left(\sigma_S(t)^\top \sigma_S(t)\right)^{-1} D_s - \mathbb{E}_t\left[\nu(s)^\top\left(\sigma_S(t)^\top \sigma_S(t)\right)^{-1}D_s \right] \right\}ds \\
 	&+ \bar{k}_1(s; t)^\top d\boldsymbol{W}(s) +  \int_{\mathbb{R}_+} \bar{k}_2(s,z;t)\widetilde{N}(ds,dz), 
 \end{align*}
 and $\bar{p}(T_0; t) = 0$. By taking the conditional expectation on both sides of the above equation, it follows that $\mathbb{E}_t[\bar{p}(s; t)] = 0$ for $s > t$. Specifically, we have $\bar{p}(t; t) = 0$. Thus, $D_t = -\sigma_S(t)^\top \bar{k}_1(t; t).$ In addition, we have $k_1(s; t_1) = k_1(s; t_2)$ for a.e. $s \geq \max(t_1, t_2)$ similar to that in Proposition \ref{Pro:Lam}. We define that $\hat{p}(s; t) = \bar{p}(s; t) + \int_{s}^{T_0} \mathbb{E}_t\left[\nu(v)^\top \left(\sigma_S(t)^\top \sigma_S(t)\right)^{-1} D_v\right]dv$. Then,
 \begin{align*}
 	d\hat{p}(s; t) = \nu(s)^\top\left(\sigma_S(t)^\top \sigma_S(t)\right)^{-1}\sigma_S(s)^\top\bar{k}_1(s; s)ds + \bar{k}_1(s; s)^\top d\boldsymbol{W}(s) +  \int_{\mathbb{R}_+} \bar{k}_2(s,z;t)\widetilde{N}(ds,dz). 
 \end{align*}
It is easy to check $\mathbb{E}\left[\sup_{t\leq s\leq T_0}\left|\int_{s}^{T_0}\mathbb{E}_t\left[\nu(v)^\top\left(\sigma_S(t)^\top \sigma_S(t)\right)^{-1}\sigma_S(v)^\top\bar{k}_1(v; v)\right]dv\right|^{q}\right] < \infty$ for any $q\in (1, 2)$. As $\bar{p}(s; t)\in S_\mathcal{F}^2(t,T_0;\mathbb{R}, \mathbb{P})$, we have $\mathbb{E}\left[\sup_{t\leq s\leq T_0}\left|\hat{p}(s; t)\right|^{q}\right] < \infty$ for any $q \in (1, 2)$. Let 
  \begin{align*}
    & \mathcal{E}_t\left(\sigma_S\left(\sigma_S^\top \sigma_S\right)^{-1}\nu\right) \\
    & = \exp\left(\int_{0}^{t}-\nu(s)^\top\left(\sigma_S(t)^\top \sigma_S(t)\right)^{-1}\sigma_S(s)^\top d{\boldsymbol{W}}(s)-\frac{1}{2}|\sigma_S(s)\left(\sigma_S(t)^\top \sigma_S(t)\right)^{-1}\nu(s)|^2ds \right).
   \end{align*}
  We introduce the measure $\hat{\mathbb{P}}$ by $\frac{d\hat{\mathbb{P}}}{d\mathbb{P}}|_{\mathcal{F}_t} =  \mathcal{E}_t\left(\sigma_S\left(\sigma_S^\top \sigma_S\right)^{-1}\nu\right)$. Under measure $\hat{\mathbb{P}}$,  $\hat{\boldsymbol{W}}(t) \triangleq \boldsymbol{W}(t) + \int_0^t \sigma_S(s)\left(\sigma_S(s)^\top \sigma_S(s)\right)^{-1}\nu(s)ds$ is a standard Brownian motion. 
  Under $\hat{\mathbb{P}}$, we have
 \[d\hat{p}(s; t) =  \bar{k}_1(s; s)^\top d\hat{\boldsymbol{W}}(s) + \int_{\mathbb{R}_+} \bar{k}_2(s,z;t)\widetilde{N}(ds,dz).\]
 For any $m >1$, 
 \begin{align*}
  & \mathbb{E}\left[\mathcal{E}\left(\sigma_S\left(\sigma_S^\top \sigma_S\right)^{-1}\nu\right)^m\right] \leq \\
  & \left(\mathbb{E}\left[\exp\left((2m^2 - m)\int_{0}^{T_0}|\sigma_S(s)\left(\sigma_S(t)^\top \sigma_S(t)\right)^{-1}\nu(s)|^2ds \right) \right] \mathbb{E}\left[\mathcal{E}_{T_0}\left(2m\sigma_S\left(\sigma_S^\top \sigma_S\right)^{-1}\nu\right)\right]\right)^{\frac{1}{2}}\\
  & < \infty. 
 \end{align*}
 Therefore, if $m > \frac{q(q +1)}{(q-1)^2}$ in the above equation, there exists a value of $\bar{q}_0\in (1, q)$ for any $q \in (1, 2)$ such that 
 \[\hat{\mathbb{E}}\left[\sup_{t\leq s \leq T_0}|\hat{p}(s; t)|\right] \leq \left(\mathbb{E}\left[\sup_{t\leq s \leq T_0}|\hat{p}(s; t)|^{\bar{q}_0} \right] \right)^{\frac{1}{q_0}}\left(\mathbb{E}\left[\mathcal{E}\left(\sigma_S\left(\sigma_S^\top \sigma_S\right)^{-1}\nu\right)^{\frac{\bar{q}_0}{\bar{q}_0-1}}\right] \right)^{\frac{\bar{q}_0}{\bar{q}_0-1}} < \infty,\]
 so that $\hat{p}(s; t)$ is a $\hat{\mathbb{P}}$-martingale. Hence, $\hat{p} \equiv 0$ and $D \equiv 0$.
\end{proof}

\subsection{Proof of Proposition \ref{pro:Vas}}
\begin{proof}
Under the pricing measure $\mathbb{Q}$, we have
\begin{align*}
\mu_1(t) & = \mu_1(0) + \int_{0}^{t}K_1(t-s)(b_1 + \varphi_1(s)\sigma_1 - \theta_1 \mu_1(s))ds + \int_{0}^{t}K_1(t-s)\sigma_1 dW_1^\mathbb{Q}(s),\\
\mu_2(t) & = \mu_2(0) + \int_{0}^{t}\beta_1 K_1(t-s)(b_1 + \varphi_1(s)\sigma_1  - \theta_1 \mu_1(s))ds  + \int_0^tK_2(t-s)(b_2 - \beta_2 \mu_1(s)  \nonumber \\ 
& - \theta_2 \mu_2(s))ds + \int_{0}^{t}\beta_1 K_1(t-s)\sigma_1dW_1^\mathbb{Q}(s) + \int_0^t K_2(t-s)\sigma_2 dW_2^\mathbb{Q}(s).
\end{align*}
Hence, the {force of mortality} maintains the affine structure and satisfies \eqref{mu0}, with $b(s) = (b_1 + \varphi_1(s)\sigma_1, b_2)^\top$ and $W(s)$ replaced by $W^\mathbb{Q}(s)$. 
The interest rate follows the SDE:
\[ dr(t)=(b_r +\vartheta(s)\sigma_r -  \theta_r r(t))dt +\sigma_rdW^\mathbb{Q}_r(t). \]
By Appendix \ref{appendix:affine}, we have 
\[\mathcal{B}(t,T) = \mathbb{E}^{\mathbb{Q}}\left[\left.e^{-\int_{t}^{T}r(s)ds}\right|\mathcal{F}_t\right] = e^{d_0(t) + d_1(t)r(t)}, \]
where $d_1(s) = \frac{1}{\theta_r}\left(e^{-\theta_r(T-s)} -1\right)$. Hence, $\sigma_b(s) = d_1(s)\sigma_r$.
By Lemma \ref{lemma:expmu}, the price of the zero-coupon longevity bond is given by
\[ \mathcal{B}_L(t,T) =\mathbb{E}^{\mathbb{Q}}\left[\left.e^{-\int_{t}^{T}r(s)+ \mu_1(s)ds}\right|\mathcal{F}_t\right] =\mathcal{B}(t,T)e^{\int_0^t\mu_1(s)ds}\exp(Y_t(T)),\]
where $Y_t(T)$ is as defined in \eqref{Y} in Lemma \ref{lemma:expmu} with $f = (-1, 0)$, $b(s) = (b_1 + \varphi_1(s)\sigma_1, b_2)^\top$, $\sigma(\mu(s)) \equiv \sigma_\mu$, and $W$ replaced by $W^\mathbb{Q}$. Therefore, by It\^o's formula, $\sigma_l = \psi(T-s)\sigma_\mu$, where $\psi = (\psi_1, \psi_2)$ solves the Volterra--Riccati equation $\psi = (f- \psi\Theta )*K$ with $f =(-1, 0)$. Thus, the result for $\sigma_l$ follows.
By Lemma \ref{lemma:expmu}, for any constant $q > 1$, we have $$\mathbb{E}\left[e^{-q\int_0^{t}\mu_2(s)ds}\right] = \exp\left\{{\int_0^t\left(-q\mu_2(0)+\bar{\psi}(s)(b - \Theta \mu(0)) +\frac{1}{2}\bar{\psi}(s)\sigma_\mu\sigma_\mu^\top\bar{\psi}(s)^\top \right)ds}\right\}$$
where $\bar{\psi}$ solves the Volterra--Riccati equation $\bar{\psi} = (f -\bar{\psi}\Theta)*K$ with $f = (0, -q)$. By Lemma 4.4 in \cite{AJ}, $\bar{\psi}$ admits the closed-form solution $\bar{\psi} = f*E_\Theta $, where $E_\Theta$ is as defined in Lemma \ref{lemma:expmu}. Therefore, $\pi(t) \in L_\mathcal{F}^q(0, T_0; \mathbb{R}, \mathbb{P})$.
\end{proof}

\subsection{Proof of Proposition \ref{pro:RB}}
\begin{proof}
By calculation, we have
$K\Theta = \left(\begin{array}{cc}
    \theta_1K_1 &  0 \\
    \beta_1\theta_1K_1 + \beta_2K_2 & \theta_2K_2
\end{array} \right)$. As $K_1, K_2 \in L_{loc}^2(\mathbb{R}_+, \mathbb{R})$, we have $R_{1\theta_1}, R_{2\theta_2} \in L_{loc}^2(\mathbb{R}_+, \mathbb{R})$ and thus $(R_{2\theta_2}*R_{1\theta_1})*\theta_1K_1 = R_{2\theta_2}*(R_{1\theta_1}*\theta_1K_1)$ \citep{GLS1990}. Then,
\begin{align*}
   & \left[\beta_1R_{1\theta_1} + \frac{\beta_2}{\theta_2} R_{2\theta_2}-(\beta_1 + \frac{\beta_2}{\theta_2})R_{2\theta_2}*R_{1\theta_1}\right]* \theta_1K_1 + R_{2\theta_2}*(\beta_1\theta_1K_1 + \beta_2K_2)\\
   & = \beta_1(\theta_1K_1 - R_{1\theta_1}) + \frac{\beta_2}{\theta_2} R_{2\theta_2}*\theta_1K_1 -(\beta_1 + \frac{\beta_2}{\theta_2})R_{2\theta_2}*(\theta_1K_1 - R_{1\theta_1}) + R_{2\theta_2}*\beta_1\theta_1K_1\\
   & + \frac{\beta_2}{\theta_2}(\theta_2K_2 - R_{2\theta_2})\\
   & = (\beta_1 + \frac{\beta_2}{\theta_2})(R_{2\theta_2}*R_{1\theta_1}) = \beta_1\theta_1K_1 + \beta_2K_2 - \left[\beta_1R_{1\theta_1} + \frac{\beta_2}{\theta_2} R_{2\theta_2}-(\beta_1 + \frac{\beta_2}{\theta_2})R_{2\theta_2}*R_{1\theta_1}\right].
\end{align*}
Thus, we have 
\begin{align}\label{RB}
R_\Theta = \left(\begin{array}{cc} 
    R_{1\theta_1} & 0 \\
    \beta_1R_{1\theta_1} + \frac{\beta_2}{\theta_2} R_{2\theta_2}-(\beta_1 + \frac{\beta_2}{\theta_2})R_{2\theta_2}*R_{1\theta_1} & R_{2\theta_2}
\end{array} \right),
\end{align}
as it satisfies $R_\Theta*K\Theta = K\Theta - R_\Theta$ and  $K\Theta*R_\Theta = K\Theta - R_\Theta$.  

 As shown in Lemma \ref{lemma:expmu}, $E_\Theta$ is defined as $E_\Theta = K- R_\Theta*K$, where $R_\Theta$ is the resolvent of $K\Theta$. Then,
$$E_\Theta \Theta = K \Theta  - R_\Theta*K \Theta  = K \Theta  - (K \Theta - R_\Theta) = R_\Theta.$$
Therefore, $E_\Theta = R_\Theta  \Theta ^{-1}$. The result in Proposition \ref{pro:RB} follows.
\end{proof}

\subsection{Proof of Proposition \ref{Pro:CIR}}
\begin{proof}
Similar to the proof of Proposition \ref{pro:Vas},  $\mu$ maintains the affine structure under the pricing measure. Specifically, under $\mathbb{Q}$, $\mu$ satisfies \eqref{mu0} with $b(s) = (b_1, b_2)^\top$, where $\Theta$ is replaced by
\[\left(\begin{array}{cc}
    \theta_1- \widetilde{\varphi}_1\widetilde{\sigma}_1^2 & 0 \\
    \beta_2 &  \theta_2
\end{array}\right) \triangleq \Theta_q,\]
and $W$ is replaced by $W^\mathbb{Q}$. By Lemma \ref{expmu} and It\^o's formula, we have $\sigma_l(s) = \psi'(T-s)\sigma(\mu(s))$, where $\psi' = (\psi'_1, \psi'_2)$ solves the Volterra--Riccati equation $\psi' = (f - \psi'\Theta_q+ \frac{1}{2} (\widetilde{\sigma}_1^2 (\psi'_1)^2, \widetilde{\sigma}_2^2 (\psi'_2)^2))*K$ with $f = (-1, 0)$. Thus, $\psi'_2 = 0$ and $\psi'_1$ solves $\psi'_1 = (-1 - (\theta_1 - \widetilde{\varphi}_1 \widetilde{\sigma}_1^2)\psi'_1+ \frac{1}{2}\widetilde{\sigma}_1^2 (\psi'_1)^2)*K_1$.
Under the pricing measure $\mathbb{Q}$, the interest rate follows
\[	dr(t)=(b_r -  (\theta_r - \vartheta\sigma_r^2 )r(t))dt +\sigma_r \sqrt{r(s)} dW^\mathbb{Q}_r(t).\]
The result follows from Appendix \ref{appendix:affine} and It\^o's formula.
\end{proof}

\end{appendices}
	
\end{document}